\newcommand{\nat}{\mathbb{N}_0}
\newcommand{\rrm}{\mathrm{rm}}
\newcommand{\sg}{\mathrm{sg}}
\newcommand{\Con}{\mathrm{Con}}
\newcommand{\modx}{\mathrm{mod}}
\newcommand{\dotminus}{\div}
\newcommand{\bigeps}{{\cal E}}
\newcommand{\ffrm}{\mathrm{rm}}
\newcommand{\ffsg}{\mathrm{sg}}
\newcommand{\ccxs}{\mathrm{XS}}
\newcommand{\ccclh}{\mathrm{S}}
\newcommand{\ccr}{\mathrm{BA}}
\newcommand{\cccr}{\mathrm{BA}^{\#}}
\newcommand{\cccrf}{\mathrm{BA}^{\#}_f}
\newcommand{\ccs}{\mathrm{S}}
\newcommand{\cck}{\mathrm{K}}
\newcommand{\ccfp}{\mathrm{FP}}
\newcommand{\ccfl}{\mathrm{FL}}
\newcommand{\lllog}{\mathrm{log}}
\newcommand{\Sum}{\displaystyle\sum\limits}
\newcommand{\Prod}{\displaystyle\prod\limits}
\newcommand{\ffrep}{\mathrm{rep}}
\newcommand{\ffincrx}{\mathrm{incrx}}
\newcommand{\ffincr}{\mathrm{incr}}
\newcommand{\ffdecr}{\mathrm{decr}}
\newcommand{\ffswap}{\mathrm{swap}}
\newcommand{\ffsa}{\mathrm{a}}
\newcommand{\ffsp}{\mathrm{p}}
\newcommand{\ffor}{\mathrm{or}}
\newcommand{\ffxor}{\mathrm{xor}}
\newcommand{\ffnot}{\mathrm{not}}
\newcommand{\ffcmp}{\mathrm{cmp}}
\newcommand{\ffcmpeq}{\mathrm{cmpeq}}
\newcommand{\ffsum}{\mathrm{sum}}
\newcommand{\ffssqrt}{\mathrm{ssqrt}}
\newcommand{\ffreverse}{\mathrm{reverse}}
\newcommand{\ffintcode}{\mathrm{code}}
\newcommand{\ffintlcode}{\mathrm{lcode}}
\newcommand{\qqm}{\mathrm{M}}
\newcommand{\llbit}{\mathrm{BIT}}
\newcommand{\llx}{\mathrm{X}}
\newcommand{\ccfom}{\mathrm{FOM}}
\newcommand{\ccfomn}{\mathrm{FOM}^N}
\newcommand{\ccffom}{\mathrm{FFOM}}
\newcommand{\ccffomvar}{\mathrm{FFOM^{var}}}
\newcommand{\ccffomalt}{\mathrm{FFOM^{alt}}}
\newcommand{\fflen}{\mathrm{len}}
\newcommand{\ffext}{\mathrm{ext}}
\newcommand{\ffcode}{\mathrm{CODE}}
\newcommand{\ffcodey}{\mathrm{CODE}^{\mathrm{alt}}}
\newcommand{\ffcodexd}{\mathrm{CODE}^{\mathrm{var}}}
\newcommand{\cckhs}{\mathrm{XS}}
\newcommand{\ccffomh}{\mathrm{FFOM}}
\newcommand{\cckhsplus}{\mathrm{XS}_+}
\newcommand{\ccexppoly}{\mathrm{P}}
\newcommand{\tth}{h}
\newcommand{\lln}{|X|}
\newcommand{\mains}{T}
\newcommand{\mainffoms}{T'}
\newcommand{\ffp}{c}
\newcommand{\ffpr}{c}
\newcommand{\ffrol}{\mathrm{rol}}
\newcommand{\ffall}{\mathrm{all}}
\newcommand{\ffmove}{\mathrm{move}}
\newcommand{\ffdelone}{\mathrm{del}}
\newcommand{\ffswapone}{\mathrm{swap_1}}
\newcommand{\ffswaptwo}{\mathrm{swap_2}}
\newcommand{\ffplace}{\mathrm{place}}
\newcommand{\ffs}{\mathrm{s}}
\newcommand{\perf}{\mathrm{p}}
\newcommand{\ffpx}{\mathrm{px}}
\newcommand{\gr}{\mathrm{Gr}}
\newcommand{\ccq}{\mathrm{Q}}
\newtheorem{theorem}{Theorem}
\newtheorem*{theorem1}{Theorem}
\newtheorem{stat}{Statement}[subsection]
\newtheorem{statse}{Statement}[section]
\newtheorem*{conseq}{Consequence}
\newtheorem*{comment}{Note}
\renewcommand{\thestat}{\arabic{chapter}.\arabic{section}.\arabic{subsection}.\arabic{stat}}
\renewcommand{\theequation}{\arabic{equation}}
\newcounter{sectionref}
\newcounter{subsectionref}
\newcounter{chapterref}
\newcommand{\Def}[1]{\vspace{2mm}\par{\bf Definition. }{#1\vspace{2mm}}}
\renewcommand{\leq}{\leqslant}
\renewcommand{\geq}{\geqslant}
\begin{document}
\begin{titlepage}
\begin{center}
\large LOMONOSOV MOSCOW STATE UNIVERSITY \\

FACULTY OF COMPUTATIONAL MATHEMATICS AND CYBERNETICS
\end{center}
\vspace{15 pt}
\begin{flushright}
Manuscript copyright
\end{flushright}
\vspace{15 pt}
\begin{center}
\large Volkov \\ Sergey
\end{center}
\vspace{15 pt}
\begin{center}
\large FINITE BASES WITH RESPECT TO THE SUPEROSITION IN CLASSES of ELEMENTARY
RECURSIVE FUNCTIONS
\end{center}
\vspace{15 pt}
\begin{center}
Specialty 01.01.09 --- Discrete Mathematics and Mathematical
Cybernetics
\end{center}
\vspace{15 pt}
\begin{center}
\large Dissertation \\ seeking the degree \\ of the Candidate of
Physical and Mathematical Sciences
\end{center}
\vspace{50 pt}
\begin{flushright}
\large Supervisor: \\ Doctor of Physical and Mathematical Sciences, \\
Professor S. S. Marchenkov
\end{flushright}
\vspace{100 pt}
\begin{center}
Moscow --- 2009
\end{center}
\end{titlepage}


\large

\tableofcontents

\chapter*{Introduction}
\addcontentsline{toc}{chapter}{Introduction}

\section{An Overview of the Results in the Thesis}

\subsection{Classification of Recursive Functions}

The concept of algorithmic computability (the recursive one) of a computable function
is prominent in modern Mathematics. The formalization of the concept of the computable function was accomplished in mid-1930 by the following famous mathematicians: A.~ Turing ~\cite{turing}, E.~ Post~\cite{post}, A.~ Church~\cite{church}, S.~Kleene~\cite{kleene}, and others. For every such formalization there exists the thesis (Church-Turing thesis) that claims that the whole class of algorithmically computable functions coincides with the class of functions that are computable in this type of formalization.

There is a number of approaches to formalizing the concept of a computable function. However, the simpler formalizations are most favorable. The most renowned approach is the one that uses various mechanical devices such as the Turing machine~\cite{turing,post}.

Another approach is based on generating functions from the set of
base functions and a number of generating operations. S. Kleene in
his work ~\cite{kleene} introduced the concept of a partial
recursive function. The partial recursive function is a partially
defined function $f:\nat^n\rightarrow\nat$, which can be derived
from initial functions $0$, $x+1$  with the help of a finite number
of operations using the superposition (superposition includes
substitution of functions into functions, permutations and
identification of variables, introduction of dummy variables),
primitive recursion, and minimization.

One should note that the class of all computable functions is not an
adequate formalization of the computability concept in practice. For
example, the function $f(n,x)$ that us defined in the following way
\begin{equation}\label{eqn_intro_f}
\left\{
\begin{array}{l}
f(0,x)=x+2, \\
f(n+1,0)=1, \\
f(n+1,x+1)=f(n,f(n+1,x)),
\end{array}
\right.
\end{equation}
is computable but it grows so fast that even $f(10,10)$ is
impossible to compute in practice. Besides, there is no method for
an arbitrary computable function and a set of incoming variables to
predict how much time it is going to take to compute the values of
the given function on a given set (and give a top down estimate of
resources needed for calculations.) Furthermore, not all listed
computable functions are defined everywhere (computable function is
not defined on such tuples where the computing algorithm does not
give an answer within a defined period of time).

In the light if these circumstances one considers narrower classes
that consist only of everywhere defined recursive functions as they
are closer to practical computability. The two above mentioned
methods work towards this direction: the machine one (consideration
of functions that are computable using different computational
devices with restrictions on resources that can be in use) and the
functional one (generation of classes based on some initial
functions and generating operations) as well as other approaches,
with which one can often encounter situations in which different
approaches give the same classes in the end.

The research of these kinds of classes and a search for equivalent
definitions for them can help to understand the nature of effective
computability and on top of that, possibly, this kind of definitions
will give an opportunity for specific functions, the computation of
which has an practical application, to figure out how they can be
computed effectively.

An example of this kind of class is the class of primitive recursive
functions (see ~\cite{maltsev}.) They say that the function
$f(\tilde{x},y)$  is obtained from functions $g(\tilde{x})$ and
$h(\tilde{x},y,z)$ with the help of the operation called
\emph{primitive recursion} if the following conditions hold:
\begin{equation}\label{eq_primrec}
\left\{
\begin{array}{l}
f(\tilde{x},0)=g(\tilde{x}), \\
f(\tilde{x},y+1)=h(\tilde{x},y,f(\tilde{x},y)).
\end{array}
\right.
\end{equation}
The function is called \emph{primitive recursive} if it can be
obtained from the functions $0$, $x+1$ with the help of the
superposition operation and primitive recursion. The class of
primitively recursive functions contains almost all everywhere
defined functions (at $\nat$) that are used in mathematical
practice. This class can be described in terms of complexity of
machine calculations: everywhere defined function
$g(x_1,\ldots,x_k)$ is primitively recursive if and only if it is
computable on Turing machine with a time constrain of the kind
$f(n,x_1+\ldots+x_k)$ for some $n$ where $f$ is defined by the
following relations (\ref{eqn_intro_f}) (see
~\cite{ritchie,muchnik}.)

Another example is the class of functions $\cck$ called Kalmar
elementary ~\cite{kalmar}. The class $\cck$ is a set of all
functions that can be obtained from functions
\begin{equation}\label{eq_basic_functions}
x+1,\quad x\dotminus y
\end{equation}
with the help of superposition, summation of the following kind:
$\sum_{x\leq y}$ and restricted multiplication $\prod_{x\leq y}$
(here and in the following it is $x\dotminus y=\max(0,x-y)$.) All
functions from the class $\cck$ are primitive recursive; however,
the opposite is not true (see ~\cite{kalmar}.) In terms of the
machine calculations, the class $\cck$ is the class of all functions
 $g(x_1,\ldots,x_n)$, Turing machine computable with a time input constrain of the type
$h_k(x_1+\ldots+x_n)$ with some $k$, where
$$
h_0(x)=x,\quad h_{k+1}(x)=2^{h_k(x)},
$$
an analogous statement is true for restrictions on the space (see
~\cite{ritchie}.) There are other equivalent definitions for the
class $\cck$. For example $f(x_1,\ldots,x_n)\in\cck$ if and only if
there exists such $k\in\nat$ and polynomials
$P(\tilde{x},y,\tilde{z})$, $Q(\tilde{x},y,\tilde{z})$ with
coefficients from $\nat$ that $f(\tilde{x})\leq m(\tilde{x})$ and
$$
(y=f(\tilde{x}))\equiv(\exists z_1)_{z_1\leq
m(\tilde{x})}\ldots(\exists z_l)_{z_l\leq
m(\tilde{x})}(P(\tilde{x},y,\tilde{z})=Q(\tilde{x},y,\tilde{z})),
$$
where $m(x_1,\ldots,x_n)=h_k(x_1+\ldots+x_n)$ (see ~\cite{vinkoss}.)

In his paper ~\cite{gzh}, A.~Grzegorczyk defined hierarchy of
classes $\bigeps^n$, $n\in\nat$. They say that $f(\tilde{x},y)$ is
obtained from functions $g(\tilde{x})$, $h(\tilde{x},y,z)$, and
$j(\tilde{x},y)$ with the help of \emph{restricted recursion} if it
satisfies the relations (\ref{eq_primrec}) and
$$
f(\tilde{x},y)\leq j(\tilde{x},y).
$$
$\bigeps^n$ is a minimal class of functions that contains functions
$x+1$, $f_n(x,y)$ and is closed with respect to superposition, and
restricted recursion where
$$
f_0(x,y)=y+1,
$$
$$
f_1(x,y)=x+y,
$$
$$
f_2(x,y)=(x+1)\cdot (y+1),
$$
with $n\geq 2$
$$
f_{n+1}(0,y)=f_n(y+1,y+1),
$$
$$
f_{n+1}(x+1,y)=f_{n+1}(x,f_{n+1}(x,y)).
$$
Grzegorczyk's hierarchy is strictly monotonous and exhausts the
whole class of primitive recursive functions. Besides,
$\bigeps^3=\cck$ (this is proved in ~\cite{gzh}.) For classes
$\bigeps^n$, $n\geq 2$, there exists a description as defined in
terms of complexity of calculations on the Turing machines. For
example $\bigeps^2$ is a set of all functions that can be computed
on the Turing machine with a linear space (of its input length; the
numbers are represented in the binary form), see ~\cite{ritchie}.

One more class that deserves attention is the class introduced by T.
Skolem in ~\cite{skolemone,skolemtwo}, the class of elementary
functions ("lower elementary functions"). This class will be defined
as $\ccs$ and will be called the class of functions that are Skolem
elementary. $\ccs$ is a minimal class that contains functions
(\ref{eq_basic_functions}) and is closed with respect to
superposition and restricted summation of the kind $\sum_{x\leq y}$.
For the class that is lacking any sort of description based on the
complexity of machine calculations. It is known that
$\ccs\subseteq\bigeps^2$, but the question of coincidence of these
classes is an open one at this time (see ~\cite{erf}.) Besides, it
is known that $\ccs$ contains $\mathrm{NP}$-hard functions (see the
proof for another class in ~\cite{wrathall}, connection with the
class $\ccs$ in ~\cite{erf}).

These classes differ in terms of the speed of growth of the
contained functions. For example all functions of the classes
$\bigeps^2$ and $\ccs$ constrained by polynomials and $\bigeps^3$
contain functions that grow exponentially. To distinguish the
computational complexity of functions in its pure form for every
class $Q$ one considers set $Q_*$  of all predicates with
characteristic functions from $Q$ (the characteristic function of a
predicate is the function that equals to $1$ for all sets, at which
the value of the predicate is true and equals to $0$ at all other
sets). It is known that ~\cite{gzh} the hierarchy $\bigeps^n_*$,
$n\geq 2$ is strictly monotonous. Besides, it is known that
~\cite{marchskolem},
$$
\ccs_*\subseteq\bigeps^0_*.
$$
The question is, which of the classes
 $\ccs_*$, $\bigeps^0_*$,
$\bigeps^1_*$, $\bigeps^2_*$ coincide and which ones do not, and it is an open one.

One will also note that in terms of approximating the class of
practically computable functions, it is convenient to consider the
class $\ccfp$ of functions Turing machine computable in polynomial
time (of the input length). The class $\ccfp$ also has equivalent
functional definitions (see ~\cite{cobham,belcook}). They are more
involved then the ones described above; other definitions, thus,
will not be described in this paper.

\subsection{Finite Superposition Bases}
For the first time the question of generating quite broad and
substantially interesting classes of recursive functions with the
help of only the superposition operation was considered by
Grzegorczyk in 1953 in his paper ~\cite{gzh}. The superposition
basis in a class will be defined as a complete system with respect
to superposition in this class. Traditionally, in theory of
recursive functions one does not need to satisfy the requirement of
minimality for such systems. In ~\cite{gzh}, it was demonstrated
that the class of primitive recursive functions does not have a
finite bases with respect to superposition. In this paper, one
considers the existence of such bases in classes $\bigeps^n$.

The interest to the problem of having finite bases with respect to
superposition in classes of recursive functions is due to a few
factors. Firstly, the operation of the superposition is a relatively
weak one, thus, one can expect the functions from similar bases
systems to a significant degree reflect the specifics of the class,
its arithmetic and algorithmic nature and perhaps complexity in one
of its aspects. A well defined bases can be the necessary grounds
for canonical representations that give a chance to compare and
evaluate different parameters of the class of functions. The
definition of the class of functions rooted in its bases is one of
the non-redundant (and in a sense effective) definitions of the
class.

The problem, as was stated by Grzegorczyk, is solved in two steps.
The existence of finite bases with respect to superposition in
classes $\bigeps^n$ ($n\geq 3$) was proved by D. R\"{o}dding in 1964
in his paper ~\cite{redding}. The Redding's proof was so cumbersome
that the bases were not written out in an explicit way. In his paper
~\cite{parsons}, in 1968 Ch. Parsons obtained easier bases with
respect to the superposition in classes $\bigeps^n$ ($n\geq 3$).
R\"{o}dding and Parsons used a method to build their bases, which,
in this paper, is referred to as the method of generating functions.
The gist of this method is that for the function one builds
generating functions that is the function, every value of which
contains information about the initial values in the initial
function. For example, the function $f(x)$ has its generating
function of the type
$$
g(x)=\prod_{i\leq x}p_i^{f(i)},
$$
where $p_i$ is the $i$-th prime number. If functions can be obtained
from other functions with the help of restricted recursion,
summation, etc., then their corresponding generating functions can
be obtained with the help of just the superposition (though with a
set of some helping functions). To use the method of generating
functions one must have functions that grow at least exponentially.
All functions from classes $\bigeps^0$, $\bigeps^1$, $\bigeps^2$ are
constrained by polynomials, thus for them the method of generating
functions does not work.

For the class $\bigeps^2$, the difficulties with building up the
bases were overcome in 1969 by S.S. Marchenkov in his work
~\cite{ss_ustr} (see also ~\cite{ogr_rec}). In this paper, one used
the method based on modeling a kind of Turing machine. An important
role, when building a bases with the help of the "machine method"\,
is played by the numerating functions (the functions that enumerate
tuples). All functions of the classes $\bigeps^0$ and $\bigeps^1$
are constrained from the top by linear functions and, thus, do not
contain numerating functions. The problem of superposition bases
existence in $\bigeps^0$ and $\bigeps^1$ still remains open. Also,
the question of a bases existence in $\ccs$ is still open too (in
it, there are numerating functions but the 'machine'\ method does
not work for it for other reasons).

In 1970 in the paper ~\cite{muchnik} A.A. Muchnik inspired by the
idea of S.S. Marchenkov ~\cite{ss_ustr} proposed a quite simple
method to build bases with the help of the superposition in some
classes of recursive functions. This method is based on using
special functions (called quasi-universal) grounded in numerating
Turing machines. In this paper the existence of bases was proved for
a big family of classes that are defined by using the complexity of
Turing computations as well as based on the results of
~\cite{ritchie} one obtains an alternative proof of existence of
finite bases in $\bigeps^n$, $n\geq 2$.

Of special interest is the problem of building bases that are as
simple as possible. In this direction, there were obtained a few
interesting results. The first salient advancement in this field was
the result ~\cite{ss1} accomplished by S.S. Marchenkov in the year
of 1980: the superposition basis in the class $\cck$ is the system
$$ \{x+1,\quad \left[\frac{x}{y}\right],\quad x^y,\quad
\varphi(x,y)\},
$$
where $\varphi(x,y)$ for $x>1$ equals to the least index of the zero
digit in the representation of the number $y$ in a positional number
system with base $x$, when $x\leq 1$ it equals zero. In the same
paper, it was shown that the superposition of functions
$$
x+1,\quad x\dotminus y,\quad \left[\frac{x}{y}\right],\quad x^y
$$
is the one for which one can obtain all the functions from $\cck$
that take a finite number values. In 1989, the work of ~\cite{ss2}
proved that the basis in the class $\cck$ is
$$
\{x\dotminus 1,\quad \left[\frac{x}{y}\right],\quad 2^{x+y},\quad
\sigma(x)\},
$$
where $\sigma(x)$ is the number of ones in the binary representation
of $x$. Note that in all the above mentioned bases in the class
$\cck$ in addition to standard arithmetic functions it contains a
one "bad"\ function that, although it has a very simple form, is not
in its pure form an arithmetic one. S. Mazzanti in 2002 in the work
of ~\cite{mazzanti} managed to get rid of the "bad"\ function. In
this work, he proved that
$$
\{x+y,\ x\dotminus y,\ \left[\frac{x}{y}\right],\ 2^x\}
$$
is the basis for the superposition in the class $\cck$.

As an example of application of the above results,
one gives the formula for the binomial coefficient:
$$
\binom{x}{y} =
\left[\frac{(2^{x+1}+1)^x}{2^{(x+1)y}}\right]\dotminus\left[\frac{(2^{x+1}+1)^x}{2^{(x+1)(y+1)}}\right]\cdot
2^{x+1}.
$$

\section{General Description of the Paper}

In the following one presents the goals of this dissertation:
\begin{itemize}
\item
description of classes of functions that can be obtained by the
superposition of basic arithmetic functions with different
restrictions imposed on their growth and the build-up of formulas;
\item
bulding up finite superposition bases of a simple form in classes
analogous to class $\bigeps^2$ of Grzegorczyk without using the
Turing machine numeration;
\item researching groups with recursive permutations that are
connected with the known classes of recursive functions concerning
the subject of finite generability.
\end{itemize}

The results ~\cite{ss1,ss2,mazzanti} about bases in class $\cck$,
regardless of its beauty, simplicity, and that fact that they are
simply amazing have a significant downfall: they cannot be applied
in real life due to the fact that the class $\cck$ contains
functions with very high computational complexity (for example,
$x^{y^z}$) and is, therefore, a very bad approximation for the class
of functions that "can be computed in practice". The technique
offered in these papers to a significant extend uses functions with
super exponential growth and, thus, does not allow to obtain
analogous results for classes significantly smaller than $\cck$.

The technique from papers ~\cite{ogr_rec,ss_ustr,muchnik} allows to
build bases in narrower classes of complexity than $\cck$ (such as
for example $\bigeps^2$ or $\ccfp$) but these bases turn out to be
quite 'bulky' (one of these bases functions is defined based on the
numeration of some types of Turing machines). No other ways for
obtaining simpler bases in congruent classes were known. Moreover,
there was a hypothesis claiming that they cannot be significantly
simpler than those built based on the Turing machine numeration.

As was mentioned, a goal of this dissertation is the one of
obtaining an "easy"\ bases that are analogous to those known for
$\cck$, for narrower classes that estimate the notion of practically
effective computability.

In chapter \ref{chapter_arithm} for some classes that can be
considered "generalized"\ complexity classes one builds bases that
consist only of the simplest arithmetic functions and functions that
are standard in the majority of programming languages such that in
some classes (not closed with respect to superposition) the bases
can be obtained by applying restrictions on formulas. In this
dissertation, for the first time one investigated the question of
describing classes of recursive functions that can be obtained by
the superposition of functions with restrictions on the skeleton of
the superposition.

Let one consider the following class $\ccs$. All functions of this
class can be computed over the course of an exponential time with
linear memory (of the length of the input), thus $\ccs$ gives a much
better approximation of practically computable functions than
$\cck$. The question of existence of finite bases in $\ccs$ remains
unanswered but nonetheless one managed to figure out a description
of this class in terms of the superposition. In section
\ref{section_exp_skolem} of the chapter \ref{chapter_arithm} one
introduces the class $\ccxs$. All functions of this class are
restricted by functions of the type $2^{p(\tilde{x})}$, where $p$ is
a polynomial. $\ccs$ coincides with the set of all functions from
$\ccxs$, bounded by polynomials, thus one calls $\ccxs$ the
exponential expansion $\ccs$. The class $\ccxs$ is not closed with
respect to superposition (thus, there cannot be a bases in it).
Regardless of that this class is a fairly natural one and has a few
equivalent definitions. The main result of the section
\ref{section_exp_skolem} of chapter \ref{chapter_arithm} is the fact
that $\ccxs$ there is a set of all functions that can be expressed
in terms of the superposition of functions
$$
x+1,\quad xy,\quad x\dotminus y,\quad x\wedge y,\quad
\left[x/y\right],\quad 2^x,
$$
where $x\wedge y$ is a bitwise conjunction of binary representations
of $x$ and $y$ with the following restriction on formulas: the
formula needs to have a height of no more than $2$. The height of
the formula is calculated with respect to exponent, i.e for example
the height of the formula $x2^{x+yz}+2^t$ equals to $2$, and the
formula $2^{2^x}$ it equals to $3$.

In section \ref{section_ffom} of the chapter \ref{chapter_arithm},
one considers the class $\ccffom$ that is a functional analog of the
class $\ccfom$ (in English that is "First Order with respect to
Majority," see ~\cite{uniformity}). The class $\ccfom$ is defined
based on the representation of dictionary-based predicates with the
help of first order logical formulas with generalized quantifiers
for majorizing. For example in ~\cite{uniformity} there are a few
equivalent definitions of class $\ccfom$, amongst which there are
those defined in terms of complexity theory. All functions from
$\ccffom$ are bounded by the functions of the type $2^{[\log_2
(x_1+\ldots+x_n)]^n}$, i.e. for any function
$f(\tilde{x})\in\ccffom$ the length of input $f(\tilde{x})$ is
bounded by a polynomial of input length $\tilde{x}$. Generally
speaking, the classes $\ccfom$ and $\ccffom$ can be called "very
small". All functions from $\ccffom$ are computed within a
polynomial time (and, moreover, with a logarithmic space, see
~\cite{uniformity}). Regardless, $\ccffom$ contains the majority of
effectively computable functions that can be encountered while
practicing mathematics that are suitable with respect to their
growth rate. Besides, $\ccffom$ has the property of computational
completeness, specifically all recursively enumerable sets can be
enumerated by functions from $\ccffom$ (see for example
~\cite{uniformity}), $\ccffom$ can be considered as a "generalized"
\ complexity class. The main result of section \ref{section_ffom} of
the chapter \ref{chapter_arithm} is the fact that the system of
functions
$$
\{x+y,\quad x\dotminus y,\quad x\wedge y,\quad
\left[x/y\right],\quad 2^{[\log_2 x]^2}\}
$$
is a basis with respect to the superposition in $\ccffom$. One can
note that $\ccffom$-reducibility is quite strong (see
~\cite{arythm,uniformity}). The analysis of proofs from
~\cite{garyjohnson,pcomplete} shows that the majority of known
$\mathrm{NP}$-complete, $\mathrm{PSPACE}$-complete, as well as
$\mathrm{P}$-complete (with respect for example to reducibility with
the logarithmic space) are of this kind also with respect to
$\ccffom$-reducibility \footnote{It is known that
$\ccfom\neq\mathrm{PSPACE}$, but the question of coincidence of the
classes $\ccfom$ and $\mathrm{NP}$ is at the moment an open one.}.
This means that what was obtained in section \ref{section_ffom} of
the chapter \ref{chapter_arithm} is the result that can be used to
build bases of a simple kind in many known classes. For example to
build a basis in $\ccfp$ it is sufficient to add to the basis in
$\ccffom$ any $\ccfp$-complete function with respect to
$\ccffom$-reducibility that can be constructed for example based on
$\mathrm{P}$-complete problems from ~\cite{pcomplete}.

In section \ref{section_hierarchy} of chapter \ref{chapter_arithm},
one considers classes of functions that can be represented by
formulas analogous to those that one considers in section
\ref{section_exp_skolem} of this same chapter with an arbitrary
height. Thus, this is a hierarchy of classes that is exhaustive of
the class $\cck$ (each height that is bigger or equal to  $2$
corresponds its own class). In section \ref{section_hierarchy} one
considers equivalent definitions of classes in this hierarchy that
are based on substituting into functions of classes $\ccs$ and
$\ccfom$ monotonous functions with corresponding speeds of growth.

One can note that for all bases described in chapter
\ref{chapter_arithm}, there is a function $x\wedge y$, which is a
bitwise conjunction that is not in and of itself an "arithmetic
one."\ (although it is in the set of standard arithmetic functions
of the majority of programming languages and CPUs). Unfortunately,
one cannot get rid of this function analogously to how it was done
in ~\cite{mazzanti} for the class $\cck$.
\ \\

The main task of the chapter \ref{chapter_gzhbasis} is building a
basis of a simple type in the class $\bigeps^2$ of Grzegorczyk
hierarchy ~\cite{gzh}. One can notice that (see
~\cite{arythm,uniformity}) functions of a simple type bounded by a
polynomial analogous to those that were used to build bases in
chapter \ref{chapter_arithm} (for example, $[\sqrt{x}]$, $[\log_x
y]$, $\min(x,y^z)$, various easy operations in binary notation or
other forms of representation and etc.) are in $\ccffom$ and,
therefore, are computable with a logarithmic space, that is they are
in $\bigcup_{C_1,C_2}\mathrm{FSPACE}(C_1\log n+C_2)$, where
$\mathrm{FSPACE}(f(n))$ is the set of all functions computable on
multitape Turing machines that do not record onto input tape and do
not read from the output one, with a restriction on space $f(n)$,
$n$ is the length of entrance (see ~\cite{uniformity,hartmanis}). On
the other hand, in agreement with ~\cite{ritchie}, $\bigeps^2$ is
the set of all functions that are Turing machine computable with a
linear space. Thus, from the theorem on hierarchy ~\cite{hartmanis}
it follows that if $\Phi$ is a basis in $\bigeps^2$ and $f(n)=o(n)$,
then in $\Phi$ there is a function that is not in
$\mathrm{FSPACE}(f(n))$. This means that the basis in $\bigeps^2$
must contain a function that is significantly more complex than the
ones considered above, those being "simple arithmetic"\ ones. In
chapter \ref{chapter_gzhbasis}, there is an example of a basis that
consists of simple arithmetic functions and a special function
$Q(x,p_1,p_2,c_1,c_2,t)$. Function $Q$ is defined with the help of
primitive recursion, its definition is a very simple one and it does
not contain in an explicit way any type of Turing machine
numeration. The function $Q$ in some sense is quasi-universal in
$\bigeps^2$ (the definition of quasi-universality is slightly
different from the one introduced by A. A. Muchnik in
~\cite{muchnik}). One can note that the function $Q$ is interesting
also as a very simple example of $\mathrm{PSPACE}$-equivalent
function (see ~\cite{garyjohnson}.)
\ \\

In chapter \ref{chapter_per}, one investigates special classes of
functions, the classes of permutations. More specifically one
considers groups of permutations $\gr(Q)=\{f:\ f,f^{-1}\in Q\}$ for
classes $Q$, closed with respect to the superposition and those that
contain an identity function. The main result of the chapter
\ref{chapter_per} is the proof of finite generability $\gr(Q)$ for a
big family of classes $Q$ (that satisfy specific requirements). The
requirements have a purely "functional"\ character, one makes no
assumptions about the computability of functions from $Q$. The proof
of this statement is constructive, the permutations of the
generating set are being built from functions of basis in $Q$,
numerating functions as well as some basic arithmetic functions. A
special interest is classes $Q$ being an estimation of the class of
functions that is computable in practice. In this case, $\gr(Q)$ is
the set of all effective non-redundant codes $\nat\rightarrow\nat$
that allow for effective decoding. Such codes are used both for
compressing information and encoding it. Searching for finite
generating sets of such groups gives a lot of information about the
structure of these groups as well as it gives an easy and effective
method for enumerating elements of these groups.

In chapter \ref{chapter_per} one proves finite generability of the
group $\gr(Q)$ for classes $\ccfp$, $\ccffom$, generalizations of
the Grzegorczyk classes and etc. One can note that the classes of
permutations $\gr(Q)$ are subclasses of classes of single valued
functions $Q^{(1)}$, the existence of bases in $Q^{(1)}$ for a big
family of classes $Q$ is proved in ~\cite{basis}. When proving
finite generability $Q^{(1)}$ in ~\cite{basis} one uses the fact
that the superposition allows to select from functions an
information that is required and to get rid of the unnecessary, for
example the value $f(g(x))$ is not dependent on the values of
function $f$ when using arguments that do not belong to the image of
$g$. This allows one to use quasi-universal functions that are
analogous to those that are used in papers
~\cite{ss_ustr,muchnik,ogr_rec}, i.e. functions that contain in a
sense the information about all functions from $Q^{(1)}$ (with the
help of auxiliary functions one can extract that part of information
from the quasi-universal one that corresponds to some specific
function and with the help of other axillary functions one can build
the needed function). For classes of permutations such method does
not work, to prove finite generbility $\gr(Q)$ one uses a new
method, that was generated specifically in the framework of this
dissertation.

Of a special interest is the problem of minimizing the generating
set. In chapter \ref{chapter_per} one proves that for the same
requirements for $Q$ the cardinality of the minimal generating set
equals to two (more specifically, only the upper bound is proved,
the lower bound follows for example from the non-commutativity).
Moreover, it is proved that there exists a two-element set that
generates $\gr(Q)$ in a functional sense, i.e. the basis with
respect to superposition of two functions.

The main results of this dissertation were presented at
international conferences "Discrete Models in Control Systems
Theory"\ (Moscow, 2006), "Problems of Theoretical Cybernetics"\
(Kazan, 2008), research seminar at the Institute for Information
Transmission Problems, research seminar at the Department of
Mathematical Logic and Theory of Algorithms at the Faculty of
Mechanics and Mathematics at Lomonosov Moscow State University,
research seminar at the Department of Mathematical Cybernetics at
the Faculty of Computational Mathematics and Cybernetics, Lomonosov
Moscow state University, published in papers
~\cite{perbasis,perbasiskonf,doklady,etwobasis,diplom}.

\section{Summary of the Main Results}

\subsection{Basic definitions}\refstepcounter{subsectionref}\label{subsection_definitions}

For the reader's convenience  some of the definitions here and in
other parts of the paper repeat the ones introduced in the review
part.

Let $\nat=\{0,1,2,\ldots\}$. One considers everywhere defined
functions (with an arbitrary number of arguments) on the set of
$\nat$. Under the operation of superposition one means the
substitution of functions into functions, permutation and
identification of variables, introduction of dummy variables.

Let $Q$ be an arbitrary class of functions over $\nat$. One will
denote through $\left[ Q \right]$ the closure over superposition of
the class $Q.$

Let there be some set $\Psi$ of functions closed under the
superposition and $\Phi\subseteq \Psi$. One considers that the
$\Phi$ set generated the set $\Psi$ if $[\Phi]=\Psi$. Finite sets
generating  $\Psi$ are called \emph{finite superposition basis} in
the set $\Psi.$

Let one assume that
$$
x\dotminus y=\max(x-y,0),
$$
$$
\ffsg(x)=\begin{cases} 1,\text{ if }x>0,\\0\text{
else,}\end{cases}
$$
$$
\overline{\sg}(x)=\begin{cases} 0,&\text{if}\ x>0, \\
1,&\text{if}\ x=0, \end{cases}
$$
$$
\ffrm(x,y)=\begin{cases}\text{remainder from dividing $x$ by $y,$ if
$y>0,$} \\ 0\text{ else},\end{cases}
$$
$$
\left[\log_2 x\right]=\begin{cases} \text{integer part of a binary logarithm $x,$ if $x>0,$ }\\0\text{ else,}\end{cases}
$$
$$
x\langle y\rangle=y\text{-th binary digit of $x$}
$$
$$
\text{(therefore, $x=\sum_{y=0}^{\infty}x\langle y\rangle\cdot
2^y$),}
$$
\begin{equation}\label{eq_len}
\fflen(x)=([\log_2 x]+1)\cdot\ffsg(x).
\end{equation}

One can see that $\fflen(x)$ equals to the length of the binary
notation for $x$ if $x>0$ and zero otherwise. Let one define the
function $x\wedge y$ as the bit-wise conjunction of the binary
representations of numbers $x$ and $y$. Let there be  $a_n
a_{n-1}\ldots a_0,\,b_n b_{n-1}\ldots b_0$ as binary representations
of  numbers $x$ and $y$ (if the lengths of the binary
representations are different, then the most significant bit of the
binary representation of the smaller number equals to zero). Thus,
the binary representation of a number $x\wedge y$ is
$$
(a_n\cdot b_n)(a_{n-1}\cdot b_{n-1})\ldots (a_0 \cdot b_0).
$$
By characteristic function of the predicate $\rho(x_1,\ldots,x_n)$
we call the function  $\chi_{\rho}(x_1,\ldots,x_n)$ such that for
any $x_1,\ldots,x_n$
$$
\chi_{\rho}(x_1,\ldots,x_n)=\begin{cases}1,\text{ if
}\rho(x_1,\ldots,x_n)\text{ is true},\\ 0,\text{ otherwise}.\end{cases}
$$

For the class of functions $Q$ by $Q_*$ one denotes the set of all
predicates, which characteristic functions lie in $Q.$

One claims that the function $f(x_1,\,\ldots,\,x_n,\,y)$ is obtained
from the functions $g(x_1,\,\ldots,\,x_n),$
$h(x_1,\,\ldots,\,x_n,\,y,\,z)$, $j(x_1,\,\ldots,\,x_n,\,y)$ with
the help of using the operation called \textit{bounded recursion} if
the following relations hold true
$$
\left\{
\begin{array}{l}
f(x_1,\,\ldots,\,x_n,\,0)=g(x_1,\,\ldots,\,x_n), \\
f(x_1,\,\ldots,\,x_n,\,y+1)=h(x_1,\,\ldots,\,x_n,\,y,\,f(x_1,\,\ldots,\,x_n,\,y)),
\\
f(x_1,\,\ldots,\,x_n,\,y)\leq j(x_1,\,\ldots,\,x_n,\,y).
\end{array}
\right.
$$

Let us define classes $\bigeps^n$ ($n\in\nat$) of the Grzegorczyk
hierarchy ~\cite{gzh}. $\bigeps^n$ is the minimal class of functions
which contains the functions $x+1$, $f_n(x,y)$ and is closed with
respect to superposition and limited recursion where
$$
f_0(x,y)=y+1,
$$
$$
f_1(x,y)=x+y,
$$
$$
f_2(x,y)=(x+1)\cdot (y+1),
$$
for $n\geq 2$
$$
f_{n+1}(0,y)=f_n(y+1,y+1),
$$
$$
f_{n+1}(x+1,y)=f_{n+1}(x,f_{n+1}(x,y)).
$$

For tuples of variables (and their parts), one uses abbreviations of
the form $\tilde{x}$, $\tilde{y}$, etc. (for example,
$(\tilde{x},t)$ is $(x_1,\ldots,x_n,t)$.)

\subsection{Main results of Chapter \ref{chapter_arithm}}\refstepcounter{subsectionref}\label{subsection_basic_arithm}

One can say that the function $f(x,z_1,\ldots,z_n)$ can be obtained from the function
$g(y,z_1,\ldots,z_n)$ with the help of an operation called
\emph{bounded summation} with respect to the $y$ variable if $$
f(x,z_1,\ldots,z_n)=\sum_{y\leq x}g(y,z_1,\ldots,z_n).
$$

The class $\ccclh$ of the \emph{Skolem elementary} functions (see
~\cite{erf,skolemone,skolemtwo}) is a minimal class of functions
that contains functions
\begin{equation}\label{eq_basic_arithm_srcf} 0,\quad x+1,\quad
x\dotminus y
\end{equation}
and is closed with respect to superposition and bounded summation.
One can note that $\ccclh$ coincides with the minimal class that
contains functions (\ref{eq_basic_arithm_srcf}) and is closed with
respect to superposition and summation of the form $\sum_{x<y}$
(summation over an empty set equals to zero), for convenience one
will use specifically this definition.

For every set of functions $Q$ one can define sequences of classes
$\left[ Q \right]_{2^x}^n$ and $\left[ Q \right]_{x^y}^n$
($n=0,1,2,\ldots$) inductively.
\begin{enumerate}
\item $\left[ Q \right]_{x^y}^0=\left[ Q \right]_{2^x}^0=[Q].$
\item If $f\in\left[ Q \right]_{2^x}^n$ ($\left[ Q
\right]_{x^y}^n$), then $f\in\left[ Q \right]_{2^x}^{n+1}$ ($\left[
Q \right]_{x^y}^{n+1}$).
 \item If $f\in\left[ Q \right]_{2^x}^n$ ($f\in\left[ Q \right]_{x^y}^n$) and $g$
is obtained from $f$ by permuting, identifying variables, or
introducing dummy variables, then $g\in\left[ Q \right]_{2^x}^n$
($g\in\left[ Q \right]_{x^y}^n$). \item If $f(y_1,\ldots,y_m)\in Q$
and $g_1(x_1,\ldots,x_k),\,\ldots,\,g_m(x_1,\ldots,x_k)\in\left[ Q
\right]_{2^x}^n$ ($\left[ Q \right]_{x^y}^n$), then
$$f(g_1(x_1,\ldots,x_k),\ldots,g_m(x_1,\ldots,x_k))\in\left[ Q
\right]_{2^x}^n\ (\left[ Q \right]_{x^y}^n).$$  \item If
$f\in\left[ Q \right]_{x^y}^{n+1},$ $g\in\left[ Q
\right]_{x^y}^n,$ $h\in\left[ Q \right]_{2^x}^n,$ then $2^h\in\left[
Q \right]_{2^x}^{n+1}$ and $f^g\in\left[ Q \right]_{x^y}^{n+1}.$
\end{enumerate}

For $[Q]^1_{2^x}$ and $[Q]^1_{x^y}$ one can use contracted notation
$[Q]_{2^x}$ and $[Q]_{x^y}$ respectively.

One can introduce a sequence of classes $\ccexppoly^n$
($n=0,1,2,\ldots$) inductively. \begin{enumerate} \item
$\ccexppoly^0$ is the class of all polynomials. \item $\ccexppoly^{n+1}$ is the class of all functions of the type $2^f,$ where $f\in\ccexppoly^n.$
\end{enumerate}

One can define the class $\cckhs^n$ ($n=0,1,2,\ldots$) as a class of all functions
$f(x_1,\ldots,x_n),$ for which they satisfy the following conditions:
\begin{enumerate}
\item $f$ is bounded by some function from $\ccexppoly^{n+1}.$
\item There exist functions $m(x_1,\ldots,x_n)\in \ccexppoly_n$ and
$g(x_1,\ldots,x_n,y,z)\in\ccclh,$ such that
$$
f(x_1,\ldots,x_n)\langle
y\rangle=g(x_1,\ldots,x_n,y,m(x_1,\ldots,x_n)).
$$
\end{enumerate}

$\ccxs$ one can define the class of all functions as $f(x_1,\ldots,x_n),$ for which the following conditions hold true:
\begin{enumerate}
\item There exists a polynomial $p(x_1,\ldots,x_n)$ with natural coefficients such that for any
$x_1,\ldots,x_n$ it is true that the inequality
$$
f(x_1,\ldots,x_n)<2^{p(x_1,\ldots,x_n)}.
$$
\item $f(x_1,\ldots,x_n)\langle y\rangle\in\ccclh.$
\end{enumerate}

It is obvious that $\ccclh\subseteq\ccxs$ and $\cckhs^0=\ccxs$ (this can be proved using the technique from ~\cite{erf}.)

One can assume that
$$
\mains=\{x+1,\quad xy,\quad x\dotminus y,\quad x\wedge y,\quad
\left[x/y\right]\}.
$$

\begin{theorem}\label{theorem_exp_skolem}
$\ccxs=\left[ \mains \right]_{2^x}=\left[ \mains \right]_{x^y}.$
\end{theorem}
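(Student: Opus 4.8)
The plan is to prove both displayed equalities at once by establishing the cycle of inclusions
$$\ccxs\ \subseteq\ [\mains]_{2^x}\ \subseteq\ [\mains]_{x^y}\ \subseteq\ \ccxs,$$
after which all three classes coincide. The easy link is $[\mains]_{2^x}\subseteq[\mains]_{x^y}$. First note that $[\mains]$ is a robust polynomially bounded subclass of $\ccclh$: one has $0=x\dotminus x$, $1=0+1$, $2=1+1$, and $x+y=\big((x+1)(y+1)\dotminus xy\big)\dotminus1$, so in particular the constant $2$ lies in $[\mains]$. Consequently every application of the rule that produces $2^h$ in the $2^x$-hierarchy is reproduced in the $x^y$-hierarchy as the power $f^g$ with base $f=2\in[\mains]$ and exponent $g=h$; a routine induction on the construction then gives $[\mains]^n_{2^x}\subseteq[\mains]^n_{x^y}$ for every $n$, and in particular at level $1$.

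For $[\mains]_{x^y}\subseteq\ccxs$ I would induct on the construction and verify the two defining conditions of $\ccxs$ at each step. The growth bound is immediate: members of $[\mains]$ are polynomially bounded, the step $2^h$ turns a polynomial bound into a $2^{\mathrm{poly}}$ bound, substituting the (polynomially bounded) functions of $\mains$ on top of $2^{\mathrm{poly}}$-bounded arguments stays $2^{\mathrm{poly}}$, and $f^g$ with $f<2^{p}$ and a polynomially bounded exponent $g\le q$ gives $f^g<2^{pq}$. The bit condition $f(\tilde x)\langle y\rangle\in\ccclh$ is the substantive part. For $2^h$ the bit function is $\chi_{\,y=h}$ with $h\in[\mains]\subseteq\ccclh$; for a function of $\mains$ applied to $2^{\mathrm{poly}}$-bounded arguments the operands have only polynomially many bits, so the schoolbook formulas for the bits of a sum, product, quotient, and bitwise conjunction are expressible by bounded summation and hence lie in $\ccclh$. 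The one genuinely nontrivial case is $f^g$, which is an iterated product of a polynomial number of factors each of polynomial bit-length; here I must show that the bit function of such an iterated product is in $\ccclh$ (the value is enormous, but each bit is $0/1$-valued and, via residues modulo small moduli together with bounded summation, lower-elementary computable).

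The heart of the theorem is the remaining inclusion $\ccxs\subseteq[\mains]_{2^x}$. Given $f\in\ccxs$ with $f(\tilde x)<2^{p(\tilde x)}$ and $b(\tilde x,y):=f(\tilde x)\langle y\rangle\in\ccclh$, I would reconstruct
$$f(\tilde x)=\sum_{y<p(\tilde x)}b(\tilde x,y)\,2^{y},$$
so that everything reduces to realizing a single bounded summation of a $\ccclh$-function weighted by powers of two, inside one exponential layer. I would first assemble in $[\mains]_{2^x}$ a toolkit of operations of exponent-height $\le2$: the constants and $x+y$ above, remainder $x\bmod y=x\dotminus y\cdot[x/y]$, the powers $2^{h}$ and shifts $x\cdot2^{h}$, the masks $x\wedge(2^{h}\dotminus1)$, bit extraction $x\langle y\rangle$, and field read/write via division by $2^{h}$ together with $\wedge$. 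The key lemma is then that for $g\in\ccclh$ and polynomials $m,w$ (with $w$ chosen large enough that adjacent fields cannot collide) the packed function $\sum_{y<m}g(\tilde x,y)\,2^{wy}$ belongs to $[\mains]_{2^x}$; applying it with $g=b$ and $w=1$ delivers $f$.

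The main obstacle is proving this packing lemma while spending the exponential only once. Since $\ccclh$ is generated from $0,\,x+1,\,x\dotminus y$ by superposition and bounded summation, I would show that the class of functions admitting such a packing contains these base functions and is closed under both generating operations, the two delicate points being (i) an inner bounded summation, simulated on the packed fields by the prefix-sum identity (multiply by the geometric pattern $(2^{wm}-1)/(2^{w}-1)=\sum_{j<m}2^{wj}$ and read off the top field), and (ii) fieldwise composition, handled by passing to packed encodings, operating, and repacking. The crucial invariant to maintain is that every intermediate packed value stays below $2^{\mathrm{poly}}$, so that no second exponential is ever introduced and the exponent-height remains $\le2$; carrying the nested bounded summations of $b$ and the outer weighting simultaneously through this single shared exponential layer is exactly where I expect the real work to lie. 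Finally, applying the already-proved inclusion $\ccxs\subseteq[\mains]_{2^x}$ to any $f^g\in\ccxs$ also yields $f^g\in[\mains]_{2^x}$, which closes all the loops and gives $\ccxs=[\mains]_{2^x}=[\mains]_{x^y}$.
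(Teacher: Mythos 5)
Your global architecture coincides with the paper's: the same cycle of inclusions, with the trivial link $[\mains]_{2^x}\subseteq[\mains]_{x^y}$ via $2^h=2^{\,h}$ written as a power with base the constant $2$, an induction on construction for $[\mains]_{x^y}\subseteq\ccxs$, and a ``pack the bits into one exponential'' argument for $\ccxs\subseteq[\mains]_{2^x}$. However, your key packing lemma is too weak to close its own induction. You pack only over the single summation variable, $\sum_{y<m}g(\tilde x,y)2^{wy}$, leaving $\tilde x$ free. But $\ccclh$ is generated using bounded summation over a variable that, after identification of variables, can be \emph{any} argument of $g$; and the superposition step requires applying an outer $\ccclh$-function $f$ to data-dependent field contents $g_1(\tilde x,y),\dots,g_m(\tilde x,y)$, which is a table lookup, not a uniform fieldwise arithmetic operation. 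Your phrase ``fieldwise composition, handled by passing to packed encodings, operating, and repacking'' hides exactly this difficulty. The paper's resolution is to pack the entire multidimensional truth table: the generating function of a predicate is $\sum_{0\leq x_1,\dots,x_n<y}\chi_\rho(\tilde x)2^{x_1+x_2y+\dots+x_ny^{n-1}}$ over \emph{all} tuples below a common bound, composition is replaced by bounded quantification on graphs, quantification is reduced to counting (the class $\mathrm{BA}^{\#}$), and counting is implemented on the packed table by the $\mathrm{sum}$, $\mathrm{cmpeq}$ and $\mathrm{swap}_n$ gadgets (the latter re-indexing the exponents $y^{i-1}$ to permute and identify variables). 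Without some version of this full-table representation and the re-indexing machinery, the induction you propose does not go through at the bounded-summation and composition steps.

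A second, smaller gap: in $[\mains]_{x^y}\subseteq\ccxs$ the case $f^g$ requires that the bit-graph of an iterated product of polynomially many factors, each of polynomial bit-length, lies in $\ccclh$. You correctly isolate this as the nontrivial case and gesture at Chinese remaindering, but the conversion from residue representation back to binary in a lower-elementary (uniform $\mathrm{TC}^0$) way is a deep theorem, not a routine computation; the paper does not reprove it but routes this case through $x^{\fflen(y)}\in\ccffom$ (citing Allender--Barrington--Hesse) together with the closure of $\ccxs$ under substitution into $\ccffom$-functions. Your plan is repairable along exactly the paper's lines, but as written both the packing lemma and the $f^g$ case are missing the ideas that make them work.
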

This theorem is proved in section \ref{section_exp_skolem} of chapter
\ref{chapter_arithm}. \ \\

If $A$ is some alphabet, then one can denote $A^+$ as the set of all
finite non-empty words in the alphabet $A.$ If $X$ is a word in the
alphabet $A,$ then one can denote $|X|$ as the length of this word.

One can name $\ccfom$-\emph{term} over variables $x_1,\ldots,x_m$
the expression of form $x_1,\ldots,x_m,1,\lln.$

\Def{Expressions of the form $(t_1\leq t_2),$ $\llbit(t_1,t_2)$ or
$\llx\langle t_1\rangle,$ where $t_1,t_2$ are $\ccfom$-terms over
variables $x_1,\ldots,x_m,$ are called \emph{elementary
$\ccfom$-formulas} over variables $x_1,\ldots,x_m$.}

One can inductively define the notion of $\ccfom$-formula over
variables $x_1,\ldots,x_m.$\footnote{In the list of variables, there
are not only free but also bound variables, technically it is more
convenient.}
\begin{enumerate}
\item All elementary $\ccfom$-formulas over $x_1,\ldots,x_m$
are $\ccfom$-formulas over $x_1,\ldots,x_m.$ \item If
$\Phi_1$, $\Phi_2$ are $\ccfom$-formulas over variables
$x_1,\ldots,x_m$, $x_i\in\{x_1,\ldots,x_m\},$ then
$(\Phi_1\&\Phi_2),$ $(\Phi_1\vee\Phi_2),$ $(\neg\Phi_1),$
$(\exists x_i)(\Phi_1),$ $(\forall x_i)(\Phi_1),$ $(\qqm
x_i)(\Phi_1)$ are $\ccfom$-formulas over $x_1,\ldots,x_m.$
\end{enumerate}

To every $\ccfom$-term $t$ over variables $x_1,\ldots,x_m$ one will
match up the function $h_t(X,x_1,\ldots,x_m)$, which is defined over
the set of all arrays $(X,x_1,\ldots,x_m)$ such that $X\in\{0,1\}^+$
and $1\leq x_1,\ldots,x_m\leq |X|$, in the following way.
\begin{enumerate}
\item If $t$ is $1,$ then
$$h_t(X,x_1,\ldots,x_m)=1.$$
\item If $t$ is $\lln,$ then
$$h_t(X,x_1,\ldots,x_m)=|X|.$$
\item If $t$ is $x_i,$ then
$$h_t(X,x_1,\ldots,x_m)=x_i.$$
\end{enumerate}
For every elementary $\ccfom$-formula $\Phi$ over variables
$x_1,\ldots,x_m$ one can match up the predicate
$\rho_{\Phi}(X,x_1,\ldots,x_m)$, the domain of which coincides with
the domain of the function for $\ccfom$-terms over $x_1,\ldots,x_m$,
in the following way.
\begin{enumerate}
\item If $\Phi$ is of the type $(t_1\leq t_2),$ then
$$
\rho_{\Phi}(X,x_1,\ldots,x_m)\equiv(h_{t_1}(X,x_1,\ldots,x_m)\leq
h_{t_2}(X,x_1,\ldots,x_m)).
$$
\item If $\Phi$ is of the type $\llbit(t_1,t_2),$ then
$$
\rho_{\Phi}(X,x_1,\ldots,x_m)\equiv(h_{t_1}(X,x_1,\ldots,x_m)\langle
h_{t_2}(X,x_1,\ldots,x_m)-1\rangle=1).
$$
\item If $\Phi$ is of the type $\llx\langle t_1\rangle,$ then
$$\rho_{\Phi}(X,x_1,\ldots,x_m)\equiv(h_{t_1}(X,x_1,\ldots,x_m)\text{-s symbol of the word $X$ equals $1$}),$$ where the numeration of the symbols starts with one (and one numerates symbols left to right).
\end{enumerate}
Every $\ccfom$-formula $\Phi$ over variables  $x_1,\ldots,x_m$
matches the predicate $\rho_{\Phi}(X,x_1,\ldots,x_m),$ the domain of
which coincides with the domain of the function for $\ccfom$-terms
over $x_1,\ldots,x_m,$ in the following way.
\begin{enumerate}
\item If the formula is elementary $\ccfom$-formula,
then its corresponding predicate coincides with the predicate that is defined for the given elementary formula. \item If $\Phi$ is of the type
$(\Phi_1\&\Phi_2),$ then
$$
\rho_{\Phi}(X,x_1,\ldots,x_m)\equiv\rho_{\Phi_1}(X,x_1,\ldots,x_m)\&\rho_{\Phi_2}(X,x_1,\ldots,x_m).
$$
\item If $\Phi$ is of type $(\Phi_1\vee\Phi_2),$ then
$$
\rho_{\Phi}(X,x_1,\ldots,x_m)\equiv\rho_{\Phi_1}(X,x_1,\ldots,x_m)\vee\rho_{\Phi_2}(X,x_1,\ldots,x_m).
$$
\item If $\Phi$ is of type $(\neg\Phi_1),$ then
$$
\rho_{\Phi}(X,x_1,\ldots,x_m)\equiv\neg\rho_{\Phi_1}(X,x_1,\ldots,x_m).
$$
\item If $\Phi$ is of  type $(\exists x_i)(\Phi_1),$ then
$$
\rho_{\Phi}(X,x_1,\ldots,x_m)\equiv(\exists x)_{(1\leq x\leq
|X|)}\rho_{\Phi_1}(X,x_1,\ldots,x_{i-1},x,x_{i+1},\ldots,x_m).
$$
\item If $\Phi$ is of type $(\forall x_i)(\Phi_1),$ then
$$
\rho_{\Phi}(X,x_1,\ldots,x_m)\equiv(\forall x)_{(1\leq x\leq
|X|)}\rho_{\Phi_1}(X,x_1,\ldots,x_{i-1},x,x_{i+1},\ldots,x_m).
$$
\item If $\Phi$ is of type $(\qqm x_i)(\Phi_1),$ then
$\rho_{\Phi}(X,x_1,\ldots,x_m)$ is true if and only if when the
number of $x$ such that $1\leq x\leq |X|$ and
$\rho_{\Phi_1}(X,x_1,\ldots,x_{i-1},x,x_{i+1},\ldots,x_m)$ is truly
greater than $|X|/2$.
\end{enumerate}

The $\ccfom$ (see ~\cite{uniformity}) is defined at the set of
everywhere defined over the set $\{0,1\}^+$ predicates $\varphi(X),$
for which there exists $\ccfom$-formula that corresponds to the
predicate $\rho(X,x_1,\ldots,x_m)$ such that for any
$X,x_1,\ldots,x_m$ from its domain
$$
\varphi(X)\equiv\rho(X,x_1,\ldots,x_m).
$$

If $x_1,\ldots,x_m$ are natural numbers, then one can denote
$\ffcode(x_1,\ldots,x_m)$ word
$$
01s_101s_201\ldots01s_m01,
$$
where
$$
s_i=\begin{cases} \text{to an empty word if $x_i=0,$} \\
\text{to the word that one obtains from binary notation $x_i$} \\ \quad\quad\text{by substituting each one by 11,} \\
\quad\quad \text{and every zero by $00,$ if $x_i\neq 0.$}
\end{cases}
$$
The class $\ccfomn$ can be defined as the set of everywhere defined over the set $\nat$ predicates $\varphi(x_1,\ldots,x_n),$ for which there exist a predicate $\psi(X)\in\ccfom$ such that for any
$x_1,\ldots,x_n$ it satisfies
$$
\varphi(x_1,\ldots,x_n)\equiv\psi(\ffcode(x_1,\ldots,x_n)).
$$

The class $\ccffom$ can be defined as the set of everywhere defined over the set $\nat$ functions $f(x_1,\ldots,x_n)$ such that the following two conditions are satisfied.
\begin{enumerate}
\item There exists a polynomial $p(y_1,\ldots,y_n)$ such that for any
$x_1,\ldots,x_n$
$$
f(x_1,\ldots,x_n)\leq 2^{p([\log_2(x_1)],\ldots,[\log_2(x_n)])}.
$$
\item The predicate $\rho$ that can be defined by the following relation
$$
\rho(x_1,\ldots,x_n,y)\equiv(f(x_1,\ldots,x_n)\langle y\rangle=1),
$$
is in $\ccfomn.$
\end{enumerate}

\begin{theorem}\label{theorem_main_ffom}The following takes place
$$
\ccffom=\left[x+y,\quad x\dotminus y,\quad x\wedge y,\quad
\left[x/y\right],\quad 2^{[\log_2 x]^2} \right]=
$$
$$
=\left[x+y,\quad x\dotminus y,\quad x\wedge y,\quad
\left[x/y\right],\quad x^{[\log_2 y]} \right].
$$
\end{theorem}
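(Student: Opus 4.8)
The plan is to prove the double equality by establishing three inclusions that chain together: $\ccffom \subseteq [B_1]$, $[B_1] \subseteq [B_2]$, and $[B_2] \subseteq \ccffom$, where $B_1 = \{x+y, x\dotminus y, x\wedge y, [x/y], 2^{[\log_2 x]^2}\}$ and $B_2 = \{x+y, x\dotminus y, x\wedge y, [x/y], x^{[\log_2 y]}\}$. The equivalence of the two bases $B_1$ and $B_2$ should be the easy part: I would show each generating function of one is expressible by superposition over the other. For instance, $2^{[\log_2 x]^2}$ and $x^{[\log_2 y]}$ both have length roughly $[\log_2 x]^2$, so one expects to recover $2^{[\log_2 x]^2}$ from $x^{[\log_2 y]}$ by feeding in a power of two for $x$ (built from the other generators) and conversely to recover a general $x^{[\log_2 y]}$ by isolating the bits of $x$, shifting, and reassembling. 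The genuinely substantive work is the two inclusions relating a fixed finite base to the semantic class $\ccffom$.

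\textbf{The inclusion $[B] \subseteq \ccffom$.} For this direction I would verify that $\ccffom$ is closed under superposition and that each of the five base functions lies in $\ccffom$. Closure under superposition requires checking both defining conditions: the growth bound $f \leq 2^{p([\log_2 x_1],\ldots,[\log_2 x_n])}$ is preserved because substituting polynomially-length-bounded functions into one another keeps the output length polynomial in the input lengths; and the bit-predicate condition (that $f(\tilde x)\langle y\rangle = 1$ is an $\ccfomn$ predicate) is preserved because $\ccfom$ is closed under the logical connectives and bounded/majority quantifiers built into its definition, and the bits of a composite can be expressed by quantifying over the bits of the inner functions. That each base function is individually in $\ccffom$ is then a finite case-check: $x+y$, $x\dotminus y$, $x\wedge y$, $[x/y]$ all have length bounded by the input lengths and have bit predicates expressible via $\llbit$ and the arithmetic of carries; $2^{[\log_2 x]^2}$ has length $[\log_2 x]^2+1$, polynomial in $[\log_2 x]$, and a trivially sparse bit pattern. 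I would lean on the known logical closure properties of $\ccfom$ from \cite{uniformity} rather than re-deriving them.

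\textbf{The inclusion $\ccffom \subseteq [B]$ (the hard part).} This is where the main difficulty lies, and the plan is to simulate the evaluation of an arbitrary $\ccfom$-formula by superposition over $B$. Given $f \in \ccffom$ with its associated $\ccfomn$ bit-predicate $\rho$, I would build $f$ by constructing, for each subformula $\Phi$, a function over $B$ that packs the truth values of $\rho_\Phi(X,x_1,\ldots,x_m)$ over all relevant argument tuples into the bits of a single number, then combine these along the formula's inductive structure. The key engineering steps are: encoding tuples of bounded numbers as bit-positions (this is where $2^{[\log_2 x]^2}$ or $x^{[\log_2 y]}$ supplies the super-polynomial-in-length address space needed to index all $|X|^m$ tuples); handling the Boolean connectives by $\wedge$, complementation via $\dotminus$, and disjunction by De Morgan; and — the crux — implementing the three quantifiers $\exists, \forall, \qqm$ as bit-aggregation operations. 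Existential and universal quantifiers become OR/AND over a block of bit-positions, which one realizes by masking with $\wedge$ and comparing against zero or against the full mask; the majority quantifier $\qqm$ requires actually \emph{counting} the number of set bits in a block and comparing to $|X|/2$, so I expect the central obstacle to be constructing a bit-counting (population-count) function by superposition over $B$ alone. I would attack this by the standard divide-and-conquer folding of bit-blocks — using $[x/y]$ for shifting, $\wedge$ for masking alternating fields, and $x+y$ for pairwise summation — iterated $O(\log)$ many times, with the exponential generator providing enough room to lay out the intermediate partial sums in disjoint bit-fields so that all $O(\log)$ rounds can be performed at fixed superposition depth. Making this counting construction fit within the available growth bound, and verifying that every auxiliary length/indexing function (binary length, comparison, tuple (de)coding) is itself in $[B]$, is the technically demanding core of the argument.
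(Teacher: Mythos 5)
Your architecture is exactly the paper's: the same chain of three inclusions, with $[B_2]\subseteq\ccffom$ obtained from membership of the generators plus closure of $\ccffom$ under superposition, the two bases linked by expressing one exponential generator through the other, and the hard direction $\ccffom\subseteq[B_1]$ handled by packing the truth values of each subformula $\rho_\Phi(X,x_1,\ldots,x_m)$ over all argument tuples into the bits of a single number computable over the basis (this is precisely the paper's notion of a \emph{correct} predicate and its generating function), then inducting on formula structure with Boolean connectives realized by $\wedge$, bitwise negation via $\dotminus$ against a mask, and the quantifiers $\exists,\forall,\qqm$ reduced to block-wise bit-counting followed by a bit-parallel comparison.

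The one step in your plan that would fail as written is the population count for the majority quantifier. A divide-and-conquer folding that is ``iterated $O(\log)$ many times'' cannot be a superposition over a finite basis: the number of rounds depends on $|X|$, while a superposition is a single fixed formula, and laying the partial sums out in disjoint bit-fields does not by itself collapse input-dependently many sequential rounds into constant depth. The paper counts in \emph{one} algebraic step instead: it first proves that full multiplication $xy$ lies in $[\mainffoms]$ (via the identity $xy=\bigl[g(g(x+y))/([g(g(x+y))/x]/y)\bigr]$ with $g$ built from $2^{[\log_2 x]^2}$), and then defines
$$
\ffsum(x,n,l,k)=\left[\frac{\left(x\cdot\left[\frac{2^{kl}\dotminus 1}{2^l\dotminus 1}\right]\right)\wedge\ffrep(\ffrep(1,l,1),n,kl)}{2^{(k\dotminus 1)l}}\right],
$$
where multiplication by the base-$2^l$ repunit $\left[(2^{kl}\dotminus 1)/(2^l\dotminus 1)\right]=\sum_{j<k}2^{jl}$ adds all $k$ shifted copies of each block simultaneously, so the top field of every block already holds that block's bit count; a single mask and shift extract it. With that replacement (and the accompanying verification that the auxiliary comparison, replication and repositioning functions all lie in the closure when their size parameters are logarithms of closure functions), your plan coincides with the paper's proof.
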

This theorem is proved in section \ref{section_ffom} of the chapter
\ref{chapter_arithm}. \\

One can introduce the following sequence of classes $\ccexppoly^n$
($n=0,1,2,\ldots$) inductively. \begin{enumerate} \item
$\ccexppoly^0$ is the class of all polynomials. \item
$\ccexppoly^{n+1}$ is the class of all functions of the type $2^f,$
where $f\in\ccexppoly^n.$
\end{enumerate}

One can define the class $\ccffomh^n$ ($n=1,2,\ldots$) as a class of
all upper-bounded by functions from $\mathrm{P}^n$ functions
$f(x_1,\ldots,x_n),$ for which there exists a function
$m(x_1,\ldots,x_n)\in \mathrm{P}^n$ and a predicate
$\rho(x_1,\ldots,x_n,y,z)\in\ccfomn$ such that
$$(f(x_1,\ldots,x_n)\langle y\rangle=1)\equiv\rho(x_1,\ldots,x_n,y,m(x_1,\ldots,x_n)).$$

\begin{theorem}\label{theorem_main_h}
For any $n\geq 0$ there is
$$
\cckhs^n=[\mains]_{2^x}^{n+1}=[\mains]_{x^y}^{n+1}=\ccffomh^{n+1}.
$$
\end{theorem}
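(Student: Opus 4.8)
The four classes split naturally into two ``formula'' classes $[\mains]^{n+1}_{2^x}$, $[\mains]^{n+1}_{x^y}$ and two ``semantic'' classes $\cckhs^n$, $\ccffomh^{n+1}$, and I would prove the theorem through the cycle $[\mains]^{n+1}_{2^x}\subseteq\cckhs^n\subseteq\ccffomh^{n+1}\subseteq[\mains]^{n+1}_{x^y}$ together with the separate identity $[\mains]^{n+1}_{2^x}=[\mains]^{n+1}_{x^y}$, which collapse the whole chain. The argument proceeds by induction on $n$, the single extra exponential layer ($\ccexppoly^{n+1}=2^{\ccexppoly^n}$, and likewise $\mathrm{P}^{n+1}=2^{\mathrm{P}^n}$, since here $\ccexppoly^n$ and $\mathrm{P}^n$ denote the same family) being matched on the formula side by raising the admissible exponent height from $n$ to $n+1$ via rule~5 of the definition of $[\mains]^{\,\cdot}$. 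The base case $n=0$ is $\cckhs^0=\ccxs=[\mains]^1_{2^x}=[\mains]^1_{x^y}$, which is Theorem~\ref{theorem_exp_skolem}, augmented by the identity $\ccxs=\ccffomh^1$; the latter is precisely the equivalence between $\ccclh$-computable and $\ccfomn$-computable bit predicates (for functions bounded by $2^{p(\tilde x)}$) underlying the analysis of section~\ref{section_ffom}. A useful preliminary remark is that $\cckhs^n$ and $\ccffomh^{n+1}$ carry the \emph{same} growth bound $\mathrm{P}^{n+1}$, so the content of $\cckhs^n=\ccffomh^{n+1}$ is purely about the bit-extraction predicate, not about size.

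On the formula side I would first dispose of $[\mains]^{n+1}_{2^x}=[\mains]^{n+1}_{x^y}$ by lifting the height-$1$ case already contained in Theorem~\ref{theorem_exp_skolem}: writing $x^y$ as a composition involving $2^{\,\cdot}$, $[\log_2 x]$, products, and the base operations, and conversely recovering $2^x$ as a special case, one checks by structural induction on height that the two closure operators coincide level by level. Next, $[\mains]^{n+1}_{2^x}\subseteq\cckhs^n$ is the ``soundness'' direction and the more routine one: a structural induction on a height-$(n+1)$ formula shows simultaneously that its value is bounded by a function of $\ccexppoly^{n+1}$ and that its bit $f(\tilde x)\langle y\rangle$ is computed by a function of $\ccclh$ evaluated at a parameter from $\ccexppoly^n$. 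The base operations of $\mains$ and the single extra factor $2^{h}$ (with $h$ of height $\le n$, hence with parameter already controlled by the induction hypothesis) all have Skolem-elementary bit functions, which is exactly the bookkeeping carried out for Theorem~\ref{theorem_exp_skolem}.

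The heart of the proof is the semantic bridge $\cckhs^n\subseteq\ccffomh^{n+1}\subseteq[\mains]^{n+1}_{x^y}$. For the first inclusion I would convert a Skolem-elementary bit function into an $\ccfomn$ predicate, the growth of the auxiliary parameter from $\ccexppoly^n$ to $\mathrm{P}^{n+1}$ supplying precisely the padding that a first-order majority formula needs in order to simulate the (exponential-time in the input length) Skolem computation. The cleanest way to organize both this inclusion and the base case is the substitution normal form advertised in section~\ref{section_hierarchy}: every $f\in\cckhs^n$ should be written as $F(\tilde x,E_1(\tilde x),\dots,E_k(\tilde x))$ with $F$ a height-$1$ function (of $\ccxs=\ccffomh^1$) and the $E_i$ monotone functions of $\mathrm{P}^n$, so that the problem reduces to the $n=0$ bridge after absorbing the $E_i$ into the parameter. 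The second inclusion, $\ccffomh^{n+1}\subseteq[\mains]^{n+1}_{x^y}$, is the completeness step: the bounded quantifiers and the majority quantifier of an $\ccfomn$-formula are realized arithmetically by bounded summations and comparisons built from $\mains$, while the large parameter $m\in\mathrm{P}^{n+1}$ is produced by one outermost layer of exponentials, which is exactly what raises the height to $n+1$. This mirrors the completeness half of Theorem~\ref{theorem_main_ffom}, now transported up the exponential hierarchy.

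The main obstacle I anticipate is keeping the exponential level and the formula height \emph{exactly} synchronized across all four descriptions as the induction climbs. A priori $\ccclh$ and $\ccfomn$ are very different --- $\ccclh$ can express predicates that are $\mathrm{NP}$- or $\mathrm{PSPACE}$-hard through its bounded quantifiers, whereas $\ccfomn$ is a small (uniform $\mathrm{TC}^0$) class --- so their coincidence as generators of bit predicates is genuine only because of the padding parameter, and the conversion cannot be allowed to cost or save an exponential. Concretely, the delicate points are (i)~verifying that the substitution normal form holds with the $E_i$ drawn from $\mathrm{P}^n$ and not $\mathrm{P}^{n+1}$, so that $\cckhs^n$ does not accidentally inflate to $\cckhs^{n+1}$, and (ii)~checking in the completeness step that simulating the majority quantifier consumes only the resources afforded by a parameter of size $\mathrm{P}^{n+1}$ and a single extra exponential, so that the resulting formula stays within height $n+1$. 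Once these two resource accounts are shown to balance, the inductive step closes and the four-way equality follows for every $n\ge 0$.
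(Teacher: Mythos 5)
Your overall skeleton --- reduce everything to the height-one case via the substitution normal form $f=g(\tilde x,m(\tilde x))$ with $g\in\ccxs$ and $m\in\ccexppoly^n$, then close a cycle of inclusions --- is exactly the paper's (the normal form is the class $\cckhsplus^n$, and $\cckhs^n=\cckhsplus^n$ is proved first). What differs is the orientation of the cycle, and this matters for how much work each arrow costs. The paper proves $[\mains]_{x^y}^{n+1}\subseteq\cckhs^n$ by structural induction on formulas, $\cckhs^n\subseteq[\mains]_{2^x}^{n+1}$ by substituting the normal form, and then attaches $\ccffomh^{n+1}$ by the two \emph{easy} arrows: $\ccffomh^{n+1}\subseteq\cckhs^n$ (replace the parameter $z$ by $2^z$, so that an $\ccfomn$ predicate becomes Skolem-elementary) and $[\mains]_{2^x}^{n+1}\subseteq\ccffomh^{n+1}$ (truncate every subformula $2^F$ to $\min(2^F,z)$, which lies in $\ccffom$, then feed in a bounding parameter from $\ccexppoly^{n+1}$). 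You orient both of these arrows the other way, which makes $\cckhs^n\subseteq\ccffomh^{n+1}$ --- the claim that a Skolem-elementary bit predicate becomes first-order-with-majority after exponential padding of the parameter argument alone --- the load-bearing step. That inclusion is true, but it is essentially the whole content of $\ccxs\subseteq\ccffomh^{1}$, and the paper never establishes it "semantically": it obtains it only by factoring through the formula class, where it reduces to the cheap observations that the generators of $\mains$ and $\min(2^x,z)$ lie in $\ccffom$ and that $\ccffom$ is closed under superposition. Your plan would have to rediscover this truncation device or else directly simulate nested bounded summations by threshold circuitry in the padded length; that is considerably more than the bookkeeping you allot to it, though not wrong.

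The one step that would fail as literally described is the separate identity $[\mains]^{n+1}_{2^x}=[\mains]^{n+1}_{x^y}$ "checked by structural induction on height" by rewriting $x^y$ through $2^{\,\cdot}$ and $[\log_2 x]$: the function $[\log_2 x]$ is not in $\mains$, and even at height one the paper cannot eliminate $f^g$ syntactically. Theorem \ref{statburjui} handles the case $h=f^g$ by writing $h=f^{\fflen(2^g\dotminus 1)}$ and invoking $x^{\fflen(y)}\in\ccffom$ together with the closure of $\ccxs$ under composition with $\ccffom$ (Statement \ref{statffom}); that is, the reduction of $x^y$-formulas to $2^x$-formulas is itself a corollary of the semantic characterization, not a prerequisite for it. The fix is cheap and keeps your cycle intact: extend your structural induction for $[\mains]^{n+1}_{2^x}\subseteq\cckhs^n$ by the one extra case $h=f^g$, exactly as in the paper's proof of $[\mains]_{x^y}^{n+1}\subseteq\cckhs^n$, so that the larger class $[\mains]^{n+1}_{x^y}$ enters the cycle at $\cckhs^n$ and the coincidence of the two formula classes falls out at the end instead of being assumed at the start.
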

This theorem is a generalization of theorem \ref{theorem_exp_skolem} and
is proved in section \ref{section_hierarchy} of chapter
\ref{chapter_arithm}.

\subsection{Main results of Chapter \ref{chapter_gzhbasis}}\refstepcounter{subsectionref}\label{subsection_basic_gzhbasis}

One can define $R(x,\,y)$ as a cyclic shift of a binary
representation of the number $x$ by $y$ digits to the right. In
other words, let $R(0,\,y)=0,$ $R(1,\,y)=1,$ and if $x\geq 2$ and
$a_n a_{n-1} \ldots a_1 a_0$ is the binary representation $x,$ such
that $a_n=1,$ then the binary notation$R(x,\,y)$ is
$$
a_{n+y} a_{n+y-1} \ldots a_{1+y} a_y,
$$
where all additions go mod $n+1.$

From ~\cite{gzh} it follows that the functions
$$
\sg(x),\quad \overline{\sg}(x),\quad x\dotminus y,\quad \left[
\frac{x}{y} \right],\quad \left[ \log_2 x \right],\quad
\min(x,2^y),\quad \rrm(x,y)
$$
belong to Grzegorczyk class $\bigeps^2$.

With the help of for example the operation of bounded summation
~\cite{erf} it is not difficult to show that the function $x\wedge y,\,R(x,\,y)$
belongs to the class $\bigeps^2$.

Let one define the function $Q(x,\,p_1,\,p_2,\,c_1,\,c_2,\,t)$ by
the following primitive recursion:
$$
\left\{
\begin{array}{l}
Q(x,\,p_1,\,p_2,\,c_1,\,c_2,\,0)=x, \\
Q(x,\,p_1,\,p_2,\,c_1,\,c_2,\,t+1) = \\ \quad = \begin{cases}
Q(x,\,p_1,\,p_2,\,c_1,\,c_2,\,t),\,\text{if}\
Q(x,\,p_1,\,p_2,\,c_1,\,c_2,\,t)\wedge R(p_1,\,c_1\cdot t)\neq 0,
\\ Q(x,\,p_1,\,p_2,\,c_1,\,c_2,\,t)+R(p_2,\,c_2\cdot t)\ \text{в
otherwise.} \end{cases}
\end{array}
\right.
$$
Since $Q(x,\,p_1,\,p_2,\,c_1,\,c_2,\,t)\leq x+2p_2 t,$ then one has
a bounded recursion in the class $\bigeps^2$ and, thereby,
$Q\in\bigeps^2.$

A function $Q(x_1,\,x_2,\,\ldots,\,x_m)$ from the class $\bigeps^2$
one will name \emph{quasi-universal} in the class $\bigeps^2$
relative to the system of the functions $\Phi$ if for any function
$f(\tilde{y})$ from the class $\bigeps^2$ one can find functions
$h(x,\,\tilde{y}),\,g_1(\tilde{y}),\,g_2(\tilde{y}),\,\ldots,\,g_m(\tilde{y})$
from the set $\Phi,$ such that
$$
f(\tilde{y})=h(Q(g_1(\tilde{y}),\,g_2(\tilde{y}),\,\ldots,\,g_m(\tilde{y})),\tilde{y}).
$$

\begin{theorem}\label{theorem_gzh_quazi}
The function $Q(x,\,p_1,\,p_2,\,c_1,\,c_2,\,t)$ is a quasi-universal one in the class $\bigeps^2$
with respect to closure by the superposition of the system of functions
\begin{equation}\label{arithm}
x+1,\quad xy,\quad \min(x,\,2^y),\quad x\dotminus y,\quad \left[
\frac{x}{y} \right],\quad \left[ \log_2 x \right].
\end{equation}
\end{theorem}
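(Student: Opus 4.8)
The plan is to use the characterization of \cite{ritchie}, by which $\bigeps^2$ is exactly the class of functions computable by a Turing machine in space linear in the binary length of its input; it then suffices to show that iterating $Q$ can carry an arbitrary linear-space computation through to its halting configuration. So I fix $f(\tilde y)\in\bigeps^2$ together with a machine computing it in space $L=O(n)$, where $n$ is the binary length of $\tilde y$. A configuration of that machine (the $L$ tape cells, the head position, the control state) I encode as a single binary number, using a fixed number of bits per cell and a \emph{dual-rail} layout in which every bit is stored alongside its complement. The point of the dual rail is that each local condition the machine tests --- ``the control state is $q$, the head scans cell $j$, and that cell holds symbol $a$'' --- becomes the assertion that a prescribed set of bits is zero. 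Writing $S$ for the current value $Q(\ldots,t)$, the defining guard of $Q$ admits its addition precisely when $S\wedge R(p_1,c_1t)=0$, so these are exactly the conditions it can detect; correspondingly, each local rewrite (new symbol, new control state, unit head move) is realized by \emph{adding} one constant to $S$, the required $1\to 0$ flips being produced by controlled carry propagation, with the spacing of the encoding chosen so that no carry ever escapes the cells it should affect.

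The heart of the argument is then to simulate one step of the machine by one \emph{period} of $O(L)$ micro-steps of $Q$. Over a period the cyclic shifts $R(p_1,c_1t)$ and $R(p_2,c_2t)$ are arranged to run through every triple (control state, scanned symbol, head position): the micro-step assigned to a triple tests, through its mask $R(p_1,c_1t)$, whether that triple is the one actually present, and precisely in the unique case where it is, the guard permits the addition of $R(p_2,c_2t)$, which installs the successor configuration; every other micro-step of the period is a no-op, its mask meeting some nonzero dual-rail bit. The head-position dependence of masks and increments is supplied by the shift itself --- advancing by $c_1$, respectively $c_2$, bits per micro-step sweeps the pattern across the $L$ cells --- while the finitely many state/symbol cases are laid out as consecutive blocks inside $p_1$ and $p_2$, and $c_1,c_2$ are fixed so that the shift presents exactly the intended mask and increment at each $t$. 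For the bookkeeping: a space-$L$ computation halts within $2^{O(n)}$ steps, so the total step count is $t=O(L)\cdot 2^{O(n)}=\mathrm{poly}(\tilde y)$; padding to this worst case is harmless once the halting configuration is made a fixed point of the period. Since $p_1,p_2$ carry only $O(L)=O(n)$ significant bits they have polynomial value, whence $Q\leq x+2p_2t$ keeps every value polynomial in $\tilde y$, as required for staying inside $\bigeps^2$.

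Before all of this I would check that the auxiliary functions genuinely lie in $\Phi$ and not merely in $\bigeps^2$. First, every function of $\Phi$ is polynomially bounded, since each generator in (\ref{arithm}) is bounded by a polynomial in its inputs and this is preserved under superposition; this is exactly what makes the polynomial bound on $Q$ above consistent with membership in $\bigeps^2$. Second, $\Phi$ supplies all the tools needed to build the arguments and the decoder: bounded addition via $x+y=N\dotminus((N\dotminus x)\dotminus y)$ for any polynomial bound $N\geq x+y$, bounded powers of two via $\min(x,2^y)$, and single-bit extraction via $\left[Q/2^{\,j}\right]\dotminus 2\left[Q/2^{\,j+1}\right]$. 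Hence the initial encoding $x=g_0(\tilde y)$, the periodic patterns $p_1,p_2$ and the moduli $c_1,c_2$, and the decoder $h(Q,\tilde y)$ that reads $f(\tilde y)$ off the halting configuration are all in $\Phi$; note in particular that neither $R$ nor $\wedge$ is used among the $g_i$ or in $h$, which is why they need not appear in (\ref{arithm}).

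The hard part will be reconciling, inside the single pair $(p_1,p_2)$ and under purely \emph{oblivious} (time-determined) cyclic shifts, three demands at once: the mask and increment must track the head position (handled by the shift), must distinguish the scanned symbol and control state (handled by the block layout), and yet must address a control-state field that, unlike the tape, must not be dragged across the cells by the shift. My resolution is to have the head \emph{carry} the control state --- co-locating that field with the head token so that it rotates in step with the head, on a tape treated as a cycle --- so that one guarded addition per triple effects exactly the intended carry-controlled rewrite with no spurious carry, while $p_1,p_2,c_1,c_2$ stay mutually consistent for every $t$ of the period. Getting this layout precisely right, so that the oblivious shifts realize the data-dependent transition function cleanly, is the delicate core of the construction; everything else reduces to the routine verifications sketched above.
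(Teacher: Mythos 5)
Your overall architecture — iterate $Q$ so that the obliviously (time-)selected masks $R(p_1,c_1t)$ and increments $R(p_2,c_2t)$ drive a machine simulation, with the parameters and the decoder supplied by $\Phi$ — is the right one, and your remarks about $\Phi$ (polynomial boundedness, recovering $x+y$, $2^y$ via $\min(x,2^y)$) are fine. But the proof has a genuine gap at exactly the place you flag as ``the delicate core,'' and the gap is not merely an omitted verification: as parametrized, the construction cannot be completed. The guard of $Q$ compares the \emph{whole} current value against $R(p_1,c_1t)$, so for the test to see only one intended mask block you are forced to take $c_1>\log_2 Q$, i.e. $c_1$ at least the bit-length of an entire configuration, which in your linear-space Turing-machine encoding is $\Theta(L)=\Theta(n)$. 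At the same time your period must contain one micro-step (hence one $c_1$-bit block of $p_1$) per triple (state, symbol, head position), so $r=\Omega(L)$ blocks, giving $p_1\geq 2^{c_1(r-1)}=2^{\Omega(n^2)}=2^{\Omega(\log^2\tilde y)}$. That is super-polynomial in $\tilde y$, so $p_1$ cannot be the value of any function of $\Phi\subseteq\bigeps^2$ (all of whose functions are polynomially bounded) — contradicting your own claim that $p_1,p_2$ have only $O(n)$ significant bits. Independently of this counting obstruction, the mechanisms you defer (realizing $1\to 0$ flips of dual-rail cells purely by addition with controlled carries, transporting a multi-bit state field along with the head under purely additive updates, and de-interleaving the dual-rail output in the decoder $h$ without using $\wedge$) are each nontrivial and none is exhibited.

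The paper escapes all of this by starting not from Ritchie's linear-space characterization but from the characterization of $\bigeps^2$ as the functions computable by (reduced) \emph{Minsky machines} in polynomial time. A configuration is then just $k$ counters plus a state, coded as $k$ binary fields of $O(\log\tilde y)$ bits plus a constant-width state field; one machine step decomposes into a \emph{constant} number of ``simplistic'' maps $x\mapsto x+v$ if $x\wedge u=0$, whose masks test whole counter fields and the state field at \emph{fixed} positions (the ``state $=q''$'' test is reduced to a zero-test by pre-adding $(2^w-q'')2^{lk}$ and adding it back afterwards). Hence the period $r$ is a machine-dependent constant, $c_1,c_2=O(\log\tilde y)$, and $p_1,p_2<2^{O(r\log\tilde y)}$ are polynomially bounded and expressible in $\Phi$ as polynomials in $2^{l(\tilde y)}=\min(2m(\tilde y),2^{l(\tilde y)})$; the decoder is just $\rrm(\rrm(\cdot,2^{c_2}),2^{l})$. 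If you want to salvage your write-up, replacing the linear-space Turing machine by a polynomial-time counter machine is the essential repair; the rest of your outline then collapses into the paper's Statements on configuration codes and the basic property of $Q$.
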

\begin{conseq}
The system of functions
$$
x+1,\quad xy,\quad \min(x,\,2^y),\quad x\dotminus y,\quad \left[
\frac{x}{y} \right],\quad \left[ \log_2 x \right],\quad
Q(x,\,p_1,\,p_2,\,c_1,\,c_2,\,t)
$$
forms a basis with respect to superposition in the Grzegorczyk class
$\bigeps^2$.
\end{conseq}

\subsection{Main results of Chapter \ref{chapter_per}}\refstepcounter{subsectionref}\label{subsection_basic_per}

Under the term \emph{permutation} one assumes a permutation over the
set $\nat.$

For any class $Q,$ which is closed with respect to superposition and
 contains the function $I(x)=x,$ by $\gr(Q)$ one
can denote the group of permutations $(\{f:\ f,f^{-1}\in Q\},\
\circ)$.

\Def{An infinite set $A\subseteq\nat$ is \emph{regular} in the class
of functions $Q$ if it satisfies two conditions:
\begin{enumerate}
\item $\chi_A\in Q;$ \item One can enumerate elements of the set $A$
in such a way that $\mu(x)$ that calculates the number of the
element $x$ in this numeration (equals to zero for $x\notin A$) and
the function $\nu(x)$ that calculates an element with the number $x$
belong to $Q$ (enumeration starts with zero).
\end{enumerate}}

One considers classes of functions $Q$ that satisfy the following
requirements:
\renewcommand{\theenumi}{\Roman{enumi}}
\begin{enumerate}
\item\label{dembase} $Q$ contains functions
\begin{equation}\label{eq_base_functions}
1,\quad x+y,\quad x\dotminus y,\quad x\cdot \ffsg\ y,\quad [x/2];
\end{equation}
\item\label{demnumerate} $Q$ contains an enumerating function
$\ffp_2(x_1,x_2)$ that mutually exclusively maps the set $\nat^2$ to
$\nat$ and its inverse functions $\ffpr_{2,1}(x)$ and
$\ffpr_{2,2}(x)$ ($\ffpr_{2,1}(\ffp_2(x,y))=x,$
$\ffpr_{2,2}(\ffp_2(x,y))=y$ for any $x,$ $y$);
\item\label{demrazb}For any permutation $f\in\gr(Q)$ there
exist non-intersecting regular in $Q$ sets $A,$ $B$ such that
$f(A)\cap B=\varnothing$ and $\nat\backslash A,$ $\nat\backslash B$
are regular in $Q;$ \item\label{demzamkn} $Q$ is closed with respect
to superposition; \item\label{demunarbasis} $Q$ has a finite basis
with respect to superposition.
\end{enumerate}
\renewcommand{\theenumi}{\arabic{enumi}}
One can notice that the requirements are not independent (for example
\ref{demzamkn} follows from \ref{demunarbasis}). Nonetheless, one considers all requirements so that one can show for some statements that they hold true for quite weak restrictions on $Q$ (see chapter \ref{chapter_per}).

\begin{theorem}\label{theorem_per_basic}If the class $Q$ satisfies the
requirements  \textup{\ref{dembase}--\ref{demrazb},
\ref{demunarbasis}}, then there exist two permutations from
$\gr(Q)$, with compositions of which one can represent any
permutation from $\gr(Q)$.
\end{theorem}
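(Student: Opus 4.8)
The plan is to argue in two stages: first produce a finite generating set for $\gr(Q)$ of a uniform, relocatable type, and then compress it to two permutations. For the first stage the engine is requirement \ref{demrazb}. Given an arbitrary $f\in\gr(Q)$, I would pick disjoint regular sets $A,B$ with $f(A)\cap B=\varnothing$ and with $\nat\setminus A,\ \nat\setminus B$ regular; regularity supplies enumeration functions $\mu,\nu\in Q$ (using closure \ref{demzamkn}) identifying each of these sets with $\nat$. Transporting $f$ through these identifications and through the pairing $\ffp_2$ of requirement \ref{demnumerate} puts $f$ into a normal form in which it is a product of a bounded number of \emph{regular swaps} --- permutations that exchange two disjoint regular sets through a $Q$-computable bijection and fix the rest --- drawn from a fixed repertoire built from $\ffp_2,\ffpr_{2,1},\ffpr_{2,2}$ and the arithmetic of \ref{dembase}. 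The finiteness of this repertoire and of the number of factors is where the finite basis \ref{demunarbasis} of $Q$ is used, and it yields $\gr(Q)=\langle g_1,\dots,g_k\rangle$ with each $g_i$ a regular swap.

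For the second stage I would exploit the self-similarity of $\gr(Q)$ under the pairing. I identify $\nat$ with $\mathbb{Z}\times\nat$ by composing $\ffp_2$ with a $Q$-computable bijection $\nat\to\mathbb{Z}$ (a parity/zigzag map, available from \ref{dembase}), obtaining disjoint tracks $T_j$ ($j\in\mathbb{Z}$), each a regular copy of $\nat$. Let $a$ be the track shift $T_j\to T_{j+1}$; because the index set is $\mathbb{Z}$ this is a genuine bijection of $\nat$, and both $a$ and $a^{-1}$ are compositions of $\ffp_2,\ffpr_{2,1},\ffpr_{2,2}$ and arithmetic, hence lie in $Q$. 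Let $\rho$ be a single \emph{seed} regular swap, placed on a standard pair of adjacent tracks. Since the assignment is a finite case distinction (expressible through $\ffsg$ and $\dotminus$), $\rho$ and $\rho^{-1}$ also lie in $Q$, so $a,\rho\in\gr(Q)$.

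The heart of the compression is a covering lemma: conjugation by powers of $a$ relocates the seed swap $\rho$ along the tracks, and I would show that suitable words in $a,\rho$ (including commutators $[\,a^{-N}\rho a^{N},\rho\,]$ whose large-shift instances have disjoint supports, hence commute or cancel in a controlled way) realize every regular swap from the Stage-1 repertoire, not merely a track-bound copy of it. Granting this, $\langle a,\rho\rangle\supseteq\{g_1,\dots,g_k\}=\gr(Q)$, and the reverse inclusion is immediate from $a,\rho\in\gr(Q)$; the lower bound $2$ follows from the non-commutativity of $\gr(Q)$, so two is exactly the minimum.

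I expect the main obstacle to be this covering lemma carried out \emph{inside} $\gr(Q)$. In the abstract symmetric group the shift-and-seed pattern is standard, but here one must simultaneously (i) make $a$ and $\rho$ honest bijections of all of $\nat$ --- which is what forces the $\mathbb{Z}$-indexing and a $Q$-computable ``reservoir'' handling of the leftover points --- and (ii) guarantee that the relocated conjugates of $\rho$ overlap only in controllable ways, so that the words producing each target swap recover exactly that swap rather than a track-localized imitation, and remain invertible within $Q$. The genuine subtlety is that a placed copy differs from the target permutation, so the relocation/product argument must faithfully synthesize the targets themselves. Requirement \ref{demrazb}, through the regularity (with $Q$-enumerations) of all sets involved, is precisely what keeps this bookkeeping expressible in $Q$, and discharging (i)--(ii) is the crux of the proof.
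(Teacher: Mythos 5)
Your overall architecture --- first a finite generating set, then compression to two permutations via a track shift and a relocatable seed --- matches the paper's, which uses the rotation $\ffrol$ and the packed permutation $\ffall$. But the step you yourself flag as the crux, the covering lemma, is a genuine gap, and not a technical one: as stated it fails. With a \emph{single} seed swap $\rho$, every element of the group generated by $a$ and $\rho$ is a word in relocated copies of the one permutation $\rho$, so there is no reason it should contain $k$ pairwise unrelated generators $g_1,\dots,g_k$; and the commutators $[\,a^{-N}\rho a^{N},\rho\,]$ you invoke are, for large $N$, commutators of disjointly supported matchings, hence trivial --- they cannot synthesize anything new. The paper's resolution is precisely to abandon the single seed: $\ffall=v_1\circ\dots\circ v_{2n}$ packs \emph{all} $2n$ helper matchings into one permutation, the $j$-th helper living on the track pair $E_{2^j}\cup E_{2^j+1}$, and the power-of-two offsets are chosen so that (Statement \ref{stat_two_numbers}) conjugating $\ffall$ by $\ffrol^{-2^i}$ and by $\ffrol^{-2^{i+n}}$ returns exactly one helper each to the working tracks $E_0\cup E_1$ while all remaining factors land on pairwise disjoint track pairs away from them.

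The second missing ingredient is the mechanism for discarding that residual clutter. The paper's first stage (Theorem \ref{statanyhpair}) does not merely produce ``regular swaps'': each generator $f_i$ arrives as part of a \emph{correct triple} $(f_i,g_i,B_i)$, i.e.\ with a partner matching $g_i$ and a family of $4$-tuples $B_i$ arranged so that $f_i=(f_1\circ f_2)^2$ for the two contaminated conjugates (Statement \ref{statmegadelete}): squaring annihilates the disjointly supported junk because matchings are involutions, while the $4$-cycle structure on $B_i$ makes the square of the product of the two helpers equal exactly $f_i$. Your regular swaps carry no such partner structure, so even with a packed seed you would have no way to recover the targets themselves rather than track-localized imitations accompanied by garbage --- which is exactly the failure mode you anticipate in (ii). In short, you have correctly located the difficulty and the shift-plus-seed shape of the answer, but the two ideas that actually close it --- packing all generators into $\ffall$ with offsets $2^1,\dots,2^{2n}$, and the correct-triple cancellation identity --- are absent from the proposal.
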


Let FP be the set of all functions $\nat^n\rightarrow\nat$
computable on Turing machine and its running time is upper bounded
by a polynomial expression in the size of the input for the
algorithm (the number is expressed in binary code). Similarly, FL
the set of all functions computable with the space $O(\log n)$,
where $n$ is the input length (for multitape Turing machine not
recording on the input tape). Besides, one uses the definition of
the class $\ccffom$ from the section \ref{subsection_basic_arithm}
of Introduction.

The class of functions
 $Q$ is called \emph{$\bigeps^2$-closed} if it contains the
 following functions
$$
0,\quad x+1,\quad xy
$$
and is closed with regards to superposition and bounded recursion.
\begin{theorem}\label{theorem_per_concrete}
Classes $\ccfp$, $\ccfl$, $\ccffom$ as well as $\bigeps^2$-closed
classes that have a finite basis with respect to superposition
satisfy the requirements \textup{\ref{dembase}--\ref{demrazb},
\ref{demunarbasis}}.
\end{theorem}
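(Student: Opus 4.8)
The proof proceeds requirement by requirement, handling $\ccfp$, $\ccfl$, $\ccffom$ and the $\bigeps^2$-closed classes with a finite basis as uniformly as possible. The two cheap ingredients are \ref{dembase} and \ref{demunarbasis}. For \ref{dembase} I would simply exhibit the five functions $1,\ x+y,\ x\dotminus y,\ x\cdot\sg(y),\ [x/2]$ inside each class: for $\ccfp$ and $\ccfl$ they are obviously computable in polynomial time and in logarithmic space; for a $\bigeps^2$-closed class they already lie in $\bigeps^2$ (here $1=0+1$, while $x+y$, $x\cdot\sg(y)$ come from $x+1$, $xy$, and $x\dotminus y,\sg$ are elementary); and for $\ccffom$ they are immediate reducts of the basis of Theorem \ref{theorem_main_ffom} (for instance $\sg(x)=x\dotminus(x\dotminus 1)$ and $[x/2]$ is a substitution into $[x/y]$). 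Closure under superposition (requirement \ref{demzamkn}) holds for all four families and will be used freely below. For \ref{demunarbasis} I would invoke Theorem \ref{theorem_main_ffom} directly for $\ccffom$, take the finite basis as a hypothesis in the $\bigeps^2$-closed case, and, for $\ccfp$ and $\ccfl$, use $\ccffom\subseteq\ccfl\subseteq\ccfp$ together with the observation following Theorem \ref{theorem_main_ffom}: adjoining to the $\ccffom$-basis one function complete for the class under $\ccffom$-reducibility (available for $\ccfp$ from the $\mathrm P$-complete problems of \cite{pcomplete}, and similarly for $\ccfl$) recovers the whole class after closing under superposition.

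For requirement \ref{demnumerate} I would fix once and for all the Cantor pairing $\ffp_2(x,y)=[(x+y)(x+y+1)/2]+y$ together with its coordinate functions $\ffpr_{2,1},\ffpr_{2,2}$, whose standard closed forms use only $+$, $\cdot$, $\dotminus$, $[\,\cdot/2\,]$ and the integer square root $[\sqrt{\,\cdot\,}]$. Since $[\sqrt{\,\cdot\,}]$ and the remaining operations lie in all four families (in $\ccffom$ by the membership remarks of the introduction, in $\bigeps^2$ by \cite{gzh}), the pairing and its inverses belong to every $Q$ under consideration; the polynomial size of $\ffp_2$ keeps it within the growth bound defining $\ccffom$.

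The heart of the matter is requirement \ref{demrazb}, and this is where I expect the real difficulty. Writing $g=f^{-1}\in Q$, the conditions $A\cap B=\varnothing$ and $f(A)\cap B=\varnothing$ together say exactly that $B$ is disjoint from $A\cup f(A)$, whose characteristic function $\sg(\chi_A(x)+\chi_A(g(x)))$ already lies in $Q$ by closure under superposition. Thus \ref{demrazb} reduces to a core lemma: for every $f\in\gr(Q)$ one can produce a set $A$ regular in $Q$ with regular complement such that $A\cup f(A)$ is co-infinite, and moreover its complement contains a further set $B$ that is again regular with regular complement. The subtlety is that no fixed choice of $A$ can succeed: a permutation $f\in\ccfp$ can be arranged so that even a very sparse regular set (for example the squares) has $A\cup f(A)=\nat$, leaving no room for $B$. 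Hence $A$ and $B$ must be built adaptively from $f$ and $g$.

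The natural candidate is to thin $\nat$ along the action of $f$, keeping roughly one point out of every three encountered and discarding the short orbits, so that each infinite orbit contributes infinitely many survivors to the complement of $A\cup f(A)$; a counting argument then yields co-infiniteness, and the survivors, transported by $\ffp_2$ onto a sparse regular track, supply $B$ with its regular complement. The defect of this naive version is that orbit position is not $Q$-computable, and none of $\ccffom$, $\ccfl$, $\ccfp$ or a general $\bigeps^2$-closed class is assumed closed under minimization. The substance of the argument — and the step on which I would spend the most care — is therefore to replace orbit position by a closed-form surrogate in the variables $x$, $f(x)$, $g(x)$ and the pairing, which still provably forces infinitely many survivors yet lets $\mu_A,\nu_A,\mu_B,\nu_B,\chi_A,\chi_B$ be obtained by composing $f,g,\ffp_2,\ffpr_{2,1},\ffpr_{2,2}$ with the basic arithmetic and so remain in $Q$. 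Making this adaptive thinning simultaneously genuinely $f$-dependent, provably co-infinite on $A\cup f(A)$, and expressible without unbounded search is the main obstacle, and it is precisely here that having the finite basis and the pairing in hand is indispensable.
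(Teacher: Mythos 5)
Your treatment of requirements \ref{dembase}, \ref{demnumerate} and \ref{demunarbasis} is essentially sound and close to the paper's (the paper prefers a bit-interleaving pairing for $\ccffom$ so as not to have to invert the Cantor polynomial, and for $\ccfl$ it exhibits an explicit basis built from a universal log-space function in the style of Muchnik rather than appealing to a completeness argument, but these are minor differences). The genuine gap is requirement \ref{demrazb}, which you correctly identify as the heart of the theorem and then leave unproved: your orbit-thinning construction is offered only as a ``natural candidate,'' you yourself point out that orbit position is not $Q$-computable and that none of the classes is closed under unbounded search, and the promised ``closed-form surrogate'' is never produced. As written, the proposal establishes nothing about \ref{demrazb}, so the proof of the theorem is incomplete at its most important point.

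The missing idea in the paper is considerably simpler than anything orbit-based and uses only the growth of $f$ and $f^{-1}$, not their dynamics. Choose $h\in Q$ with $h(x)>2x$, with $h(x)-x$ increasing, and with $f(x),f^{-1}(x)<h(x)$ for all $x$ (for $\ccfp$ and $\ccffom$ one takes $h(x)=2^{[\log_2(x+20)]^n}+2x$; for a $\bigeps^2$-closed class, $h(x)=\max_{0\leq y\leq x}\max(f(y),f^{-1}(y))+2x+1$). Set $A_i=\{x:\ h^i(0)\leq x<h^{i+1}(0)\}$; these intervals partition $\nat$, and the majorization of $f$ and $f^{-1}$ by $h$ forces $x\in A_i\Rightarrow f(x)\in A_j$ with $|i-j|\leq 1$, i.e.\ $f$ moves every point by at most one interval. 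Taking $A=A_0\cup A_4\cup A_8\cup\cdots$ and $B=A_2\cup A_6\cup A_{10}\cup\cdots$ then gives $A\cap B=\varnothing$ and $f(A)\cap B=\varnothing$ automatically, with no reference to orbits. Regularity is where the specific form of $h$ matters: since $h(x)>2x$, the index $i$ with $x\in A_i$ is at most $[\log_2 x]+1$, so $\chi_A$ and $\mu_A$ need only logarithmically many iterations of $h$ (handled in $\ccffom$ via the closed form $h^k(x)=2^kx+\sum_{i=1}^k 2^{[\log_2(x+20)]^{n^i}+k-i}$), and a counting argument over the intervals yields $\nu_A(x)<h^5(4x+3)$, which keeps $\nu_A$ inside the class. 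Some replacement for this interval argument is exactly what your proposal still owes.
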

\begin{conseq}
For classes $Q$ from the theorem \ref{theorem_per_concrete} the
group $\gr(Q)$ is generated by two permutations (also, in a
functional sense, i.e. with the help of using only composition).
\end{conseq}

\chapter{Generating Classes by Superposition of Simple
Arithmetic Functions}\refstepcounter{chapterref}\label{chapter_arithm}

\renewcommand{\theequation}{\arabic{chapter}.\arabic{equation}}

\section{Exponential Expansion of the Class of Skolem Elementary Functions
and a Formula of Height
two}\refstepcounter{sectionref}\label{section_exp_skolem}

   \renewcommand{\thestat}{\arabic{chapter}.\arabic{section}.\arabic{subsection}.\arabic{stat}}

\subsection{Definitions}

For basic definitions one can check sections \ref{subsection_definitions} and
\ref{subsection_basic_arithm} of the introduction.

\Def{Predicate $\rho(x_1,\ldots,x_n)$ is a \emph{correct} one if there exists a function $f(y)\in\left[\mains\right]_{2^x}$ such that for any $y\geq 1$ $$ f(y)=\Sum_{0\leq
x_1<y}\ldots\Sum_{0\leq
x_n<y}(\chi_{\rho}(x_1,\ldots,x_n)2^{x_1+x_2y+\ldots+x_ny^{n-1}}).
$$ } In this case the function $f$ is called the \emph{generating function} of the predicate $\rho.$

Further, the generating function of any predicate $\rho$ will be
denoted as $f_{\rho}.$

\Def{Function $f(x_1,\ldots,x_n)$ is called
\emph{$\mains$-polynomial} with respect to the set of variables
$\{x_{i_1},\ldots,x_{i_k}\}$ if for any functions
$g_1(\tilde{y}),\ldots,g_n(\tilde{y})$  from satisfying relations
$$
\begin{array}{l}g_i\in\left[\mains\right]_{2^x},\text{ if }i\in\{i_1,\ldots,i_k\}, \\
g_i\in\left[\mains\right],\text{ if
}i\notin\{i_1,\ldots,i_k\},\end{array} \quad(1\leq i\leq n)$$
it follows that
$$
f(g_1(\tilde{y}),\ldots,g_n(\tilde{y}))\in\left[\mains\right]_{2^x}.
$$}

By \emph{explicit transformations} one understands operations of
permutation and identification of variables, introduction of dummy
variables, and constants substitution (from the set $\nat$) in place
of variables.

One can say that the predicate $\varphi(x_1,\ldots,x_n,y)$ can be
obtained from the predicate $\psi(x_1,\ldots,x_n)$ with the help of
\emph{counting operation} with respect to the variable $x_i$ and a
polynomial $p(x_1,\ldots,x_n)$ if for any $x_1,\ldots,x_n,y\in \nat$
the value $\varphi(x_1,\ldots,x_n,y)$ holds true if and only if $y$
is the number of such $x$ that $x<p(x_1,\ldots,x_n)$ and
$\psi(x_1,\ldots,x_{i-1},x,x_{i+1},\ldots,x_n)$ hold true.

Let $\cccr$ be a minimal class of predicates that contains
predicates $x+y=z$ and $xy=z,$ closed with respect to explicit
transformations, logical operations and operations of counting.

The class $\ccr$ (see ~\cite{erf}) can be defined as a minimal class
of predicates that contains predicates $x+y=z,$ $xy=z$ and closed
with respect to explicit transformational logical operations and
bounded quantifications of the form $(\exists x)_{x<y}$ and
$(\forall x)_{x<y}.$

The \emph{graph} of the function $f(x_1,\ldots,x_n)$ is a predicate
of the type $y=f(x_1,\ldots,x_n).$

Let $\cccrf$ be the set of all functions upper-bounded by
polynomials, the graphs of which are in $\cccr.$

One can say that the function $f(x,z_1,\ldots,z_n)$ is obtained from the function $g(y,z_1,\ldots,z_n)$ with the help of the operation called \emph{narrowed bounded summation} if for any
$x,z_1,\ldots,z_n\in\nat$
$$
f(x,z_1,\ldots,z_n)=\begin{cases}
\Sum_{y<x}\ffsg(g(y,z_1,\ldots,z_n)),\text{ if }x>0, \\ 0,\text{
if }x=0. \end{cases}
$$

One can assume that
$$
(\mu x_i)_{x_i<y}(f(x_1,\ldots,x_n)=z)=$$
$$=\begin{cases}\text{minimal of these values $x_i,$ such that $x_i<y$ and} \\
\quad\quad\text{$f(x_1,\ldots,x_n)=z$ if such $x_i$
exists,}
\\ 0\text{ otherwise}\end{cases}
$$
The operation $\mu$ is called the operation \emph{of bounded minimization.}

\subsection{Inclusion $\left[ \mains\right]_{x^y}\subseteq\ccxs$}\refstepcounter{subsectionref}\label{subsection_expskolem_burjui}

\begin{stat}\label{statangles}
The function $x\langle y\rangle$ lies in $\ccclh.$ Besides, the predicate
$(x\langle y\rangle=1)$ is in $\ccclh_*.$
\end{stat}
\begin{proof}
Indeed, it is clear that
$$
(x\langle y\rangle=1)\equiv(\exists z)_{z\leq x}(\exists
t)_{t<z}(\exists u)_{u\leq x}((x=2uz+z+t)\&(z=2^y)).
$$
From ~\cite{erf} it is known that $(x=2uz+z+t)$ and $(z=2^y)$ are in
$\ccr.$ From this it follows that $(x\langle
y\rangle=1)\in\ccr\subseteq\ccclh_*.$ From this and from the fact that
$x\langle y\rangle$ takes values $0$ and $1,$ it follows that
is the statement that one wants to prove.
\end{proof}
\begin{stat}\label{statogrmin}The class $\ccclh$ is closed with respect to bounded minimization.
\end{stat}
\begin{proof}See ~\cite{erf}.\end{proof}

\begin{stat}\label{statlenxs}If $f(\tilde{x})\in\ccxs,$ then
$\fflen(f(\tilde{x}))\in\ccclh.$\end{stat}
\begin{proof}
Let $p(\tilde{x})$ be a polynomial that is upper bounded (strictly)
the length of binary notation $f(\tilde{x})$ (the existence of such
polynomial follows from the definition of $\ccxs$). Then it is
obvious that
$$
\fflen(f(\tilde{x}))=(\mu z)_{z<p(\tilde{x})}(f(\tilde{x})<2^z).
$$
From ~\cite{erf} it is known that
$$
(x<2^y)\in\ccclh_*.
$$
From this and from the statement \ref{statogrmin} follows the
statement that one is proving.
\end{proof}

\begin{stat}\label{stattermxs}Let $g_1(\tilde{z}),\ldots,g_k(\tilde{z})\in\ccxs,$
$t$ be a $\ccfom$-term over variables $x_1,\ldots,x_m,$ it corresponds to a function $h_t(X,x_1,\ldots,x_m).$ Then
$$
h_t(\ffcode(g_1(\tilde{z}),\ldots,g_k(\tilde{z})),x_1,\ldots,x_m)\in\ccclh.$$
\end{stat}
\begin{proof}
Everywhere defined
$$
h_t(\ffcode(g_1(\tilde{z}),\ldots,g_k(\tilde{z})),x_1,\ldots,x_m)
$$
follows from the fact that the domain of $\ffcode$ does not contain
an empty word. One can prove that the needed function belongs to
$\ccclh$. There can be the following cases.
\begin{enumerate}
\item $t$ is $1.$ Then it is obvious that
$$
h_t(\ffcode(g_1(\tilde{z}),\ldots,g_k(\tilde{z})),x_1,\ldots,x_m)=1.
$$
The $\ccclh$ affiliation is obvious. \item $t$ is $\lln.$ Then
$$
h_t(\ffcode(g_1(\tilde{z}),\ldots,g_k(\tilde{z})),x_1,\ldots,x_m)=$$
$$=2\cdot(\fflen(g_1(\tilde{z}))+\ldots+ \fflen(g_k(\tilde{z}))+k+1)
$$
(see the definitions of $\ffcode$ and $h_t$). Affiliation to the
class $\ccclh$ follows from the statement \ref{statlenxs}. \item $t$
looks like $x_i.$ Then
$$
h_t(\ffcode(g_1(\tilde{z}),\ldots,g_k(\tilde{z})),x_1,\ldots,x_m)=x_i.
$$
Affiliation to the class $\ccclh$ is obvious.
\end{enumerate}
The statement is proved.
\end{proof}
\begin{stat}\label{statelemformulaxs}Let $g_1(\tilde{z}),\ldots,g_k(\tilde{z})\in\ccxs,$
$\Phi$ is an elementary $\ccfom$-formula over the variables
$x_1,\ldots,x_m,$ it has a corresponding predicate
$\rho_{\Phi}(X,x_1,\ldots,x_m).$ Then
$$
\rho_{\Phi}(\ffcode(g_1(\tilde{z}),\ldots,g_k(\tilde{z})),x_1,\ldots,x_m)\in\ccclh_*.$$
\end{stat}
\begin{proof}
Everywhere defined predicate
$$\rho_{\Phi}(\ffcode(g_1(\tilde{z}),\ldots,g_k(\tilde{z})),x_1,\ldots,x_m)$$
follows from the fact that the domain of $\ffcode$ doesn not contain
an empty word. One can prove the affiliation to the class
$\ccclh_*.$ There can be different cases.
\begin{enumerate}
\item $\Phi$ is of the type $(t_1=t_2)$. Then
$$
\rho_{\Phi}\equiv(h_{t_1}=h_{t_2}).
$$
Affiliation to the class $\ccclh_*$ follows from the statement
\ref{stattermxs} and from the fact that  $(x=y)\in\ccr\subseteq\ccclh_*$
(see ~\cite{erf}). \item $\Phi$ look like $(t_1\leq t_2)$.
Analogously. \item $\Phi$ looks like $\llbit(t_1,t_2).$ From the statement
 \ref{statangles} it follows that $(x\langle
y\rangle=1)\in\ccclh_*.$ From the definition of $\rho_{\Phi}$
follows the representation
$$
\rho_{\Phi}(\ffcode(g_1(\tilde{z}),\ldots,g_k(\tilde{z})),x_1,\ldots,x_m)\equiv$$
$$
\equiv(h_{t_1}(\ffcode(g_1(\tilde{z}),\ldots,g_k(\tilde{z})),\tilde{x})$$
$$\langle
h_{t_2}(\ffcode(g_1(\tilde{z}),\ldots,g_k(\tilde{z})),\tilde{x})-1\rangle=1).
$$
From here, from the statement \ref{stattermxs} and from the fact that $(x\langle
y\rangle=1)\in\ccclh_*,$ it follows that
$$
\rho_{\Phi}(\ffcode(g_1(\tilde{z}),\ldots,g_k(\tilde{z})),x_1,\ldots,x_m)\in\ccclh_*.
$$
\item $\Phi$ looks like $\llx\langle t_1\rangle.$ Briefly
$h_{t_1}(\ffcode(g_1(\tilde{z}),\ldots,g_k(\tilde{z})),x_1,\ldots,x_m)$
and
$\rho_{\Phi}(\ffcode(g_1(\tilde{z}),\ldots,g_k(\tilde{z})),x_1,\ldots,x_m)$
бwill be denoted simply as $h_{t_1}$ and $\rho_{\Phi},$ while
expressions $2\cdot\fflen(g_i(\tilde{z}))$ will be denoted $l_i$
($1\leq i\leq k$). Then from definitions of $\ffcode$ and
$\rho_{\Phi}$ it follows that
$$
\rho_{\Phi}\equiv\begin{cases}
(g_{i}(\tilde{z})\left\langle\left[\frac{h_{t_1}\dotminus(2i+l_1+\ldots+l_{i-1}+1)}{2}\right]\right\rangle=1),\text{ if }\\
\quad\quad 2i+l_1+\ldots+l_{i-1}+1\leq
h_{t_1}< \\ \quad\quad <2i+l_1+\ldots+l_{i-1}+l_i+1,\ 1\leq i\leq k, \\
\text{true},\text{ if }h_{t_1}=2i+l_1+\ldots+l_i+2,\ 0\leq i\leq k, \\
\text{false}\text{ else}. \\
\end{cases}
$$
Based on induction proposal, $h_{t_1}\in\ccclh.$ From here, from
inclusions $g_i\in\ccxs,$ from the definition of $\ccxs$ and from
the simplest features of the class $\ccclh$ (see ~\cite{erf}) it
follows that for any $i$ ($1\leq i\leq k$)
$$
g_{i}(\tilde{z})\left\langle
\left[\frac{h_{t_1}\dotminus(2i+l_1+\ldots+l_{i-1}+1)}{2}\right]\right\rangle\in\ccclh.
$$
Besides, according to the statement \ref{statlenxs}, $l_i\in\ccclh$
($1\leq i\leq k$). From here and from the closeness of $\ccclh$
relative to breaking down cases with respect to predicates from
$\ccclh_*$ (see ~\cite{erf}) it follows that $\rho_{\Phi}\in\ccclh.$
\end{enumerate}
The statement is proved.
\end{proof}
\begin{stat}\label{statformulaxs}Let $g_1(\tilde{z}),\ldots,g_k(\tilde{z})\in\ccxs,$
$\Phi$ is a $\ccfom$-formula over variables $x_1,\ldots,x_m,$ to which there is a corresponding predicate $\rho_{\Phi}(X,x_1,\ldots,x_m).$ Then
$$
\rho_{\Phi}(\ffcode(g_1(\tilde{z}),\ldots,g_k(\tilde{z})),x_1,\ldots,x_m)\in\ccclh_*.$$
\end{stat}
\begin{proof}
Everywhere defined predicate
$\rho_{\Phi}(\ffcode(g_1(\tilde{z}),\ldots,g_k(\tilde{z})),x_1,\ldots,x_m)$
it follows from the fact that the domain of $\ffcode$ does not
contain an empty word. Let $l$ be a contracted notation for
$$
2\cdot(\fflen(g_1(\tilde{z}))+\ldots+\fflen(g_k(\tilde{z}))+k+1).
$$
From the statement  \ref{statlenxs} it follows that $l\in\ccclh.$
Affiliation to the class $\ccclh_*$ can be proved by inducting on the construction of the formula.
\begin{enumerate}
\item $\Phi$ is an elementary $\ccfom$-formula. Then this formula follows from the statement \ref{statelemformulaxs}. \item
$\Phi$ is of the form $(\Phi_1\&\Phi_2),$ $(\Phi_1\vee\Phi_2)$ or
$(\neg\Phi_1).$ The statement follows from the close of $\ccclh_*$
with respect to logical operations (see ~\cite{erf}).
\item $\Phi$ is of the form $(\exists x_i)\Phi_1.$  Then
$$
\rho_{\Phi}(\ffcode(g_1(\tilde{z}),\ldots,g_k(\tilde{z})),x_1,\ldots,x_m)
\equiv$$ $$\equiv(\exists x)_{(1\leq x\leq
l)}\rho_{\Phi_1}(\ffcode(g_1(\tilde{z}),\ldots,g_k(\tilde{z})),x_1,\ldots,x_{i-1},x,x_{i+1},\ldots,x_m).
$$
Affiliation with the class $\ccclh_*$ follows from the fact that
$l\in\ccclh,$
 and from the closeness of $\ccclh_*$ with respect to bounded quantification (see ~\cite{erf}). \item
$\Phi$ is of the type $(\forall x_i)(\Phi_1).$ Then the given
statement is a consequence from items $2$ and $3.$ \item $\Phi$ is
of the form $(\qqm x_i)(\Phi_1).$ Let
$$
r(\tilde{z},x_1,\ldots,x_m)= $$ $$=\Sum_{1\leq x\leq
l}\chi(\ffcode(g_1(\tilde{z}),\ldots,g_k(\tilde{z})),x_1,\ldots,x_{i-1},x,x_{i+1},\ldots,x_m),
$$
where $\chi$ is the characteristic function of the predicate
$\rho_{\Phi_1}.$ It is obvious that $r(\tilde{z},x_1,\ldots,x_m)$ is
the number of $x$ such that  $1\leq x\leq l$ and
$$\rho_{\Phi_1}(\ffcode(g_1(\tilde{z}),\ldots,g_k(\tilde{z})),x_1,\ldots,x_{i-1},x,x_{i+1},\ldots,x_m)$$
holds true. From the definition of $\ccclh,$ the induction step and
the fact that $l\in\ccclh,$ it follows that $r\in\ccclh.$ From the
definition of $\rho_{\Phi}$ it follows that
$$
\rho_{\Phi}(\ffcode(g_1(\tilde{z}),\ldots,g_k(\tilde{z})),x_1,\ldots,x_m)\equiv$$
$$\equiv(r(\tilde{z},x_1,\ldots,x_m)>l\dotminus
r(\tilde{z},x_1,\ldots,x_m)).
$$
Therefore from $(x\dotminus y)\in\ccclh$ and from $(x>y)\in\ccclh_*$ it follows that
 $\rho_{\Phi}\in\ccclh_*.$
\end{enumerate}
The statement is proved.
\end{proof}
\begin{stat}\label{statfomn}
Let $g_1(\tilde{z}),\ldots,g_k(\tilde{z})\in\ccxs,$
$\rho(y_1,\ldots,y_n)\in\ccfomn.$ Then the predicate
$$
\varphi(\tilde{z})=\rho(g_1(\tilde{z}),\ldots,g_k(\tilde{z}))
$$
is in $\ccclh_*.$
\end{stat}
\begin{proof}
From the definition of $\ccfomn$ it follows that there exists such
predicate $\psi(X)$ from $\ccfom$ that
$$
\rho(y_1,\ldots,y_n)\equiv\psi(\ffcode(y_1,\ldots,y_n)).
$$
From the definition of $\ccfom$ it follows that there exists
$\ccfom$-formula $\Phi,$ to which there is a corresponding predicate
$\rho_{\Phi}(X,x_1,\ldots,x_m)$ such that
$$
\rho_{\Phi}(X,x_1,\ldots,x_m)\equiv\psi(X).
$$
Thereby,
$$
\rho(g_1(\tilde{z}),\ldots,g_k(\tilde{z}))\equiv\rho_{\Phi}(\ffcode(g_1(\tilde{z}),\ldots,g_k(\tilde{z})),
x_1,\ldots,x_m).
$$
From the statement \ref{statformulaxs} it follows that
$$
\rho_{\Phi}(\ffcode(g_1(\tilde{z}),\ldots,g_k(\tilde{z})),
x_1,\ldots,x_m)\in\ccclh_*.
$$
The statement is proved.
\end{proof}
\begin{stat}\label{statffom}
Let $g_1(\tilde{z}),\ldots,g_k(\tilde{z})\in\ccxs,$
$f(y_1,\ldots,y_n)\in\ccffom.$ Then
$$
h(\tilde{z})=f(g_1(\tilde{z}),\ldots,g_k(\tilde{z}))\in\ccxs.
$$
\end{stat}
\begin{proof}
The boundedness $h(\tilde{z})$ by functions of the form
$2^{p(\tilde{z})},$ where $p$ is a polynomial, follows from
restrictions in definitions of $\ccxs$ and $\ccffom.$ One can prove
that $(h(\tilde{z})\langle t\rangle=1)\in\ccclh_*.$ Indeed, it is
obvious that
$$
(h(\tilde{z})\langle
t\rangle=1)\equiv\xi(g_1(\tilde{z}),\ldots,g_k(\tilde{z}),t),
$$
where
$$
\xi(y_1,\ldots,y_m,t)\equiv(f(y_1,\ldots,y_m)\langle t\rangle=1).
$$
From the definition of $\ccffom$ it follows that $\xi\in\ccfomn.$
From that and the statement \ref{statfomn} it follows that
$$
\xi(g_1(\tilde{z}),\ldots,g_k(\tilde{z}),t)\in\ccclh_*.
$$
And this is equivalent to the fact that
$$
h(\tilde{z})\langle t\rangle\in\ccclh
$$
(because $h(\tilde{z})\langle t\rangle$ takes only values $0$ and
$1$). From this and the definition of $\ccxs$ it follows that
 $h\in\ccxs.$ The statement is proved.
\end{proof}
\begin{stat}\label{statarythmffom}
Functions
$$
x+1,\quad x\dotminus y,\quad xy,\quad x\wedge y,\quad
\left[x/y\right],\quad x^{\fflen(y)}
$$
are in the class $\ccffom.$
\end{stat}
\begin{proof}
For the function $x\wedge y$ it obviously follows from equivalent
definitions of the class $\ccfom$ (for example through the boolean
circuits, see ~\cite{uniformity}). For the remaining functions the
proof is in ~\cite{arythm}.
\end{proof}
\begin{stat}\label{statexpsxs}
If $f(\tilde{x})\in\ccclh,$ then $2^{f(\tilde{x})}\dotminus
1\in\ccxs.$
\end{stat}
\begin{proof}
The verity of the upper bound on the speed of growth is clear. From
the simplistic features of binary notation of numbers it follows
that the following holds
$$
((2^{f(\tilde{x})}\dotminus 1)\langle y\rangle=1) \equiv
(y<f(\tilde{x})).
$$
It is obvious that the predicate $(x<y)$ is in $\ccclh_*$ (see
~\cite{erf}). Therefore
$$
(y<f(\tilde{x}))\in\ccclh_*.
$$
From this one sees the validity of the statement that one was proving.
\end{proof}
\begin{theorem}\label{statburjui}
The following inclusion takes place $\left[ \mains\right]_{x^y}\subseteq\ccxs.$
\end{theorem}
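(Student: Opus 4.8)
The plan is to argue by induction on the derivation of a function $h$ in $\left[\mains\right]_{x^y}=\left[\mains\right]_{x^y}^1$, showing at each step that $h\in\ccxs$. First I would dispose of the base level $\left[\mains\right]_{x^y}^0=[\mains]$. Every generator of $\mains$ lies in $\ccclh$: for $x+1$ and $x\dotminus y$ this is immediate; $xy$ is the bounded summation $\Sum_{z<y}x$; and the membership of $\left[x/y\right]$ and $x\wedge y$ is known from \cite{erf}, the latter also being expressible through the bits $x\langle i\rangle$ of Statement~\ref{statangles}. Since $\ccclh$ is closed under superposition, this gives $[\mains]\subseteq\ccclh\subseteq\ccxs$; in particular every level-$0$ function is polynomially bounded.

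For the induction step I would verify that each generating rule of $\left[\mains\right]_{x^y}^1$ preserves membership in $\ccxs$. The explicit transformations (permutation and identification of variables, introduction of dummy variables) clearly keep both the polynomial exponent bound and the $\ccclh$-definability of the bit function $h\langle y\rangle$, since $\ccclh$ is itself closed under such transformations. For the superposition rule, where $h=f(g_1,\ldots,g_m)$ with $f\in\mains$ and $g_1,\ldots,g_m\in\ccxs$ already established by the induction hypothesis, I would invoke Statement~\ref{statarythmffom} to note that $f\in\ccffom$ and then Statement~\ref{statffom} to conclude $h\in\ccxs$.

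The crux is the exponentiation rule $h=f^{g}$, where $f\in\left[\mains\right]_{x^y}^1$ (hence $f\in\ccxs$ by the induction hypothesis) and, since we are at level $1$, the exponent satisfies $g\in\left[\mains\right]_{x^y}^0=[\mains]$, so $g\in\ccclh$ and $g$ is polynomially bounded. The key idea is to realize the exponent as a binary length: because $\fflen(2^{g}\dotminus 1)=g$, one has $f^{g}=f^{\fflen(2^{g}\dotminus 1)}$. By Statement~\ref{statexpsxs} the inner function $2^{g}\dotminus 1$ lies in $\ccxs$ (as $g\in\ccclh$), and by Statement~\ref{statarythmffom} the powering function $x^{\fflen(y)}$ lies in $\ccffom$. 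Thus $f^{g}=F(f,\,2^{g}\dotminus 1)$ with $F(x,y)=x^{\fflen(y)}\in\ccffom$, i.e. a substitution of two $\ccxs$-functions into an $\ccffom$-function, and Statement~\ref{statffom} yields $f^{g}\in\ccxs$. The exponent bound is also clear directly: $f<2^{p_f}$ and $g\le p_g$ give $f^{g}<2^{p_f\cdot p_g}$ with $p_f\cdot p_g$ a polynomial.

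The main obstacle is precisely this last step: $\ccxs$ is not closed under superposition, so one cannot just compose powering with arbitrary $\ccxs$-inputs, and the bit function of $f^{g}$ (an iterated product of up to polynomially many factors) is not directly transparent. The device that rescues the argument is the encoding of the exponent $g$ as the length $\fflen(2^{g}\dotminus 1)$, which turns the unbounded-looking operation $f^{g}$ into the single controlled $\ccffom$-function $x^{\fflen(y)}$, after which the substitution principle of Statement~\ref{statffom} applies uniformly.
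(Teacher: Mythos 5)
Your proof is correct and follows essentially the same route as the paper's: induction on the construction in $[\mains]_{x^y}$, handling superposition with $\mains$-functions via Statements~\ref{statarythmffom} and \ref{statffom}, and resolving the exponentiation case by the identity $f^{g}=f^{\fflen(2^{g}\dotminus 1)}$ together with Statements~\ref{statexpsxs} and \ref{statarythmffom}. No substantive differences.
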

\begin{proof}
One can prove this statement by induction on constructing functions in the class
$[\mains]_{x^y}.$ Let $h\in[\mains]_{x^y}.$ Then there can be different cases.
\begin{enumerate}
\item $h\in\mains.$ Then obviously (see for example ~\cite{erf}) that
$h\in\ccclh.$ From $\ccclh\subseteq\ccxs$ it follows that $h\in\ccxs.$
\item $h$ is obtained from $f$ by permuting, identifying variables or
introducing dummy variables, $f\in\ccxs$. In this case the inclusion
$h\in\ccxs$ follows from the fact that the class $\ccclh$ is closed
with respect to superposition (specifically, permutation,
identification of variables, introduction of dummy variables).
\item
$h(\tilde{x})=f(g_1(\tilde{x}),\ldots,g_m(\tilde{x})),$ where
$f\in\mains,$ $g_1,\ldots,g_m\in\ccxs.$ In this case from statements
 \ref{statarythmffom} and \ref{statffom} follows that
$h\in\ccxs.$ \item $h=f^g,$ where $f\in\ccxs,$ $g\in[\mains].$ Then
this can be written as
$$
h=f^{\fflen(2^g\dotminus 1)}.
$$
It is obvious that $g\in\ccclh$ (see ~\cite{erf}), thus $2^g\dotminus
1\in\ccxs$ (the statement \ref{statexpsxs}). From this, from the statement
$x^{\fflen(y)}\in\ccffom$ (the statement \ref{statarythmffom}) and the statement \ref{statffom} it follows that $h\in\ccxs.$
\end{enumerate}
The theorem is proved.
\end{proof}

\subsection{The Classes $\cccr,$ $\cccrf,$ and $\mains$- polynomiality}

\begin{stat}\label{statcccrquant}
The class $\cccr$ is closed with respect to bounded quantifications
of the form $(\exists x)_{x<p(\tilde{y})}$ and $(\forall
x)_{x<p(\tilde{y})},$ where $p$ is a polynomial with coefficients
from $\nat$.
\end{stat}
\begin{proof}
Let
$$
\varphi(\tilde{y})\equiv(\exists
x)_{x<p(\tilde{y})}\psi(x,\tilde{y}),
$$
$\psi\in\cccr.$ Let $\rho(x,\tilde{y},z)$ is obtained from
$\psi(x,\tilde{y})$ with the help of the counting operation with
respect to the variable $x$ and the polynomial $p(\tilde{y}).$ Then
$\rho(x,\tilde{y},z)$ is true if and only if $z$ is a number of such
$t<p(\tilde{y})$ that $\psi(t,\tilde{y})$ is true. From this it
follows that $\rho(0,\tilde{y},0)$ is true if and only if there is
no such $t<p(\tilde{y})$ that $\psi(t,\tilde{y})$ is true. For this
it follows that for any $\tilde{y}$ it holds true that
$$
\varphi(\tilde{y})\equiv\neg\rho(0,\tilde{y},0).
$$
Since $\cccr$ is closed with respect to the counting operation, the
constant substitution operation and logical operations, one obtains
that $\varphi\in\cccr.$ The closeness of the class $\cccr$ with
respect to $(\forall x)_{x<p(\tilde{y})}$ follows from the closeness
of $\cccr$ with respect to $(\exists x)_{x<p(\tilde{y})}$ and
logical operations. The statement is proved.
\end{proof}
\begin{stat}\label{statccrincccr}
The following inclusion takes place $\ccr\subseteq\cccr.$ \end{stat}
\begin{proof}
Indeed, for this one needs to prove that $\cccr$ is closed with respect to quantification of the type $(\exists x)_{x<y}$ and $(\forall
x)_{x<y},$ and this follows from the statement \ref{statcccrquant}.
The statement is proved.
\end{proof}
\begin{stat}\label{statcccrfsubst}
The class $\cccrf$ is closed with respect to superposition.
\end{stat}
\begin{proof}
Closeness with respect to permutation, identification of variables
and introduction of dummy variables follows from the fact that
$\cccr$ is closed with respect to explicit transformations.

One can prove this closeness with respect to substitution of a function into function.
Let
$$
h(\tilde{x})=f(g_1(\tilde{x}),\ldots,g_m(\tilde{x})),
$$
$f,g_1,\ldots,g_m\in\cccrf.$ One can claim that $h\in\cccrf.$
Polynomial boundedness of the function $h$ follows from
the polynomial boundedness of $f,g_1,\ldots,g_m.$ Let
$p(\tilde{x})$ be a polynomial  that strictly upper-bounds functions $g_1(\tilde{x}),\ldots,g_m(\tilde{x}).$ Then
$$
(z=h(\tilde{x}))\equiv(\exists
y_1)_{y_1<p(\tilde{x})}\ldots(\exists y_m)_{y_m<p(\tilde{x})}$$
$$((y_1=g_1(\tilde{x}))\&\ldots\&(y_m=g_m(\tilde{x}))\&
(z=f(y_1,\ldots,y_m))).
$$
From this, from the closeness of $\cccr$ with respect to logical
operations and explicit transformations and from the statement
\ref{statcccrquant} it follows that
 $(z=h(\tilde{x}))\in\cccr.$ The statement is proved.
\end{proof}
\begin{stat}\label{statcccrfsum}
The class $\cccrf$ is closed with respect to the operation of narrowed bounded summation.
\end{stat}
\begin{proof}
Let
$$
f(x,z_1,\ldots,z_n)=\Sum_{y<x}\ffsg(g(y,z_1,\ldots,z_n)),
$$
$g\in\cccrf.$ One can prove that $f\in\cccrf.$ Let
$$
\rho(x,z_1,\ldots,z_n)\equiv\neg(0=g(x,z_1,\ldots,z_n)).
$$
It is obvious that $\rho$ is obtained from the graph of $g$ with the
help of logical operations and substitution of constants, thus
$\rho\in\cccr.$ Let $\varphi(x,z_1,\ldots,z_n,u)$ is obtained from
$\rho(x,z_1,\ldots,z_n)$ with the help of counting operations with
respect to the variable $x$ and the polynomial $x.$ Then
$\varphi(x,z_1,\ldots,z_n,u)$ holds treu if and only if $u$ is the
number of $y<x$ such that $\rho(y,z_1,\ldots,z_n)$ is true (i.e.
$\ffsg(g(y,z_1,\ldots,z_n))=1$). Thereby, for any
$x,z_1,\ldots,z_n,u$ it is true that
$$
(u=f(x,z_1,\ldots,z_n))\equiv\varphi(x,z_1,\ldots,z_n,u).
$$
From $\rho\in\cccr$ and the closeness of $\cccr$ with respect to explicit transformations and counting operations it follows that $\varphi\in\cccr,$
i.e. the graph of the function $f$ is in $\cccr.$ The polynomial boundedness
of $f$ obviously follows from the polynomial boundedness of $g.$ The statement is proved.
\end{proof}
\begin{stat}\label{statcccrfelem}
Functions $0,\ x+1,\ x\dotminus y,\ xy$ are in $\cccrf.$
\end{stat}
\begin{proof}
It is known that ~\cite{erf} the predicates
$$
x=0,\quad y=x+1,\quad z=x\dotminus y,\quad z=xy
$$
are in $\ccr.$ From this and the statement \ref{statccrincccr} it
follows the following statement.
\end{proof}
\begin{stat}\label{statscolemincccr}
It satisfies $\ccclh_*\subseteq\cccr.$
\end{stat}
\begin{proof}
From ~\cite{erf} it is known that $\ccclh$ coincides with the
minimal class of functions that contains functions $0,\ x+1,\
x\dotminus y,\ xy$ and closed with respect to the superposition and
the narrowed bounded summation. From this and the statements
\ref{statcccrfelem}, \ref{statcccrfsubst}, \ref{statcccrfsum} it
follows that
$$
\ccclh\subseteq\cccrf.
$$
From ~\cite{erf} is known that $\ccclh_*$ is the set of all graphs of functions from  $\ccclh.$ Thereby,
$$
\ccclh_*\subseteq\cccr.
$$
The statement is proved.
\end{proof}

It is easy to see that the following five statements hold true.
\begin{stat}If the function $f$ is $\mains$-polynomial with respect to some set of variables, then $f\in[\mains]_{2^x}.$
\end{stat}
\begin{stat}If the function $f$ is $\mains$-polynomial with respect to
the set of variables $X$ and $Y\subseteq X,$ then $f$ is
$\mains$-polynomial with respect to variables $Y.$\end{stat}
\begin{stat}\label{spoly}
If the function $f(x_1,\ldots,x_n)$ is $\mains$-polynomial with
respect to the set of variables  $\{x_{i_1},\ldots,x_{i_k}\},$ the
function $g(y_1,\ldots,y_m)$ is $\mains$-polynomial with respect to
the set of variables $\{y_{j_1},\ldots,y_{j_p}\},$ then
$$
f(x_1,\ldots,x_{i_1-1},g(y_1,\ldots,y_m),x_{i_1+1},\ldots,x_n)
$$
is $\mains$-polynomial with respect to the set of variables
$\{x_{i_2},\ldots,x_{i_k},y_{j_1},\ldots,y_{j_p}\}.$
\end{stat}
\begin{stat}\label{spolybad}
Let the function $f(x_1,\ldots,x_n)$ be $\mains$-polynomial with
respect to the set of variables $\{x_{i_1},\ldots,x_{i_k}\},$ $1\leq
i\leq n,$ $i\notin\{i_1,\ldots,i_k\}$ and
$g(y_1,\ldots,y_m)\in\left[\mains\right].$ Then
$$
f(x_1,\ldots,x_{i-1},g(y_1,\ldots,y_m),x_{i+1},\ldots,x_n)
$$
is $\mains$-polynomial with respect to the set
$\{x_{i_1},\ldots,x_{i_k}\}.$
\end{stat}
\begin{stat}\label{spolyswap}
Let the function $g(y_1,\ldots,y_m)$ $\mains$ be a
$\mains$-polynomial one with respect to the set of variables
$\{y_{j_1},\ldots,y_{j_n}\},$
$$
f(x_1,\ldots,x_k)=g(x_{i_1},\ldots,x_{i_m}).
$$
Then $f(x_1,\ldots,x_k)$ is $\mains$-polynomial with respect to the
set of  all variables $x_i,$ for which the set of all $y_j$ such
that $i_j=i,$ is in $\{y_{j_1},\ldots,y_{j_n}\}.$
\end{stat}

\begin{stat}\label{spolyfinite}
Let $f(x_1,\ldots,x_n)$ be $\mains$-polynomial with respect to the
set of variables $X,$ $g(x_1,\ldots,x_n)$ differs from $f$ in finite
number of points. Then $g(x_1,\ldots,x_n)$ is $\mains$-polynomial
with respect to $X.$
\end{stat}
\begin{proof}
It is obvious that one can just prove this statement for one point.
Let
$$
f(a_1,\ldots,a_n)=b,\ g(a_1,\ldots,a_n)=c,
$$
in other points $f$ and $g$ coincide. Then if $b<c,$ then
$$
g(x_1,\ldots,x_n)=$$
$$=f(x_1,\ldots,x_n)+(c-b)\cdot(1\dotminus((x_1\dotminus
a_1)+(a_1\dotminus x_1)))
\cdot\ldots\cdot(1\dotminus((x_n\dotminus a_n)+(a_n\dotminus
x_n))).
$$
An analogous formula holds true for $b>c.$ From these formulas and from statements
 \ref{spoly}, \ref{spolybad}, \ref{spolyswap} it follows that
 $g(x_1,\ldots,x_n)$ is $\mains$-polynomial with respect to $X.$
\end{proof}

\subsection{The Inclusion $\ccxs\subseteq\left[ \mains\right]_{2^x}$}

\begin{stat}\label{statrm}
The function $\ffrm(x,y)$ is $\mains$-polynomial over the set of
variables $\{x,y\}.$
\end{stat}
\begin{proof}
One has $\ffrm(x,y)=x\dotminus[x/y]\cdot y.$ Thus,
$\ffrm(x,y)\in[\mains].$ From this it follows $\mains$-polynomiality
of the function $\ffrm.$
\end{proof}

Let
$$
\langle x_0,\ldots,x_{n-1};\ l\rangle=\Sum_{i=0}^{n-1}x_i2^{il}.
$$
One can notice that if the condition $x_0,x_1,\ldots,x_{n-1}<2^l$ is
satisfied then for any $i$ ($0\leq i<n$) the binary digits of the
number $\langle x_0,x_1,\ldots,x_{n-1};\ l\rangle$ from $(il)$-th up
to $(il+l-1)$-th make up the binary notation of the number $x_i.$

Let
$$
\ffrep(x,n,l)=x\cdot\left[\frac{2^{nl}\dotminus 1}{2^l\dotminus
1}\right]\footnote{In this dissertation, it was convenient to define
functions through formulas rather than by conditions to which these
formulas satisfy. Thus, for the reader it might be easier to read
what follows the formula with statements on features of functions
and the proofs of those statements rather than trying to parse the
formula straight away.}.
$$
\begin{stat}\label{statrep}
If $n,l\geq 1,$ then
$$
\ffrep(x,n,l)=\langle\underbrace{x,x,\ldots,x}_{n\text{ times }};\
l\rangle.
$$
Besides, $\ffrep(x,n,l)$ is $\mains$-polynomial with respect to
$\{x\}.$
\end{stat}
\begin{proof}
By using the formula for geometric progression sum, one obtains
$$
\ffrep(x,n,l)=x\cdot\Sum_{i=0}^{n-1}2^{li}=\Sum_{i=0}^{n-1}x2^{li}=\langle\underbrace{x,x,\ldots,x}_{n\text{
times}};\ l\rangle.
$$
$\mains$-polynomiality with respect to $\{x\}$ follows from the form
of the formula ($x$ is excluded from power exponents). The statement
is proved.
\end{proof}
Let
$$
\ffincrx(x,n,l_1,l_2)=\ffrep(x,n,l_2\dotminus
l_1)\wedge\ffrep(2^{l_1}\dotminus 1,n,l_2).
$$
\begin{stat}\label{statincrx}
If $n,l_1\geq 1,$ $l_2\geq(n+1)l_1,$ $x=\langle
x_0,\ldots,x_{n-1};\ l_1\rangle,$ $0\leq
x_0,\ldots,x_{n-1}<2^{l_1},$ then
$$
\ffincrx(x,n,l_1,l_2)=\langle x_0,\ldots,x_{n-1};\ l_2\rangle.
$$
Besides, $\ffincrx(x,n,l_1,l_2)$ is $\mains$-polynomial with respect
to $\{x\}.$
\end{stat}
\begin{proof}
One has
$$
x=\Sum_{i=0}^{n-1}x_i2^{il_1}\leq\Sum_{i=0}^{n-1}(2^{l_1}-1)2^{il_1}<2^{nl_1}.
$$
From $l_2\geq(n+1)l_1$ it follows that
$$
l_2-l_1\geq nl_1.
$$
From this it follows that the binary digits of the number
$\ffrep(x,n,l_2\dotminus l_1)$ from $i(l_2-l_1)$-th up to
$(i(l_2-l_1)+l_2-l_1-1)$-th generate the binary notation of the
number $x$ ($0\leq i\leq n-1$). Besides, binary digits of the number
$x$ from $(il_1)$-th up to $(il_1+l_1-1)$-th make up binary notation
of the number $x_i$ ($0\leq i\leq n-1$). From this it follows that
the binary notation of the number
 $\ffrep(x,n,l_2\dotminus l_1)$ from $(il_2)$-th up to
$(il_2+l_1-1)$-th make up binary notation of the number $x_i$
($0\leq i\leq n-1$).

One can note that the binary notation of the number
$\ffrep(2^{l_1}\dotminus 1,n,l_2)$ is $n$ blocks of ones, besides
$i$-th block ($0\leq i\leq n-1$) takes up digits from $(il_2)$-th up
to $(il_2+l_1-1)$-th. Thereby, one obtains that
$$
\ffrep(x,n,l_2\dotminus l_1)\wedge\ffrep(2^{l_1}\dotminus
1,n,l_2)=\langle x_0,\ldots x_{n-1};\ l_2\rangle.
$$
$\mains$-polynomiality of $\ffincrx(x,n,l_1,l_2)$ with respect to
$\{x\}$ follows from the form of the formula, from $\mains$-
polynomiality of $\ffrep(x,n,l)$ with respect to $\{x\}$ and from
statements \ref{spoly}, \ref{spolyswap}. The statement is proved.
\end{proof}

One can define families of functions $\ffsp_n,$ $\ffsa_n$ ($n\geq
1$) in the following way:
$$
\ffsp_n(q,m_1,\ldots,m_n)=q^{2n}+q\cdot(m_1+\ldots+m_n+1),
$$
$$
\ffsa_n(q,k_1,\ldots,k_n,m_1,\ldots,m_n)=q\cdot\ffsp_n(q,m_1,\ldots,m_n)\cdot(k_1+\ldots+k_n+1).
$$
Further for brevity we will replace the expressions
$\ffsp_n(q,m_1,\ldots,m_n)$ and
$\ffsa_n(q,k_1,\ldots,k_n,m_1,\ldots,m_n)$ with $\ffsp$ and $\ffsa$
respectively. One can define the family of functions $\ffswap_n$
($n\geq 1$) in the following way:
$$
\ffswap_n(x,q,k_1,\ldots,k_n,m_1,\ldots,m_n)=$$
$$=\ffrm\left(\left[\frac{\ffincrx(x,q^n,1,\ffsp)\cdot\prod_{r=1}^{n}\left[
\frac{2^{\ffsa+k_r\ffsp}\dotminus 2^{\ffsa+k_r\ffsp\dotminus
q(k_r\ffsp\dotminus m_r)}}{2^{k_r\ffsp}\dotminus 2^{m_r}}
\right]}{2^{n\cdot\ffsa}}\right],2^{\ffsp}\right).
$$
\begin{stat}\label{statswap}
Let $n,q\geq 1,$ $f(i_1,\ldots,i_n)$ be some function that takes up values
 $0$ and $1.$ Besides, let numbers
$k_1,\ldots,k_n\geq 1$ be such that for any different vectors
$(i_1',\ldots,i_n')$ and $(i_1'',\ldots,i_n'')$ $(0\leq
i_1',\ldots,i_n',i_1'',\ldots,i_n''<q)$ the following inequality
holds
$$
k_1i_1'+\ldots+k_ni_n'\neq k_1i_1''+\ldots+k_ni_n''.
$$
Then if \begin{equation}\label{swapx} x=\Sum_{0\leq
i_1,\ldots,i_n<q}f(i_1,\ldots,i_n)2^{k_1i_1+\ldots+k_ni_n},
\end{equation}
then for any $m_1,\ldots,m_n$ it is true that
$$
\ffswap_n(x,q,k_1,\ldots,k_n,m_1,\ldots,m_n)=\Sum_{0\leq
i_1,\ldots,i_n<q}f(i_1,\ldots,i_n)2^{m_1i_1+\ldots+m_ni_n}.
$$
Besides, $\ffswap_n(x,q,k_1,\ldots,k_n,m_1,\ldots,m_n)$ is
$\mains$-polynomial with respect to $\{x\}$.
\end{stat}
\begin{proof}
From the definition of $\ffsp$ and from the fact that $k_r,q\geq 1,$
it follows that for any $r$ ($1\leq r\leq n$) it holds true that
$k_r\ffsp\geq m_r.$ Besides, from the definition of $\ffsa$ it
follows that for any $r$ ($1\leq r\leq n$) it is true that
$\ffsa\geq qk_r\ffsp.$ From these two inequalities it follows that
$$
\left[ \frac{2^{\ffsa+k_r\ffsp}\dotminus
2^{\ffsa+k_r\ffsp\dotminus q(k_r\ffsp\dotminus
m_r)}}{2^{k_r\ffsp}\dotminus 2^{m_r}} \right] = \left[
\frac{2^{\ffsa+k_r\ffsp}-2^{\ffsa+k_r\ffsp- q(k_r\ffsp-
m_r)}}{2^{k_r\ffsp}- 2^{m_r}} \right]
$$
By using the formula for the sum of a geometric sequence, one obtains
$$
\left[ \frac{2^{\ffsa+k_r\ffsp}\dotminus
2^{\ffsa+k_r\ffsp\dotminus q(k_r\ffsp\dotminus
m_r)}}{2^{k_r\ffsp}\dotminus 2^{m_r}}
\right]=\Sum_{j=0}^{q-1}2^{\ffsa+j(m_r-k_r\ffsp)}.
$$
Thereby, one has
$$
\Prod_{r=1}^{n}\left[ \frac{2^{\ffsa+k_r\ffsp}\dotminus
2^{\ffsa+k_r\ffsp\dotminus q(k_r\ffsp\dotminus
m_r)}}{2^{k_r\ffsp}\dotminus 2^{m_r}} \right] =
\Prod_{r=1}^{n}\Sum_{j=0}^{q-1}2^{\ffsa+j(m_r-k_r\ffsp)}=
$$
$$
=\Sum_{0\leq
j_1,\ldots,j_n<q}2^{n\ffsa+j_1(m_1-k_1\ffsp)+\ldots+j_n(m_n-k_n\ffsp)}.
$$
From the definition of $\ffsp$ and from $q\geq 1,$ it follows that
$\ffsp\geq q^n+1.$ From this and from the fact that $f$ only takes
up values from $\{0,1\},$ and from (\ref{swapx}) and the statement
\ref{statincrx} it follows that
$$
\ffincrx(x,q^n,1,\ffsp)=\Sum_{0\leq
i_1,\ldots,i_n<q}f(i_1,\ldots,i_n)2^{\ffsp\cdot(k_1i_1+\ldots+k_ni_n)}.
$$
Thereby,
$$
\ffincrx(x,q^n,1,\ffsp)\cdot\Prod_{r=1}^{n}\left[
\frac{2^{\ffsa+k_r\ffsp}\dotminus 2^{\ffsa+k_r\ffsp\dotminus
q(k_r\ffsp\dotminus m_r)}}{2^{k_r\ffsp}\dotminus 2^{m_r}} \right]=
$$
$$
= \left(\Sum_{0\leq
i_1,\ldots,i_n<q}f(i_1,\ldots,i_n)2^{\ffsp\cdot(k_1i_1+\ldots+k_ni_n)}\right)
\cdot$$ $$\cdot\left(\Sum_{0\leq
j_1,\ldots,j_n<q}2^{n\ffsa+j_1(m_1-k_1\ffsp)+\ldots+j_n(m_n-k_n\ffsp)}\right)=
$$
$$
=\Sum_{\substack{0\leq i_1,\ldots,i_n<q \\ 0\leq
j_1,\ldots,j_n<q}}f(i_1,\ldots,i_n)2^{n\ffsa+j_1m_1+\ldots+j_nm_n+\ffsp\cdot(k_1(i_1-j_1)+\ldots
+k_n(i_n-j_n))}.
$$
Let one divide all parts of this sum into three groups.
\begin{enumerate}
\item The terms for which
$k_1(i_1-j_1)+\ldots+k_n(i_n-j_n)\leq-1$. The sum of these terms one
defines as $A.$ It is clear that for such terms the following
inequalities hold true
$$
n\ffsa+j_1m_1+\ldots+j_nm_n+\ffsp\cdot(k_1(i_1-j_1)+\ldots
+k_n(i_n-j_n))\leq $$
$$\leq n\ffsa+q\cdot(m_1+\ldots+m_n)-\ffsp\leq n\ffsa-q^{2n}.
$$
The last inequality follows from the definition of $\ffsp.$ From these inequalities it follows that every term of this type is not bigger then $2^{n\ffsa-q^{2n}}$. From this and from the fact that the total number of terms equals $q^{2n},$ one can conclude that
$$
A<2^{n\ffsa}.
$$
\item The terms for which $k_1(i_1-j_1)+\ldots+k_n(i_n-j_n)\geq
1.$ Let the sum of these terms equal to $B.$ It is plain that for
such terms
$$
n\ffsa+j_1m_1+\ldots+j_nm_n+\ffsp\cdot(k_1(i_1-j_1)+\ldots
+k_n(i_n-j_n))\geq n\ffsa+\ffsp.
$$
Thus, each of such terms is divided by $2^{n\ffsa+\ffsp}.$
Thereby, one has
$$
B=2^{n\ffsa+\ffsp}B_0,
$$
where $B_0\in\nat$. \item The terms for which
$k_1(i_1-j_1)+\ldots+k_n(i_n-j_n)=0,$ i.e. $$
k_1i_1+\ldots+k_ni_n=k_1j_1+\ldots+k_nj_n. $$ Let the sum of these terms equal $C.$ As required, in this case $i_r=j_r$ for any $r=1,2,\ldots,n.$ Since for every vector
$(j_1,\ldots,j_n)$ there exists only one vector
$(i_1,\ldots,i_n),$ for which this condition is satisfied, it holds that
$$
C=\Sum_{0\leq
j_1,\ldots,j_n<q}f(j_1,\ldots,j_n)2^{n\ffsa+j_1m_1+\ldots+j_nm_n}=2^{n\ffsa}\cdot
C_0,
$$
where
$$
C_0=\Sum_{0\leq
j_1,\ldots,j_n<q}f(j_1,\ldots,j_n)2^{j_1m_1+\ldots+j_nm_n}.
$$
The number of terms in a sum is $q^n,$ each of which is no bigger
than $2^{q\cdot(m_1+\ldots+m_n)}.$ From this and from the definition
of $\ffsp$ it follows that
$$
C_0\leq q^n\cdot 2^{q\cdot(m_1+\ldots+m_n)}<2^{q^{2n}}\cdot
2^{q\cdot(m_1+\ldots+m_n)}\leq 2^{\ffsp}.
$$
\end{enumerate}
Thereby, one has
$$
\ffswap_n(x,q,k_1,\ldots,k_n,m_1,\ldots,m_n)=
\ffrm\left(\left[\frac{A+2^{n\ffsa+\ffsp}B_0+2^{n\ffsa}C_0}{2^{n\ffsa}}\right],2^{\ffsp}\right)=
C_0.
$$
The last equality follows from the fact that $A<2^{n\ffsa}$ and
$C_0<2^{\ffsp}.$ $\mains$- polynomiality of
$\ffswap_n(x,q,k_1,\ldots,k_n,m_1,\ldots,m_n)$ with respect to
$\{x\}$ follows from the make up of the formula and the statements
\ref{spoly}, \ref{spolyswap}, \ref{statincrx}, \ref{statrm}.
\end{proof}

Let
$$
\ffincr(x,q,l)=\ffswap_1(x,q,1,l),
$$
$$
\ffdecr(x,q,l)=\ffswap_1(x,q,l,1).
$$
\begin{stat}\label{statincr}
Let $q,l\geq 1,$ $0\leq x_0,\ldots,x_{q-1}\leq 1,$
$$
x=\langle x_0,\ldots,x_{q-1};\ 1\rangle.
$$
Then
$$
\ffincr(x,q,l)=\langle x_0,\ldots,x_{q-1};\ l\rangle.
$$
Besides, $\ffincr(x,q,l)$ is $\mains$-polynomial with respect to the
variable $x.$
\end{stat}
\begin{proof}
Let $k_1=1,$ $m_1=l,$ $n=1,$ $f(i)=x_i,$ if $i<q,$ $f(i)=0$
otherwise. Then for the numbers $q,n,k_1,m_1,x$ and the function $f$
all conditions of the statement \ref{statswap} are satisfied. Thus,
$$
\ffswap_1(x,q,1,l)=\Sum_{i=0}^{q-1}2^{lx_i}=\langle
x_0,\ldots,x_{n-1};\ l\rangle.
$$
$\mains$-polynomiality with respect to $\{x\}$ follows from the statememts
\ref{statincr} and \ref{spoly}. The statement is proved.
\end{proof}
\begin{stat}\label{statdecr}
Let $q,l\geq 1,$ $0\leq x_0,\ldots,x_{q-1}\leq 1,$
$$
x=\langle x_0,\ldots,x_{q-1};\ l\rangle.
$$
Then
$$
\ffdecr(x,q,l)=\langle x_0,\ldots,x_{q-1};\ 1\rangle.
$$
Besides, $\ffdecr(x,q,l)$ is $\mains$-polynomial with respect to the
set of variables $\{x\}.$
\end{stat}
\emph{The proof} is completely analogous to the proof of the
statement \ref{statincr}.

Let
$$
\ffnot(x,n)=(2^n\dotminus 1)\dotminus x,
$$
$$
\ffor(x,y,n)=\ffnot(\ffnot(x,n)\wedge\ffnot(y,n),n),
$$
$$
\ffxor(x,y,n)=\ffor(x,y,n)\wedge\ffnot(x\wedge y,n).
$$
\begin{stat}\label{statlogicfunc}
Let $n\geq 1,$
$$x=\langle x_0,\ldots,x_{n-1};\ 1\rangle,\ y=\langle y_0,\ldots,y_{n-1};\ 1\rangle,$$
$$0\leq x_0,\ldots,x_{n-1},y_0,\ldots,y_{n-1}\leq 1.$$
Then
$$
\ffnot(x,n)=\langle 1-x_0,\ldots,1-x_{n-1};\ 1\rangle,
$$
$$
\ffor(x,y,n)=\langle
x_0+y_0-x_0y_0,\ldots,x_{n-1}+y_{n-1}-x_{n-1}y_{n-1};\ 1\rangle,
$$
$$
\ffxor(x,y,n)=\langle
\ffrm(x_0+y_0,2),\ldots,\ffrm(x_{n-1}+y_{n-1},2);\ n\rangle.
$$
Besides, $\ffnot(x,n)$ is $\mains$-polynomial with respect to
$\{x\},$ $\ffor(x,y,n)$ and $\ffxor(x,y,n)$ are $\mains$-polynomial
with respect to $\{x,y\}.$
\end{stat}
\begin{proof}
One has
$$
\ffnot(x,n)=\Sum_{i=0}^{n-1}2^i\dotminus\Sum_{i=0}^{n-1}x_i2^i=
\Sum_{i=0}^{n-1}(1-x_i)2^i=\langle 1-x_0,\ldots,1-x_{n-1};\
1\rangle.
$$
The statements for $\ffor$ and $\ffxor$ follows from the fact that the corresponding equalities of the algebraic logic: $$
\alpha\vee\beta=\neg(\neg\alpha\wedge\neg\beta),
$$
$$
\alpha\oplus\beta=(\alpha\vee\beta)\wedge\neg(\alpha\wedge\beta).
$$
$\mains$-polynomiality follows from the definition and statements
\ref{spoly}, \ref{spolybad}, \ref{spolyswap}. The statement is proved.
\end{proof}
\begin{stat}\label{logicright}
The set of all correct predicates is closed with respect to operations of propositional logic.
\end{stat}
\begin{proof}
Let $\rho(x_1,\ldots,x_n),$ $\varphi(x_1,\ldots,x_n)$ be correct predicates and $f_{\rho},f_{\varphi}$ be their generating functions
$$
\psi_1(x_1,\ldots,x_n)\equiv\neg\rho(x_1,\ldots,x_n),
$$
$$
\psi_2(x_1,\ldots,x_n)\equiv\rho(x_1,\ldots,x_n)\&\varphi(x_1,\ldots,x_n),
$$
$f_{\psi_1},f_{\psi_2}$ are generating functions of the predicates
$\psi_1,\psi_2$ respectively. From the statement \ref{statlogicfunc}
and the definition of generating function, it follows that for any $x\geq
1$ it satisfies
$$
f_{\psi_1}(x)=\ffnot(f_{\rho}(x),x^n),\quad\quad
f_{\psi_2}(x)=f_{\rho}(x)\wedge f_{\varphi}(x).
$$
From this and the statements \ref{statlogicfunc}, \ref{spoly},
\ref{spolyfinite} it follows that
$$
f_{\psi_1},f_{\psi_2}\in\left[\mains\right]_{2^x}.
$$
The statement is proved.
\end{proof}
Let
\begin{normalsize}
$$
\ffcmp(x,y,n,l)=$$
$$=\ffdecr\left(\left[\frac{((\ffrep(2^{2l\dotminus
1},n,2l)+\ffincr(x,nl,2))\dotminus\ffincr(y,nl,2))\wedge
\ffrep(2^{2l\dotminus 1},n,2l)}{2^{2l\dotminus
1}}\right],n,2l\right).
$$
\end{normalsize}
\begin{stat}\label{statcmp}
Let $n,l\geq 1,$
$$
x=\langle x_0,\ldots,x_{n-1};\ l\rangle,\ y=\langle
y_0,\ldots,y_{n-1};\ l\rangle,
$$
$$
0\leq x_0,\ldots,x_{n-1},y_0,\ldots,y_{n-1}<2^l.
$$
Then
$$
\ffcmp(x,y,n,l)=\langle \sigma_0,\ldots,\sigma_{n-1};\ 1\rangle,
$$
where
$$
\sigma_i=\begin{cases}1,\text{ if }x_i\geq y_i,\\ 0\text{
otherwise.}\end{cases}
$$
Besides, $\ffcmp(x,y,n,l)$ is $\mains$-polynomial with respect to
$\{x,y\}.$
\end{stat}
\begin{proof}
Let $x_{i,j}$ signify $j$-th binary digit of the number $x_i,$
$y_{i,j}$ is $j$-th digit of $y_i.$ Then for any $i$ ($0\leq i\leq
n-1$) it satisfies
$$
x_i=\langle x_{i,0},\ldots,x_{i,l-1};\ 1\rangle,\quad y_i=\langle
y_{i,0},\ldots,y_{i,l-1};\ 1\rangle.
$$
Besides,
$$
x=\langle
x_{0,0},\ldots,x_{0,l-1},x_{1,0},\ldots,x_{1,l-1},\ldots,x_{n,0},\ldots,x_{n,l-1};\
1\rangle,
$$
$$
y=\langle
y_{0,0},\ldots,y_{0,l-1},y_{1,0},\ldots,y_{1,l-1},\ldots,y_{n,0},\ldots,y_{n,l-1};\
1\rangle.
$$
From the statement \ref{statincr} and simple properties of numbers it follows that
$$
\ffincr(x,nl,2)=\langle x'_0,\ldots,x'_{n-1};\ 2l\rangle,\quad
\ffincr(y,nl,2)=\langle y'_0,\ldots,y'_{n-1};\ 2l\rangle,
$$
where for all $i$ ($0\leq i\leq n-1$)
$$
x'_i=\langle x_{i,0},\ldots,x_{i,l-1};\ 2\rangle,\quad
y'_i=\langle y_{i,0},\ldots,y_{i,l-1};\ 2\rangle.
$$
From this and the statement \ref{statrep} it follows that
$$(\ffrep(2^{2l\dotminus
1},n,2l)+\ffincr(x,nl,2))\dotminus\ffincr(y,nl,2)=$$ $$=\langle
2^{2l-1}+x'_0-y'_0,\ldots,2^{2l-1}+x'_{n-1}-y'_{n-1};\ 2l\rangle.
$$
One can notice that for any $i$ ($0\leq i<n$) it holds that
$$
0\leq 2^{2l-1}+x'_i-y'_i<2^{2l}.
$$
From this, the statement \ref{statrep}, and simple properties of
binary number notation it follows that
$$
((\ffrep(2^{2l\dotminus
1},n,2l)+\ffincr(x,nl,2))\dotminus\ffincr(y,nl,2))\wedge\ffrep(2^{2l\dotminus
1},n,2l)=
$$
$$
\langle (2^{2l-1}+x'_0\dotminus y'_0)\wedge 2^{2l-1},\ldots,
(2^{2l-1}+x'_{n-1}\dotminus y'_{n-1})\wedge 2^{2l-1};\ 2l\rangle.
$$
It is obvious that for any $i$ ($0\leq i<n$)
$$
(2^{2l-1}+x'_i-y'_i)\wedge 2^{2l-1}=\begin{cases}2^{2l-1},\text{
if }x'_i\geq y'_i, \\ 0\text{ otherwise}.\end{cases}
$$
Besides, for any $i$ ($0\leq i<n$) it is true that
$$
(x_i\geq y_i)\Leftrightarrow (x'_i\geq y'_i).
$$
Thereby,
$$
((\ffrep(2^{2l\dotminus
1},n,2l)+\ffincr(x,nl,2))\dotminus\ffincr(y,nl,2))\wedge\ffrep(2^{2l\dotminus
1},n,2l)=
$$
$$
\langle \sigma_02^{2l-1},\ldots,\sigma_{n-1}2^{2l-1};\ 2l\rangle.
$$
From this and from the statement \ref{statdecr} it follows that
$$
\ffcmp(x,y,n,l)=\langle \sigma_0,\ldots,\sigma_{n-1};\ 1\rangle.
$$
$\mains$- polynomiality follows from \ref{statincr},
\ref{statdecr}, \ref{statrep}, \ref{spoly}, \ref{spolyswap}.
\end{proof}
Let
$$
\ffcmpeq(x,y,n,l)=\ffcmp(x,y,n,l)\wedge\ffcmp(y,x,n,l).
$$
\begin{stat}
\label{statcmpeq} Let $n,l\geq 1,$
$$
x=\langle x_0,\ldots,x_{n-1};\ l\rangle,\quad y=\langle
y_0,\ldots,y_{n-1};\ l\rangle,
$$
$$
0\leq x_0,\ldots,x_{n-1},y_0,\ldots,y_{n-1}<2^l.
$$
Then
$$
\ffcmpeq(x,y,n,l)=\langle \sigma_0,\ldots,\sigma_{n-1};\ 1\rangle,
$$
where
$$
\sigma_i=\begin{cases}1,\text{ if }x_i=y_i,\\ 0\text{
otherwise.}\end{cases}
$$
Besides, $\ffcmpeq(x,y,n,l)$ is $\mains$-polynomial with respect to
$\{x,y\}.$
\end{stat}
\begin{proof}
From the statement \ref{statcmp} it follows that
$$
\ffcmp(x,y,n,l)=\langle \sigma'_0,\ldots,\sigma'_{n-1};\
1\rangle,\quad \ffcmp(y,x,n,l)=\langle
\sigma''_0,\ldots,\sigma''_{n-1};\ 1\rangle,
$$
where
$$
\sigma'_i=\begin{cases}1,\text{ if }x_i\geq y_i,\\ 0\text{
otherwise,}\end{cases} \quad \sigma''_i=\begin{cases}1,\text{ if
}x_i\leq y_i,\\ 0\text{ otherwise.}\end{cases}
$$
From here it follows the first part of the statement that is being
proved. The second part follows from the statement \ref{statcmp}.
\end{proof}
\begin{stat}\label{statsumpolyexp} For any $n\geq 0$ the function
$g_n(y,z)$, which is defined by the following relation
$$
g_n(y,z)=\Sum_{x<y}2^{xz}x^n,
$$
belongs to $\left[\mains\right]_{2^x}.$
\end{stat}
\begin{proof}
This statement follows from known formulas for summation.
\end{proof}
\begin{conseq}
If $r(z_1,\ldots,z_n)$ is a polynomial with natural coefficients, then
$$
\Sum_{0\leq
z_1,\ldots,z_n<x}r(z_1,\ldots,z_n)2^{y(z_1+z_2x+\ldots+z_nx^{n-1})}\in[\mains]_{2^x}.
$$
\end{conseq}
\begin{proof}
Indeed, it is obvious that one needs to consider the case where $r$
is a monomial,
$$
r(z_1,\ldots,z_n)=C\cdot z_1^{m_1}\ldots z_n^{m_n}.
$$
Then
$$
\Sum_{0\leq
z_1,\ldots,z_n<x}r(z_1,\ldots,z_n)2^{y(z_1+z_2x+\ldots+z_nx^{n-1})}=$$
$$=\Sum_{0\leq z_1,\ldots,z_n<x}C\cdot z_1^{m_1}\ldots
z_n^{m_n}2^{y(z_1+z_2x+\ldots+z_nx^{n-1})}=
$$
$$
=C\cdot \left(\Sum_{0\leq z<x}z^{m_1}2^{zy}\right)\cdot
\left(\Sum_{0\leq
z<x}z^{m_2}2^{zyx}\right)\cdot\ldots\cdot\left(\Sum_{0\leq
z<x}z^{m_n}2^{zyx^{n-1}}\right).
$$
Thus, from the statement \ref{statsumpolyexp} it follows the claim
that one was proving.
\end{proof}
\begin{stat}\label{cmppolyright}
Let $p(x_1,\ldots,x_n)$ and $q(x_1,\ldots,x_n)$ be polynomials with
coefficients from $\nat$. Then the predicate
$$
\varphi(x_1,\ldots,x_n)\equiv(p(x_1,\ldots,x_n)\geq
q(x_1,\ldots,x_n))
$$
is the correct one.
\end{stat}
\begin{proof}
For any function $r(z_1,\ldots,z_n)$ one denotes
$$
g_r(x,y)=\Sum_{0\leq
z_1,\ldots,z_n<x}r(z_1,\ldots,z_n)2^{y(z_1+z_2x+\ldots+z_nx^{n-1})}.
$$

From the consequence from the statement \ref{statsumpolyexp} it
follows that $g_p(x,y)\in[\mains]_{2^x}$ and
$g_q(x,y)\in[\mains]_{2^x}.$

Let
$$
f(x)=\ffcmp(g_p(x,p+q+1),g_q(x,p+q+1),x^n,p+q+1),
$$
where $p,q$ are contracted notations for
$p(\underbrace{x,\ldots,x}_{n\text{ times}})$ and
$q(\underbrace{x,\ldots,x}_{n\text{ times}})$ respectively. One can
prove that for any $x\geq 1$ it is true that
$$
f(x)=f_{\varphi}(x),
$$
where $f_{\varphi}$ is the generating function of the predicate
$\varphi.$ Indeed, let $x\geq 1.$ One can notice that
$$
g_p(x,p+q+1)=\langle
p(0,\ldots,0),p(1,0,\ldots,0),\ldots,p(x-1,\ldots,x-1);\
p+q+1\rangle,
$$
$$
g_q(x,p+q+1)=\langle
q(0,\ldots,0),q(1,0,\ldots,0),\ldots,q(x-1,\ldots,x-1);\
p+q+1\rangle
$$
(vectors are ordered in reverse lexicographical order). It is
obvious that for any $z_1,\ldots,z_n$ such that $0\leq
z_1,\ldots,z_n<x,$ it holds true
$$
p(z_1,\ldots,z_n),q(z_1,\ldots,z_n)<2^{p+q+1}.
$$
From this and from the stastement \ref{statcmp} one can conclude that
$$
f(x)=\langle\sigma(0,\ldots,0),\sigma(1,0,\ldots,0),\ldots,\sigma(x-1,\ldots,x-1);\
1\rangle,
$$
where
$$
\sigma(z_1,\ldots,z_n)=\begin{cases}1,\text{ if
}p(z_1,\ldots,z_n)\geq q(z_1,\ldots,z_n), \\ 0\text{
otherwise.}\end{cases}
$$
Thereby, for $x\geq 1$ $f(x)=f_{\varphi}(x).$ From statements
\ref{statcmp}, \ref{spoly}, \ref{spolyfinite} it follows that
$f_{\varphi}(x)\in\left[\mains\right]_{2^x}.$ Thereby,
$\varphi$ is a correct predicate. The statement is proved.
\end{proof}
\begin{conseq}
For polynomials $p$ and $q$ the predicates $p=q,$ $p\neq q,$ $p>q$ are correct.
\end{conseq}
\begin{proof}
Indeed, it follows from the statement \ref{logicright} and relations
$$
(p=q)\equiv(p\geq q)\&(q\geq p),\quad (p\neq q)\equiv\neg(p=q),$$
$$(p>q)\equiv(p\geq q)\&\neg(q\geq p).
$$
\end{proof}
\begin{stat}\label{explicitright}
The set of all correct predicates is closed with respect to explicit
transformations.
\end{stat}
\begin{proof}
It is obvious that to prove the statements one needs to establish
the fact that the set of all correct predicates is closed with
respect to variables permutation, substituting constants instead of
 last variable, identification of last two variables, introduction
of the dummy variable at the last place.
\begin{enumerate}
\item \emph{Permutation of variables.} let
$$
\varphi(x_1,\ldots,x_n)=\psi(x_{i_1},\ldots,x_{i_n}),
$$
where $(i_1,\ldots,i_n)$ is some permutation of numbers
$1,2,\ldots,n,$ $f_{\varphi}(y),f_{\psi}(y)$ are the generating
functions of predicates $\varphi$ and $\psi$ respectively, the
predicate $\psi$ is a correct one. Then
$$
f_{\psi}(y)=\Sum_{0\leq
x_1,\ldots,x_n<y}\chi_{\psi}(x_1,\ldots,x_n)2^{x_1+x_2y+\ldots+x_ny^{n-1}}=$$
$$ = \Sum_{0\leq
x_1,\ldots,x_n<y}\chi_{\psi}(x_1,\ldots,x_n)2^{k_1x_1+\ldots+k_nx_n},
$$
$$
f_{\varphi}(y)=\Sum_{0\leq
x_1,\ldots,x_n<y}\chi_{\psi}(x_{i_1},\ldots,z_{i_n})2^{x_1+x_2y+\ldots+x_ny^{n-1}}
=$$ $$=\Sum_{0\leq
z_1,\ldots,z_n<y}\chi_{\psi}(z_1,\ldots,z_n)2^{m_1z_1+\ldots+m_nz_n},
$$
where
$$
k_i=y^{i-1},\ m_i=y^{j_i-1},\ 1\leq i\leq n,
$$
$(j_1,\ldots,j_n)$ is the inverse permutation to $(i_1,\ldots,i_n),$
$\chi_{\varphi},\,\chi_{\psi}$ are the characteristic functions of
the predicates $\varphi$ and $\psi$ respectively. It is easy to see
that
 for numbers $y,k_1,\ldots,k_n,m_1,\ldots,m_n$ the
 conditions of the statement \ref{statswap} are satisfied.
 From this it follows that
$$
f_{\varphi}(y)=\ffswap_n(f_{\psi}(y),y,k_1,\ldots,k_n,m_1,\ldots,m_n)=
$$
$$
=\ffswap_n(f_{\psi}(y),y,1,y,\ldots,y^{n-1},y^{j_1-1},\ldots,y^{j_n-1}).
$$
From this, from the statements \ref{spoly}, \ref{spolyswap},
\ref{statswap}, and from $f_{\psi}\in\left[\mains\right]_{2^x}$ it
follows that $f_{\varphi}\in\left[\mains\right].$ Thereby, the
predicate $\varphi$ is correct. \item \emph{Substituing of a
constant in the place of the last variable.} Let
$$
\varphi(x_1,\ldots,x_n)=\psi(x_1,\ldots,x_n,a),
$$
where $\psi$ is a correct predicate, $a\in N$ is a constant. Let one assume that
$$
\rho(x_1,\ldots,x_{n+1})\equiv\psi(x_1,\ldots,x_{n+1})\&(x_{n+1}=a).
$$
From the statement \ref{logicright}, the consequenc from the
statement \ref{cmppolyright}, and from the fact that $\psi$ is
correct, it follows that $\rho$ is also correct. Let
$\chi_{\varphi},$ $\chi_{\psi},$ $\chi_{\rho}$ be the characteristic
functions of the predicates $\varphi,$ $\psi,$ $\rho$ respectively,
$f_{\varphi},$ $f_{\psi},$ $f_{\rho}$ are their generating
functions. Then for $y>a$ one has
$$
f_{\rho}(y)=\Sum_{0\leq
x_1,\ldots,x_{n+1}<y}\chi_{\rho}(x_1,\ldots,x_{n+1})2^{x_1+x_2y+\ldots,x_{n+1}y^n}=
$$
$$
\Sum_{0\leq
x_1,\ldots,x_n<y}\chi_{\psi}(x_1,\ldots,x_n,a)2^{x_1+x_2y+\ldots,x_ny^{n-1}+ay^n}=
$$
$$
=2^{ay^n}\cdot\Sum_{0\leq
x_1,\ldots,x_n<y}\chi_{\varphi}(x_1,\ldots,x_n)2^{x_1+x_2y+\ldots,x_ny^{n-1}}=2^{ay^n}\cdot
f_{\varphi}(y).
$$
Thereby, for any $y>a$ it satisfies
$$
f_{\varphi}(y)=\left[\frac{f_{\rho}(y)}{2^{ay^n}}\right].
$$
From here, from $f_{\rho}\in\left[\mains\right]_{2^x}$, and from
statements \ref{spoly}, \ref{spolyfinite} it follows that
$f_{\varphi}\in\left[\mains\right]_{2^x}.$ Therefore, the predicate
$\varphi$ is correct. \item \emph{Identification of the last two
variables.} Let
$$
\varphi(x_1,\ldots,x_n)=\psi(x_1,\ldots,x_n,x_n),
$$
where $\psi$ is a correct predicate. Let one contend that
$$
\rho(x_1,\ldots,x_{n+1})\equiv\psi(x_1,\ldots,x_{n+1})\&(x_n=x_{n+1}).
$$
From the statement \ref{logicright}, consequence of the statement
\ref{cmppolyright} and from the fact that $\psi$ is correct, it follows that $\rho$ is also correct. Let $\chi_{\varphi},$ $\chi_{\rho}$ be characteristic functions of predicates $\varphi,$ $\rho$
respectively, $f_{\varphi},$
 $f_{\rho}$ are their generating functions. Then one has
$$
f_{\rho}(y)=\Sum_{0\leq
x_1,\ldots,x_{n+1}<y}\chi_{\rho}(x_1,\ldots,x_{n+1})2^{x_1+x_2y+\ldots,x_{n+1}y^n}
$$
$$
=\Sum_{0\leq
x_1,\ldots,x_n<y}\chi_{\varphi}(x_1,\ldots,x_n)2^{x_1+x_2y+\ldots+x_{n-1}y^{n-2}+x_n(y^{n-1}+y^n)}
$$
$$
=\Sum_{0\leq
x_1,\ldots,x_n<y}\chi_{\varphi}(x_1,\ldots,x_n)2^{k_1x_1+\ldots+k_nx_n},
$$
where
$$
k_i=y^{i-1},\,1\leq i\leq n-1,\quad k_n=y^{n-1}+y^n.
$$
On the other hand,
$$
f_{\varphi}(y)=\Sum_{0\leq
x_1,\ldots,x_n<y}\chi_{\varphi}(x_1,\ldots,x_n)2^{m_1x_1+\ldots+m_nx_n},
$$
where
$$
m_i=y^{i-1},\ 1\leq i\leq n.
$$
It is easy to check that for numbers
$y,k_1,\ldots,k_n,m_1,\ldots,m_n$ the conditions of the statement
\ref{statswap} are satisfied. Thereby, one obtains
$$
f_{\varphi}(y)=\ffswap_n(f_{\rho}(y),y,k_1,\ldots,k_n,m_1,\ldots,m_n)=
$$
$$
=\ffswap_n(f_{\rho}(y),y,1,y,\ldots,y^{n-2},y^{n-1}+y^n,1,y,\ldots,y^{n-1}).
$$
From this and from the statements \ref{statswap}, \ref{spoly},
\ref{spolyswap} it follows that
$f_{\varphi}\in\left[\mains\right]_{2^x}.$ Thereby, the predicate
$\varphi$ is correct. \item \emph{Introduction of a dummy variable in place of the last variable.} Let
$$
\varphi(x_1,\ldots,x_n,x_{n+1})=\psi(x_1,\ldots,x_n),
$$
$\psi$ is a correct predicate, $\chi_{\varphi},$ $\chi_{\psi}$ are
the characteristic functions of the predicates $\varphi,$ $\psi$
respectively, $f_{\varphi},$
 $f_{\psi}$ are their generating functions. For $y\geq 1$ one has
$$
f_{\varphi}(y)=\Sum_{0\leq
x_1,\ldots,x_{n+1}<y}\chi_{\psi}(x_1,\ldots,x_n)2^{x_1+x_2y+\ldots+x_{n+1}y^n}
$$
$$
=\left(\Sum_{0\leq
x_1,\ldots,x_n<y}\chi_{\psi}(x_1,\ldots,x_n)2^{x_1+x_2y+\ldots+x_ny^{n-1}}\right)\cdot
\left(\Sum_{0\leq x<y}2^{xy^n}\right)
$$
$$
=f_{\psi}(y)\cdot\left[\frac{2^{y^{n+1}\dotminus
1}}{2^{y^n}\dotminus 1}\right].
$$
From this and from statements \ref{spoly} and \ref{spolyfinite} it follows that  $f_{\varphi}\in\left[\mains\right]_{2^x},$ i.e. $\varphi$ is a correct predicate.
\end{enumerate}
The statement is proved.
\end{proof}

Let
$$
\ffsum(x,n,l,k)=\left[\frac{(x\cdot\left[\frac{2^{kl}\dotminus
1}{2^l\dotminus
1}\right])\wedge\ffrep(\ffrep(1,l,1),n,kl)}{2^{(k\dotminus
1)l}}\right].
$$
\begin{stat}\label{statsum}
Let $n,l,k\geq 1,$ $k<2^l,$
$$
x=\langle
x_{0,0},x_{0,1},\ldots,x_{0,k-1},\ldots,x_{n-1,0},x_{n-1,1},\ldots,x_{n-1,k-1};\
l\rangle,
$$
where for all $i,j$ ($0\leq i<n,$ $0\leq j<k$) it satisfies $0\leq
x_{i,j}\leq 1.$ Then
$$
\ffsum(x,n,l,k)=\langle s_0,\ldots,s_{n-1};\ lk\rangle,
$$
where
$$
s_i=\Sum_{j=0}^{k-1}x_{i,j},\quad 0\leq i<n.
$$
Besides, $\ffsum(x,n,l,k)$ is $\mains$-polynomial with respect to
$\{x\}.$
\end{stat}
\begin{proof}
Let one represent $x$ in the following way:
$$
x=\langle y_0,\ldots,y_{kn-1};\ l\rangle,
$$
such that
$$
0\leq y_i\leq 1,\quad 0\leq i<kn.
$$
Then
$$
x\cdot\left[\frac{2^{kl}\dotminus 1}{2^l\dotminus 1}\right] =
\left(\Sum_{0\leq i<kn}y_i2^{il}\right)\cdot \left(\Sum_{0\leq
j<k}2^{jl}\right) = \Sum_{\substack{0\leq i<kn \\ 0\leq
j<k}}y_i2^{(i+j)l}
$$
$$
=\Sum_{p=0}^{k(n+1)-2}\Sum_{\substack{0\leq i<kn \\
0\leq p-i<k}}y_i2^{pl} = \langle z_0,\ldots,z_{k(n+1)-2};\
l\rangle,
$$
where
$$
z_p=\Sum_{\substack{0\leq i<kn \\ 0\leq p-i<k}}y_i\quad\quad
(0\leq p\leq k(n+1)-2).
$$
One can note that the binary notation of the number
$\ffrep(\ffrep(1,l,1),n,kl)$ consists of $n$ blocks with ones, such
that  $r$-th block occupies digits from $l(rk+k-1)$-th up to
$(l(rk+k)-1)$-th ($0\leq r<n$). From this, from the fact that for
any $p$ ($0\leq p\leq k(n+1)-2$) it satisfies $z_p\leq k<2^l,$ and
from the fact that for any $i$ ($0\leq i<n$) it is true that
$s_i=z_{ik+k-1},$ it follows that
$$
\left(x\cdot\left[\frac{2^{kl}\dotminus 1}{2^l\dotminus
1}\right]\right)\wedge\ffrep(\ffrep(1,l,1),n,kl)
$$
$$
=\langle \underbrace{0,\ldots,0}_{k-1\text{
times}},z_{k-1},\underbrace{0,\ldots,0}_{k-1\text{ times}},
z_{2k-1},\ldots, \underbrace{0,\ldots,0}_{k-1\text{ times}},
z_{nk-1};\ l\rangle
$$
$$
=\langle \underbrace{0,\ldots,0}_{k-1\text{
times}},s_0,\underbrace{0,\ldots,0}_{k-1\text{ times}}, s_1,\ldots,
\underbrace{0,\ldots,0}_{k-1\text{ times}}, s_{n-1};\ l\rangle
$$
$$
=2^{(k-1)l}\cdot\langle s_0,\ldots,s_{n-1};\ kl\rangle.
$$
From this it follows that
$$
\ffsum(x,n,l,k)=\langle s_0,\ldots,s_{n-1};\ kl\rangle.
$$
$\mains$-polynomiality with respect to $\{x\}$ follows from the statements
\ref{statrep}, \ref{spoly}, \ref{spolyswap}. The statement is proved.
\end{proof}

\begin{stat}\label{countright}
The set of all correct predicates is closed with respect to the
operation of counting.
\end{stat}
\begin{proof}
Due to statement \ref{explicitright} it is sufficient to prove the
closeness in terms of counting with respect to the first variable.
Let $\varphi(x_1,\ldots,x_n,y)$ be obtained from
$\psi(x_1,\ldots,x_n)$ by counting operation with respect to the
variable $x_1$ and the polynomial $p(x_1,\ldots,x_n),$ $\psi$ be a
correct polynomial. Let one introduce the predicate $\rho$ in the
following way:
$$
\rho(x,x_1,\ldots,x_n,y)\equiv\psi(x,x_2,\ldots,x_n)\&(x<p(x_1,\ldots,x_n)).
$$
From the correctness of $\psi,$ statements \ref{logicright},
\ref{explicitright}, and a consequence from the statement
\ref{cmppolyright} it follows that $\rho$ is a correct predicate.
Let
$$
q(z)=p(\underbrace{z,\ldots,z}_{n\text{ раз}})+z+1.
$$
One can assume that
$$
f_{\rho}'(z)=\ffincr(f_{\rho}(q(z)),q(z)^{n+2},q(z)).
$$
Let $z\geq 1.$ From statements \ref{statincr}, \ref{spoly}
it follows that $f_{\rho}'\in\left[\mains\right]_{2^x}.$ Besides,
$$
f_{\rho}(q(z))=\langle
\chi_{\rho}(0,\ldots,0),\chi_{\rho}(1,0,\ldots,0), \ldots,
\chi_{\rho}(q(z)-1,\ldots,q(z)-1);\ 1\rangle.
$$
Thereby, from the statement \ref{statincr} it follows that
$$
f_{\rho}'(z)=\langle
\chi_{\rho}(0,\ldots,0),\chi_{\rho}(1,0,\ldots,0), \ldots,
\chi_{\rho}(q(z)-1,\ldots,q(z)-1);\ q(z)\rangle.
$$
One can assume that
$$
u(z)=\ffsum(f_{\rho}'(z),q(z)^{n+1},q(z),q(z)).
$$
From the statement \ref{statsum} it follows that
\begin{equation}\label{statcountu}
u(z)=\langle g(0,\ldots,0,z),
g(1,0,\ldots,0,z),\ldots,g(q(z)-1,\ldots,q(z)-1,z);\
q(z)^2\rangle,
\end{equation}
 where $g(x_1,\ldots,x_n,y,z)$ is the number of
 $x<q(z)$ such that $\rho(x,x_1,\ldots,x_n,y)$ holds true.
Besides, from the statements \ref{statsum} and \ref{spoly} it follows that
$u\in\left[\mains\right]_{2^x}.$ Let
\begin{equation}\label{statcountv}
v(z)=\langle h(0,\ldots,0),
h(1,0,\ldots,0),\ldots,h(q(z)-1,\ldots,q(z)-1);\ q(z)^2\rangle,
\end{equation} where $h(x_1,\ldots,x_n,y)=y$ for any $x_1,\ldots,x_n,y\in\nat.$ It is obvious that
$$
v(z)=\left(\Sum_{y=0}^{q(z)-1}y2^{yq(z)^{n-2}}\right)\cdot
\left(\Sum_{i=0}^{q(z)^n-1}2^{q(z)^2i}\right).
$$
From the statement \ref{statsumpolyexp} it follows that
$v(z)\in\left[\mains\right]_{2^x}.$ From (\ref{statcountu}),
(\ref{statcountv}), and the statement \ref{statcmpeq} it follows
that
\begin{equation}\label{statcountcmpeq}
\begin{array}{c}
\ffcmpeq(u(z),v(z),q(z)^{n+1},q(z)^2)= \\ =\langle
\sigma(0,\ldots,0,z),\sigma(1,\ldots,0,z),\ldots,
\sigma(q(z)-1,\ldots,q(z)-1,z) ;\ 1\rangle,
\end{array}
\end{equation}
where
$$
\sigma(x_1,\ldots,x_n,y,z)=\begin{cases} 1,\text{if $y$ is the number of $x<q(z)$ such that} \\ \quad \quad \text{$\rho(x,x_1,\ldots,x_n,y)$ holds true,} \\ 0\text{ otherwise.}
\end{cases}
$$
Let
$$
w(z)=\Sum_{0\leq
i_1,\ldots,i_{n+1}<z}2^{i_1+i_2q(z)+\ldots+i_{n+1}q(z)^n}=
\prod_{j=1}^{n+1}\left(\Sum_{i=0}^{z-1}2^{iq(z)^{j-1}}\right).
$$
From the statement \ref{statsumpolyexp} it follows that
$w\in\left[\mains\right].$ From the fact that $q(z)>z,$ it follows
that
$$
w(z)=\langle
\xi(0,\ldots,0,z),\xi(1,\ldots,0,z),\ldots,\xi(q(z)-1,\ldots,q(z)-1,z);
\ 1\rangle,
$$
where for all $x_1,\ldots,x_n,y,z\in\nat$ it satisfies
$$
\xi(x_1,\ldots,x_n,y,z)=\begin{cases} 1,\text{ if
}(x_1<z)\&\ldots\&(x_n<z)\&(y<z)\text{ is true}, \\ 0\text{
otherwise.}\end{cases}
$$
From this and (\ref{statcountcmpeq}) it follows that
$$
\ffcmpeq(u(z),v(z),q(z)^{n+1},q(z)^2)\wedge w(z)
$$
$$
=\langle \sigma'(0,\ldots,0,z),\sigma'(1,\ldots,0,z),\ldots,
\sigma'(q(z)-1,\ldots,q(z)-1,z) ;\ 1\rangle,
$$
where
$$
\sigma'(x_1,\ldots,x_n,y,z)=\begin{cases} 1,\text{if
$(x_1<z)\&\ldots\&(x_n<z)\&(y<z)$ is satisfied and} \\ \quad\quad \text{$y$ is the number $x<q(z)$ such that} \\
\quad \quad \text{
$\rho(x,x_1,\ldots,x_n,y)$ holds true,} \\
0\text{ otherwise.}
\end{cases}
$$
From this it follows that
$$
\ffcmpeq(u(z),v(z),q(z)^{n+1},q(z)^2)\wedge w(z)
$$
$$
=\Sum_{0\leq
x_1,\ldots,x_n,y<z}\sigma(x_1,\ldots,x_n,y,z)2^{x_1+x_2q(z)+\ldots+x_nq(z)^{n-1}+yq(z)^n}.
$$
From this, from the fact that $q(z)>z,$ and from the statement
\ref{statswap} it follows that
\begin{equation}\label{statcountmain}
\begin{array}{c}
\ffswap_{n+1}(\ffcmpeq(u(z),v(z),q(z)^{n+1},q(z)^2)\wedge
w(z), \\ z,1,q(z),\ldots,q(z)^n,1,z,\ldots,z^n)= \\
=\Sum_{0\leq
x_1,\ldots,x_n,y<z}\sigma(x_1,\ldots,x_n,y,z)2^{x_1+x_2z+\ldots+x_nz^{n-1}+yz^n}=
\\ =\langle \sigma(0,\ldots,0,z),\sigma(1,\ldots,0,z),\ldots,
\sigma(z-1,\ldots,z-1,z) ;\ 1\rangle.
\end{array}
\end{equation}
It is obvious that for any $x_1,\ldots,x_n<z$ it holds true that
$p(x_1,\ldots,x_n)<q(z).$ From this and from the definitions of
$\sigma$ and $\rho$ it follows that for any $x_1,\ldots,x_n,y,z$
such that $x_1,\ldots,x_n<z,$ it is true that
$$
\sigma(x_1,\ldots,x_n,y,z)=\begin{cases} 1,\text{ if $y$ is the
number  $x<p(x_1,\ldots,x_n)$ such that} \\ \quad\quad \text{
$\psi(x,x_2,\ldots,x_n)$ is true,} \\ 0 \text{
otherwise.}\end{cases}
$$
From this and (\ref{statcountmain}) it follows that for any $z\geq
1$ it satisfies
$$
f_{\varphi}(z)=\ffswap_{n+1}(\ffcmpeq(u(z),v(z),q(z)^{n+1},q(z)^2)\wedge
w(z),$$ $$z,1,q(z),\ldots,q(z)^n,1,z,\ldots,z^n).
$$
Thus, from inclusions $u(z),v(z),w(z)\in\left[\mains\right]$ and from statements \ref{statswap}, \ref{statcmpeq}, \ref{spolyfinite},
\ref{spoly}, \ref{spolyswap} it follows that
$f_{\varphi}\in\left[\mains\right].$ The statement is proved.
\end{proof}
\begin{stat}\label{cccrright}
Any predicate from $\cccr$ is correct.
\end{stat}
\begin{proof}Indeed, according to the consequence from the statement \ref{cmppolyright}
the predicates $x+y=z$ and $xy=z$ are correct. From here and from
the statements \ref{explicitright}, \ref{logicright} and
\ref{countright} it follows that any predicate from $\cccr$ is
correct. The statement is proved.
\end{proof}
\begin{theorem}\label{mytheorem}There is an inclusion
$$
\ccxs\subseteq\left[\mains\right]_{2^x}.
$$
\end{theorem}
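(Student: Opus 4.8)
The plan is to recover each $f\in\ccxs$ from the generating function of its bit predicate, and to read that value off by an elementary bit-extraction carried out inside $[\mains]_{2^x}$. Take $f(x_1,\ldots,x_n)\in\ccxs$ and consider the $(n+1)$-ary predicate $\rho(\tilde x,y)\equiv(f(\tilde x)\langle y\rangle=1)$. By the definition of $\ccxs$ we have $\rho\in\ccclh_*$, so by Statement \ref{statscolemincccr} $\rho\in\cccr$, and by Statement \ref{cccrright} $\rho$ is correct. Hence there is a generating function $f_\rho\in[\mains]_{2^x}$ with, for every $w\ge 1$,
\[
f_\rho(w)=\Sum_{0\le x_1,\ldots,x_n,y<w}\chi_\rho(\tilde x,y)\,2^{x_1+x_2w+\cdots+x_nw^{n-1}+yw^n},
\]
so that bit number $x_1+x_2w+\cdots+x_nw^{n-1}+yw^n$ of $f_\rho(w)$ equals $f(\tilde x)\langle y\rangle$ whenever all of $x_1,\ldots,x_n,y$ are $<w$.

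Next I would fix a bound and extract. Let $p$ be the polynomial with $f(\tilde x)<2^{p(\tilde x)}$, and set $w=x_1+\cdots+x_n+p(\tilde x)+1$ and $c=x_1+x_2w+\cdots+x_nw^{n-1}$; both are polynomials with natural coefficients, hence lie in $[\mains]$ (addition is available since $x+y=(x+1)(y+1)\dotminus(xy+1)$, whence all such polynomials are in $[\mains]$). For this $w$ every nonzero bit of $f(\tilde x)$ has index $<w$, and the bits belonging to our tuple occupy the positions $c+yw^n$ for $0\le y<w$. Dividing $f_\rho(w)$ by $2^c$ brings these to positions $yw^n$; masking with $\ffrep(1,w,w^n)=\Sum_{y=0}^{w-1}2^{yw^n}$ keeps exactly them; and compressing the spacing from $w^n$ down to $1$ by $\ffdecr$ gives
\[
f(\tilde x)=\ffdecr\!\left(\left[\frac{f_\rho(w)}{2^{c}}\right]\wedge\ffrep(1,w,w^n),\;w,\;w^n\right).
\]
Correctness of this identity I would check against Statements \ref{statrep} and \ref{statdecr}: the surviving bit at position $yw^n$ is $\chi_\rho(\tilde x,y)=f(\tilde x)\langle y\rangle$, and bits of index $\ge w$ are zero, so the output is $\Sum_{y<w}f(\tilde x)\langle y\rangle 2^y=f(\tilde x)$.

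Finally I would verify that the right-hand side lies in $[\mains]_{2^x}$. The evaluation $f_\rho(w)$ stays in $[\mains]_{2^x}$ because substituting the $[\mains]$-function $w$ into the single argument of $f_\rho$ does not leave the class: every function of $[\mains]_{2^x}$ is $\mains$-polynomial with respect to the empty set of variables, by a short induction on its construction (the base case $[\mains]$, the rule building $2^h$ from $h\in[\mains]$, and the rule applying a function of $\mains$ to earlier members all respect substitution of $[\mains]$-functions). Given this, $2^c\in[\mains]_{2^x}$ since $c\in[\mains]$; the floor of the quotient and the conjunction are applications of the base functions $[x/y]$ and $x\wedge y$ of $\mains$ to members of $[\mains]_{2^x}$; and $\ffrep(1,w,w^n)$ together with the outer $\ffdecr$ are absorbed by their $\mains$-polynomiality (with respect to the first argument) from Statements \ref{statrep} and \ref{statdecr}, combined with the composition Statements \ref{spoly}, \ref{spolyswap} and \ref{spolyfinite}. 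This puts $f$ in $[\mains]_{2^x}$.

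The reductions up to correctness are routine given the earlier statements; the real care goes into two places. First, the position-and-mask bookkeeping must make the extracted word exactly the bits of $f(\tilde x)$, spaced by $w^n$, with no contamination from the other tuples packed into $f_\rho(w)$ --- this is where the choice of $w$ and the uniqueness of base-$w$ digits are used. Second, and this is the point most easily overlooked, one must be sure that inserting a polynomial bound into $f_\rho$ does not raise the exponential height, i.e. that $[\mains]_{2^x}$ absorbs substitution of $[\mains]$-functions into all arguments; this is the delicate spot, since $[\mains]_{2^x}$ is by design not closed under unrestricted superposition.
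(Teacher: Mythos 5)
Your proof follows the paper's own argument essentially verbatim: pass from the bit predicate of $f$ through $\ccclh_*\subseteq\cccr$ to correctness of that predicate, then recover $f$ from its generating function by dividing by $2^{x_1+x_2w+\cdots+x_nw^{n-1}}$ and compressing with $\ffdecr$, checking membership in $\left[\mains\right]_{2^x}$ via the $\mains$-polynomiality statements. The only differences are that you take the substituted argument $w$ large enough to dominate the bit-length of $f$ and insert the explicit mask $\ffrep(1,w,w^n)$ before applying $\ffdecr$ --- two points at which your write-up is, if anything, more careful than the paper's, which substitutes $z=x_1+\cdots+x_n+1$ and applies $\ffdecr$ to the unmasked quotient.
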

\begin{proof}
Let $f(x_1,\ldots,x_n)\in\ccxs.$ Then $$
g(x_1,\ldots,x_n,y)=f(x_1,\ldots,x_n)\langle y\rangle\in\ccclh.
$$
From the statement \ref{statscolemincccr} it follows that
$g(x_1,\ldots,x_n,y)$ is the characteristic function of a predicate
$\psi\in\cccr.$ From the statement \ref{cccrright} it follows that
 $\psi$ is the correct predicate. From the definition of the generating function it follows that
$$
f_{\psi}(z)=\Sum_{0\leq
x_1,\ldots,x_n,y<z}g(x_1,\ldots,x_n,y)2^{x_1+x_2z+\ldots+x_nz^{n-1}+yz^n}.
$$
Thereby, for any $x_1,\ldots,x_n,y,z$ such that $x_1,\ldots,x_n,y<z$
and $z\geq 1,$ $y$-th binary digit of the number $f(x_1,\ldots,x_n)$
equals to the binary digit of the number $f_{\psi}(z)$ with the
number $x_1+x_2z+\ldots+x_nz^{n-1}+yz^n.$ If additionally the length
of the binary notation of $f(x_1,\ldots,x_n)$ does not exceed $t,$
then from the statement \ref{statdecr} it follows that
$$
f(x_1,\ldots,x_n)=\ffdecr\left(\left[\frac{f_{\psi}(z)}{2^{x_1+x_2z+\ldots+x_nz^{n-1}}}\right],tz^n,z^n\right).
$$
By plugging in $z$ the expression $x_1+\ldots+x_n+1,$ and instead of $t$
a polynomial $t(x_1,\ldots,x_n)$ such that for any $x_1,\ldots,x_n$
it holds that
$$
f(x_1,\ldots,x_n)<2^{t(x_1,\ldots,x_n)},
$$
one obtains an expression for $f.$ From the correctness of the
predicate $\psi$and from statements \ref{statdecr} and \ref{spoly}
it follows that $f\in\left[\mains\right]_{2^x}.$ The statement is
proved.
\end{proof}

\subsection{Proof of Theorem  \ref{theorem_exp_skolem}}
There is the following inclusion
$$
[\mains]_{2^x}\subseteq[\mains]_{x^y}.
$$
Besides, from the theorem \ref{statburjui} it follows that
$[\mains]_{x^y}\subseteq\ccxs,$ and from the theorem \ref{mytheorem}
it follows that $\ccxs\subseteq[\mains]_{2^x}.$ Thereby,
$$
[\mains]_{x^y}\subseteq\ccxs\subseteq[\mains]_{2^x}\subseteq[\mains]_{x^y}.
$$
Thus
$$
\ccxs=[\mains]_{2^x}=[\mains]_{x^y}.
$$
The main theorem is proved.

\section{Basis by Superposition of $\ccffom$}\refstepcounter{sectionref}\label{section_ffom}

\subsection{Definitions}

If $\alpha$ is the line of symbols, $n$ is a number, then put
$$
\ffext(\alpha,n)=\alpha\underbrace{00\ldots 0}_{n\dotminus
|\alpha|\text{ нулей}}.
$$

If $x_1,\ldots,x_n,k\in\nat$, $k\geq 1$, then let $$
\ffcodexd_k(x_1,\ldots,x_n) =
\ffext(\ffcode(x_1,\ldots,x_n),2|\ffcode(x_1,\ldots,x_n)|^k).$$
Besides, if additionally
$$
y<2^{|\ffcode(x_1,\ldots,x_n)|^k},
$$
then by $\ffcodey_k(x_1,\ldots,x_n;\ y)$ one can denote the line
$\alpha_1\beta_1\alpha_2\beta_2\ldots\alpha_l\beta_l$, where
$\alpha_1\ldots\alpha_l=\ffext(\ffcode(x_1,\ldots,x_n),|\ffcode(x_1,\ldots,x_n)|^k)$,
$\beta_l\ldots\beta_2\beta_1$ is the binary notation of $y$ (as
completed by zeroes from the left, if needs be, that is $\beta_1$ is
the lowest order digit, $\beta_2$ follows it, etc.).

The class $\ccffomalt$ can be defined as the set of everywhere defined over the set $\nat$ functions $f(x_1,\ldots,x_n)$ such that the following two conditions are satisfied

\begin{enumerate}
\item There exists a $k$ such that for any $x_1,\ldots,x_n$
it satisfies \begin{equation}\label{eq_codek}
\fflen(f(x_1,\ldots,x_n))\leq|\ffcode(x_1,\ldots,x_n)|^k.
\end{equation}
\item For any $k\geq 1$ that satisfies the condition above there exists a predicate $\rho\in \ccfom$
such that for any $x_1,\ldots,x_n,y$
from $y<|\ffcodexd_k(x_1,\ldots,x_n)|$ it follows that
$$
\rho(\ffcodey_k(x_1,\ldots,x_n;\
y))\equiv(f(x_1,\ldots,x_n)\langle y\rangle=1).
$$
\end{enumerate}

Let $\ccffomvar$ be the set of all everywhere defined over the set
$\nat$ functions $f(x_1,\ldots,x_n)$, for which the following two
conditions are satisfied.
\begin{enumerate}
\item There exists a $k$ such that for any $x_1,\ldots,x_n$ it holds that  (\ref{eq_codek}).
\item For any $k\geq 1$ that satisfies the condition above there
exists a $\ccfom$-formula $\Phi$ over the variables
$z_1,\ldots,z_m,y$, with a corresponding predicate
$\rho_{\Phi}(X,z_1,\ldots,z_m,y)$, where
$$
X\in\{0,1\}^+,\quad 1\leq x_1,\ldots,z_m,y\leq |X|,
$$ such that for any $x_1,\ldots,x_n,y,z_1,\ldots,z_m\in\nat$ from
$$
1\leq z_1,\ldots,z_m,y\leq |\ffcodexd_k(x_1,\ldots,x_n)|
$$ it follows that
$$
\rho_{\Phi}(\ffcodexd_k(x_1,\ldots,x_n),z_1,\ldots,z_m,y)\equiv(f(x_1,\ldots,x_n)\langle
y-1\rangle=1).
$$
\end{enumerate}

Let
$$\mainffoms = \{x+y,\quad x\dotminus y,\quad x\wedge
y,\quad \left[x/y\right],\quad 2^{[\log_2 x]^2}\}.$$

If $Q$ is some class of functions, then by $Q^{\lllog}$ one can
denote the set of all functions of the form $[\log_2 f]$, where
$f\in Q$.

\Def{The function $h(X,y_1,\ldots,y_m)$, which is defined for
$X\in\{0,1\}^+$, $1\leq y_1,\ldots,y_n\leq |X|$ and takes values
from $\nat$, is called \emph{$\tth$-function} if for any
 $X,y_1,\ldots,y_m$ from its domain it satisfies
$$
h(X,y_1,\ldots,y_m)<2^{|X|}.
$$}

\Def{$\tth$-function $h(X,y_1,\ldots,y_m)$ is called \emph{correct}
if there exists a function $f(x,z)\in[\mainffoms]$ such that for any
$X$, $y_1,\ldots,y_m$ from the domain of $h$ it holds that
$$
f(c(X),2^{|X|})=\sum_{1\leq y_1,\ldots,y_m\leq
|X|}(2^{(y_1-1)|X|+(y_2-1)|X|^2+\ldots+(y_m-1)|X|^m}h(X,y_1,\ldots,y_m)),
$$
where $c(X)$ is the number the binary notation of which (perhaps as
completed by zeroes from the left) is $X$ (for example if $X=00101$,
then $c(X)=5$).}

\Def{The predicate $\rho(X,y_1,\ldots,y_m)$ that is defined for
$X\in\{0,1\}^+$, $1\leq y_1,\ldots,y_m\leq |X|$ is called
\emph{correct}\footnote{this definition differs from the one used it
the section \ref{section_exp_skolem}}, if its characteristic
function $\chi_{\rho}(X,y_1,\ldots,y_m)$ is a correct
$\tth$-function.}

Let
$$
\langle x_0,\ldots,x_{n-1};\ l\rangle=\Sum_{i=0}^{n-1}x_i2^{il}.
$$
One can note that if the condition $x_0,x_1,\ldots,x_{n-1}<2^l$ is
satisfied for any $i$ ($0\leq i<n$) then the binary digits of the
number $\langle x_0,x_1,\ldots,x_{n-1};\ l\rangle$ from $(il)$-th up
to $(il+l-1)$-th generate the binary notation of the number $x_i.$

\subsection{Coincidence of classes $\ccffom$, $\ccffomalt$ and $\ccffomvar$}

\begin{stat}\label{stat_ffom_ffomalt}
$\ccffom=\ccffomalt$.
\end{stat}
\begin{proof}
It is easy to notice that the definitions of classes $\ccffom$ and
$\ccffomalt$ differ only with respect to how one encodes number
arrays by strings of symbols, such that the equivalency of these
encodings is obvious (see the equivalent definitions of the class
$\ccfom$ from ~\cite{uniformity}; for example based on the sequence
of boolean circuits that are generated by a Turing machine).
\end{proof}

\begin{stat}\label{stat_ffomalt_ffomvar}
$\ccffomalt=\ccffomvar$.
\end{stat}
\begin{proof}
The inclusion $\ccffomvar\subseteq\ccffomalt$ can be proved
analogously to the statement
 \ref{stat_ffom_ffomalt}. One can prove the inclusion
$\ccffomalt\subseteq\ccffomvar$.

Let $f\in\ccffomalt$, $k$ is a number that satisfies
(\ref{eq_codek}) for all $x_1,\ldots,x_n$. Besides, let
$\rho\in\ccfom$ be the predicate from the definition of $\ccffomalt$
for $f$ and $k$, $\Phi$ is a $\ccfom$-formula over the variables
$z_1,\ldots,z_m$ from the definition of $\ccfom$ for $\rho$.

By $\Psi$ one can denote the $\ccfom$-formula over variables
$z_1,\ldots,z_m,u,v,w,y$, that one obtains from $\Phi$ by
substituting every subformula of the form
 $X\langle t\rangle$, where $t$
is a $\ccfom$-term, to
\begin{equation}\label{eq_ffomvar_subst} \exists u(t=2u\ \&\ \llbit(y,u))\vee \exists v\exists
w(w=2v\ \&\ w=t+1\ \&\ X\langle v\rangle),
\end{equation}
with this, the auxiliary sub-formulas are being replaced based on equalities
$$(x>y)\equiv(x\geq y)\&\neg(y\geq x),$$
$$(y=x+1)\equiv y>x\ \&\ \neg\exists u(y>u\ \&\ u>x),$$
$$(y=2x)\equiv \forall u\forall
v(u=v+1\rightarrow(\llbit(y,u)\leftrightarrow\llbit(x,v)))
$$ $$\&\ \neg\llbit(y,1)\ \&\ \neg\llbit(x,|X|).$$ Let
$\psi(X,z_1,\ldots,z_m,u,v,w,y)$ be the corresponding to the formula
$\Psi$ predicate.

One can note that for any $x_1,\ldots,x_n,y$ such that $1\leq y\leq
|\ffcodexd_k(x_1,\ldots,x_n)|$, it holds that
$$
|\ffcodey(x_1,\ldots,x_n;\ y-1)|=|\ffcodexd(x_1,\ldots,x_n)|,
$$
i.e. when calculating
$$\rho(\ffcodey_k(x_1,\ldots,x_n;\ y-1))$$
and
$$\psi(\ffcodexd_k(x_1,\ldots,x_n),z_1,\ldots,z_m,u,v,w,y)$$
the quantifiers will have the same variable ranges.

Based on this and the fact that the expression
(\ref{eq_ffomvar_subst}) exhibits the needed re-coding (obviously),
one can easily see that for all arrays $(x_1,\ldots,x_n,y)$ such
that $1\leq y\leq |\ffcodexd(x_1,\ldots,x_n)|$, the following
statement holds true: for any $u,v,w$ it is true that
$$
\psi(\ffcodexd(x_1,\ldots,x_n),z_1,\ldots,z_m,u,v,w,y)\equiv(f(x_1,\ldots,x_n)\langle
y-1\rangle=1).
$$
Thereby, $f\in\ccffomvar$. The statement is proved.
\end{proof}

\subsection{Overview of Some Functions That Belong to the Class
$[\mainffoms]$}

\begin{stat}
All constants as well as functions $\ffsg(x)$, $\ffrm(x,y)$, $xy$
belong to $[\mainffoms]$.
\end{stat}
\begin{proof}
Let $f(x)=2^{[\log_2 x]^2}$. One can note that $0=x\dotminus x$,
$1=f(0)$, the remaining constants can be obtained from these with the help of the function $x+y$.

It is obvious that $\ffsg(x)=1\dotminus (1\dotminus x)$,
$\ffrm(x,y)=(x\dotminus[x/y]\cdot y)\cdot\ffsg(y)$.

Let $g(x)=f(x+x)$. One can note that for all $x$ it is true that
$g(x)\geq x^2$, thus $g(g(x+y))\geq(x+y)^4>2x^2y^2\geq
x^2y^2+xy$ for all $x,y\geq 1$. One can prove that
$$
xy=\left[\frac{g(g(x+y))}{[g(g(x+y))/x]/y}\right].
$$
For $xy=0$ it is obvious. If $x,y\geq 1$, then it follows from the fact that $[A/x]/y=[A/(xy)]$, where $A=g(g(x+y))$, and the chain of relationships
$$
xy\leq\frac{A}{\left[\frac{A}{xy}\right]}<\frac{A}{\frac{A}{xy}-1}<xy+1,
$$
where the last inequality follows from the fact that $A>x^2y^2+xy$.
The statement is proved.
\end{proof}

Let
$$
\ffssqrt(y)=y\dotminus\left(\left[
\frac{\left(\left[\frac{f(y)^4\dotminus 1}{y^2\dotminus
1}\right]\wedge(f(y)\dotminus
1)\right)\cdot\left[\frac{\left[f(y)^4/2\right]\dotminus
1}{\left[y^2/2\right]\dotminus 1}\right]}{\left[2f(y)/y^3\right]}
\right]\wedge(y\dotminus 1)\right),
$$
where $f(y)=2^{[\log_2 y]^2}$. One can note that
$\ffssqrt\in[\mainffoms]$.

\begin{stat}\label{stat_ssqrt}For any $x$ it holds that
$$
\ffssqrt(2^{2x})=2^x.
$$
\end{stat}
\begin{proof}
If $x=0$, then the statement is obvious. Let $x\geq 1$. One can assume that
$y=2^{2x}$. Then $f(y)=2^{4x^2}$.

By using the geometric progression sum formula, one obtains
$$
A=\frac{f(y)^4\dotminus 1}{y^2\dotminus
1}=\frac{2^{16x^2}-1}{2^{4x}-1}=\sum_{0\leq i\leq 4x-1}2^{4xi}.
$$

Analogously, one obtains
$$
B=\left[\frac{\left[f(y)^4/2\right]\dotminus
1}{\left[y^2/2\right]\dotminus
1}\right]=\frac{2^{16x^2-1}-1}{2^{4x-1}-1}=\sum_{0\leq i\leq
4x}2^{(4x-1)i}.
$$

One can note that the binary notation of the number $f(y)\dotminus 1$ is
actually $4x^2$ of consecutive ones, one obtains that
$$
C=A\wedge(f(y)\dotminus 1)=\sum_{0\leq i\leq x-1}2^{4xi}.
$$

Thereby,
$$
B\cdot C=\left(\sum_{0\leq i\leq
4x}2^{(4x-1)i}\right)\cdot\left(\sum_{0\leq i\leq
x-1}2^{4xi}\right)=\sum_{\substack{0\leq i\leq 4x \\ 0\leq j\leq
x-1}}2^{(4x-1)(i+j)+j}.
$$

One can note that in the last sum all powers are different, thus the
ones in the binary notation $B\cdot C$ stay on positions of type
$(4x-1)(i+j)+j$ ($0\leq i\leq 4x$, $0\leq j\leq x-1$) and only in
those.

It is obvious that $D=[BC/[2f(y)/y^3]]\wedge(y\dotminus 1)$ is a
number the binary notation of which is obtained from the binary
notation of $BC$ with a shift to the right by $(4x-1)(x-1)-x$ digit
places and removal of all digit places but the $2x$ lowest order
ones. From this it follows that $D=2^{2x}-2^x$ or $y-D=2^x$. The
statement is proved.
\end{proof}

\begin{stat}\label{stat_expmullog_mainffoms}
It holds that
$$
2^{[\log_2 x]},\ 2^{[\log_2 x]\cdot [\log_2 y]},\ [\log_2
x]\in[\mainffoms].
$$
\end{stat}
\begin{proof}
Let $f(x)=2^{[\log_2 x]^2}$. Then using the statement
\ref{stat_ssqrt} it is easy to obtain that
$$
2^{[\log_2
x]}=\ffssqrt\left(\left[\frac{f(2x)}{2f(x)}\right]\right)\cdot\ffsg(x)+(1\dotminus\ffsg(x)),
$$
$$
2^{[\log_2 x]\cdot [\log_2
y]}=\ffssqrt\left(\left[\frac{f(2^{[\log_2 x]}\cdot 2^{[\log_2
y]})}{f(x)\cdot f(y)}\right]\right).
$$
From these formulas follows the statement that one proved for the first two functions.

Let $l=[\log_2 x]$. One can note that
$$
\left(\frac{2{l^2}\dotminus 1}{2^l\dotminus 1}\right)^2 =
\left(\sum_{i=0}^{l-1}2^{il}\right)^2 =
\sum_{i=0}^{l-1}2^{il}(i+1)+\sum_{i=l}^{2(l-1)}2^{il}(2l-1-i).
$$
By considering the binary notation of the obtained number, one arrives at the following result
$$
l=\ffrm\left(\left[\frac{2{l^2}\dotminus 1}{2^l\dotminus
1}\right]^2/\left[\frac{2^{l^2}}{2^l}\right],2^l\right).
$$
From this and from the fact that $2^l,2^{l^2}\in[\mainffoms]$
follows the statement that one was trying to prove.
\end{proof}

\subsection{The Correctness of Predicates that Correspond to $\ccfom$-formulas}

\begin{stat}\label{stat_mainffoms_log}
If $f_1,f_2\in[\mainffoms]^{\lllog}$, then $f_1+f_2,\ f_1\dotminus
f_2,\ f_1\cdot f_2\in[\mainffoms]^{\lllog}$. Besides, if
$f\in[\mainffoms]^{\lllog}$, then $f,2^f\in[\mainffoms]$.
\end{stat}
\begin{proof}
Indeed, if $f_1=[\log_2 g_1]$, $f_2=[\log_2 g_2]$, where
$g_1,g_2\in[\mainffoms]$, then
$$
f_1+f_2=[\log_2(2^{[\log_2 g_1]}\cdot 2^{[\log_2 g_2]})],
$$
$$
f_1\dotminus f_2=[\log_2[2^{[\log_2 g_1]}/2^{[\log_2 g_2]}]],
$$
$$
f_1\cdot f_2=[\log_2 2^{[\log_2 g_1][\log_2 g_2]}],
$$
Minding the statement \ref{stat_expmullog_mainffoms} these functions are in $[\mainffoms]^{\lllog}$.

The last part of the statement follows directly from the statement
\ref{stat_expmullog_mainffoms}. The statement is proved.
\end{proof}

\begin{stat}\label{stat_exppolysum}
If
$$
g(y,z)=\sum_{x<y}2^{xz}x,
$$
then $g([\log_2 y],[\log_2 z])\in[\mainffoms]$.
\end{stat}
\begin{proof}
This follows from well known formulas of summation and the statement
\ref{stat_expmullog_mainffoms}.
\end{proof}

\begin{stat}\label{stat_term_right}
If $t$ is a $\ccfom$-term, then its corresponding function
$h_t(X,y_1,\ldots,y_m)$ is correct together with the function $2^{h_t-1}$.
\end{stat}
\begin{proof}
There are three possible cases.
\begin{itemize}
\item $h_t(X,y_1,\ldots,y_m)=1$. In this case for it and for
$2^{h_t-1}$ the following function is appropriate (as the function
from the definition of correctness)
$$
f(x,z)=\sum_{1\leq y_1,\ldots,y_m\leq
l}2^{(y_1-1)l+(y_2-1)l^2+\ldots+(y_m-1)l^m}=\left[\frac{2^{l^{m+1}}\dotminus
1}{2^l\dotminus 1}\right],
$$
here and further in the proof of this statement $l$ is a contracted notation for $[\log_2 z]$. From the statement
\ref{stat_expmullog_mainffoms} it follows that
$f(x,z)\in[\mainffoms]$. \item $h_t(X,y_1,\ldots,y_m)=|X|$. In this case one can use the function
$$
f(x,z)=\sum_{1\leq y_1,\ldots,y_m\leq
l}2^{(y_1-1)l+(y_2-1)l^2+\ldots+(y_m-1)l^m}l=\left[\frac{2^{l^{m+1}}\dotminus
1}{2^l\dotminus 1}\right]\cdot l,
$$
for $2^{h_t-1}$ the suitable function is
$$
f'(x,z)=\left[\frac{2^{l^{m+1}}\dotminus 1}{2^l\dotminus
1}\right]\cdot \left[\frac{2^l}{2}\right].
$$
Further reasoning is analogous to the previous point. \item
 $h_t(X,y_1,\ldots,y_m)=y_i$. For $h_t$ the suitable function is
$$
f(x,z)=\sum_{1\leq y_1,\ldots,y_m\leq
l}2^{(y_1-1)l+\ldots+(y_m-1)l^m}y_i=$$
$$
=\left(\sum_{y_1=1}^{l}2^{(y_1-1)l}\right)
\cdot\ldots\cdot\left(\sum_{y_{i-1}=1}^{l}2^{(y_{i-1}-1)l^{i-1}}\right)\cdot\left(\sum_{y_i=1}^{l}2^{(y_i-1)l^i}y_i
\right)\cdot $$ $$\cdot \left(\sum_{y_{i+1}=1}^{l}
2^{(y_{i+1}-1)l^{i+1}}\right)\cdot \ldots \cdot
\left(\sum_{y_m=1}^{l}2^{(y_m-1)l^m}\right),
$$
with this for $2^{h_t-1}$ a suitable function is $f'(x,z)$ that can
be expressed by an analogous formula with a substitution of the
$i$-th factor by $\sum_{y_i=1}^{l}2^{(y_i-1)l^i+(y_i-1)}$. From the
statements \ref{stat_exppolysum}, \ref{stat_expmullog_mainffoms} and
the formula of the geometric progression sum, it follows that
$f,f'\in[\mainffoms]$.
\end{itemize}
The statement is proved.
\end{proof}

\begin{stat}\label{stat_log_into_functions}
The following functions are in $[\mainffoms]$ (definitions can be looked up in the section \ref{section_exp_skolem}):
$$
\ffrep(x,[\log_2 n],[\log_2 l]),
$$
$$
\ffincrx(x,[\log_2 n],[\log_2 l_1],[\log_2 l_2]),
$$
$$
\ffswap_n(x,[\log_2 q],[\log_2 k_1],\ldots,[\log_2 k_n],[\log_2
m_1],\ldots,[\log_2 m_n]),
$$
$$
\ffincr(x,[\log_2 q],[\log_2 l]),
$$
$$
\ffdecr(x,[\log_2 q],[\log_2 l]),
$$
$$
\ffnot(x,[\log_2 n]),
$$
$$
\ffor(x,y,[\log_2 n]),
$$
$$
\ffcmp(x,y,[\log_2 n],[\log_2 l]),
$$
$$
\ffcmpeq(x,y,[\log_2 n],[\log_2 l]),
$$
$$
\ffsum(x,[\log_2 n],[\log_2 l],[\log_2 k]).
$$
\end{stat}
\begin{proof}
It follows from the formulas for these functions (see section
\ref{section_exp_skolem}) and statement \ref{stat_mainffoms_log}.
\end{proof}

One can assume that
$$
\ffreverse(x,n)=\ffdecr\left(\left[\frac{\ffrep(x,n,n)\wedge
\left[\frac{2^{n^2\dotminus 1}\dotminus 2^{n\dotminus
1}}{2^{n\dotminus 1}\dotminus 1}\right]}{2^{n\dotminus
1}}\right],n,n\dotminus 1\right).
$$
One can notice that $\ffreverse(x,[\log_2 n])\in[\mainffoms]$ (it follows from the statements \ref{stat_mainffoms_log} and
\ref{stat_log_into_functions}).
\begin{stat}\label{stat_reverse}
If $a_0\ldots a_{n-1}$ is a binary notation of $x$, then
$a_{n-1}\ldots a_0$ is a binary notation of $\ffreverse(x,n)$ (the
binary notation can have any number of zeroes on the right).
\end{stat}
\begin{proof}
One can contend that (statement \ref{statrep}):
$$
\ffrep(x,n,n)=\langle
\underbrace{a_{n-1},\ldots,a_0,\ldots,a_{n-1},\ldots,a_0}_{\text{$n$
times}};\ 1\rangle,
$$
using the geometric progression sum formula
$$
\left[\frac{2^{n^2\dotminus 1}\dotminus 2^{n\dotminus
1}}{2^{n\dotminus 1}\dotminus
1}\right]=\sum_{i=0}^{n-1}2^{(n-1)(i+1)}.
$$
From this it follows that
$$
\left[\left(\ffrep(x,n,n)\wedge \left[\frac{2^{n^2\dotminus
1}\dotminus 2^{n\dotminus 1}}{2^{n\dotminus 1}\dotminus
1}\right]\right)/2^{n\dotminus 1}\right]=\langle
a_0,a_1,\ldots,a_{n-1};\ n-1\rangle.
$$
Based on this and the claim \ref{statdecr} one obtains the claim that one was trying to prove.
\end{proof}

\begin{stat}\label{stat_elementary_right}
If $\Phi$ is an elementary $\ccfom$-formula, then its corresponding predicate  $\rho_{\Phi}(X,y_1,\ldots,y_m)$ is correct.
\end{stat}
\begin{proof}
There can be three cases.
\begin{itemize}\item $\Phi$ is $t_1\leq t_2$, where $t_1$, $t_2$
are $\ccfom$-terms. Let those terms correspond to $\tth$-functions
$h_1(X,y_1,\ldots,y_m)$ and $h_2(X,y_1,\ldots,y_m)$ respectively.
From the statement \ref{stat_term_right} it follows that the functions
$h_1,h_2$ are correct. Let the functions
$f_1(x,z),f_2(x,z)\in[\mainffoms]$ correspond to them. Let
$$
f(x,z)=\ffcmp(f_2(x,z),f_1(x,z),[\log_2 z]^m,[\log_2 z]).
$$
From the statements \ref{stat_mainffoms_log} and
\ref{stat_log_into_functions} it follows that $f\in[\mainffoms]$.

Besides, for any $X\in\{0,1\}^+$, $l=|X|$ if $x$ is a number in binary notation $X$, then it holds that
$$
f_i(x,2^l)=\langle
h_i(1,1,\ldots,1),h_i(2,1,\ldots,1),\ldots,h_i(l,\ldots,l);\
l\rangle,
$$
$i=1,2$ (vectors in reverse lexicographical order), thus (the
statement \ref{statcmp}) it holds true that
$$
\ffcmp(f_2(x,2^l),f_1(x,2^l),l^m,l)$$
$$=\langle\sigma(1,1,\ldots,1),\sigma(2,1,\ldots,1),\ldots,\sigma(l,\ldots,l);\
l\rangle,
$$
where
$$
\sigma(y_1,\ldots,y_m)=\begin{cases}1,\text{ if
$h_1(X,y_1,\ldots,y_m)\leq h_2(X,y_1,\ldots,y_m)$,} \\
0\text{ otherwise.}\end{cases}
$$
From this it follows that $f(x,z)$ complies with the definition of
correctness for a predicate $(h_1\leq h_2)$. \item $\Phi$ is
$\llbit(t_1,t_2)$. Let the function $h_i(X,y_1,\ldots,y_m)$
correspond to a term $t_i$ ($i=1,2$), $f_1(x,z)$, $f_2(x,z)$ are the
functions from the definitions of correctness for $h_1$ and
$2^{h_2-1}$ respectively. Put
$$
f'(x,z)=\ffcmpeq(f_1(x,z)\wedge f_2(x,z),0,[\log_2 z]^m,[\log_2
z]).
$$
Analogously to the previous point one obtains the fact that when
satisfying similar conditions over $X\in\{0,1\}^+$ and $l,x\in\nat$
the following takes place
$$
f'(x,2^l)=\langle\sigma(1,1,\ldots,1),\sigma(2,1,\ldots,1),\ldots,\sigma(l,\ldots,l);\
l\rangle,
$$
where
$$
\sigma(y_1,\ldots,y_m)=\begin{cases}1,\text{ if
$h_1(X,y_1,\ldots,y_m)\wedge 2^{h_2(X,y_1,\ldots,y_m)-1}=0$,} \\
0\text{ otherwise.}\end{cases}
$$
It is easy to notice that $\sigma(y_1,\ldots,y_m)=1$ if and only if
$$h_1(X,y_1,\ldots,y_m)\langle
h_2(X,y_1,\ldots,y_m)-1\rangle=0.$$ Consequently,
$$
f(x,z)=\ffrep(1,[\log_2 z]^m,[\log_2 z])\dotminus f'(x,z)
$$
fits the definition of correctness for $\rho_{\Phi}$
($f\in[\mainffoms]$ follows from statements
\ref{stat_mainffoms_log}, \ref{stat_log_into_functions}). \item
$\Phi$ is $X\langle t\rangle$, term $t$ corresponds to
$h_t(X,y_1,\ldots,y_m)$, $f_0(x,z)$ is a function from the definition of correctness for $2^{h_t-1}$. Function $f(x,z)$ is defined analogously to the previous point with a substitution $f_1(x,z)$ to
$$\ffrep(\ffreverse(x,[\log_2 z]),[\log_2 z]^m,[\log_2 z]),$$
$f_2(x,z)$ to $f_0(x,z)$. The fact that $f(x,z)$ fits the definition of correctness for $\rho_{\Phi}$, follows from the statements
\ref{statrep}, \ref{stat_reverse} and reasoning analogous to the previous point.
\end{itemize}
The statement is proved.
\end{proof}

\begin{stat}\label{stat_permvar_right}
Let $h(X,y_1,\ldots,y_m)$ be a correct $\tth$-function,
$h'(X,y_1,\ldots,y_m)$ that one obtains from $h$ by permuting variables
$y_1,\ldots,y_m$. Then $h'$ is also a correct function.
\end{stat}
\begin{proof}
Let
$$
h(X,y_1,\ldots,y_m)=h'(X,y_{j_1},\ldots,y_{j_m}),
$$
where $(j_1,\ldots,j_m)$ is some permutation of numbers
$1,\ldots,m$. Besides, let $f(x,z)\in[\mainffoms]$ be a function from the definition of correctness for
$h$,
$$
f'(x,z)=\ffswap(f(x,z),l,l,l^2,\ldots,l^m,l^{j_1},l^{j_2},\ldots,l^{j_m}),
$$
where $l=[\log_2 z]$. One will prove that $f'(x,z)$ complies with
the definition of correctness for $h'$. Let $X\in\{0,1\}^+$, $x$ be
a number in binary notation $X$, $l=|X|$. Thus,
\begin{normalsize}
$$
f'(x,2^l)=$$ $$=\ffswap\left(\sum_{1\leq y_1,\ldots,y_m\leq
l}2^{(y_1-1)l+\ldots+(y_m-1)l^m}h(X,y_1,\ldots,y_m),\ l,\
l,l^2,\ldots,l^m,l^{j_1},\ldots,l^{j_m}\right)
$$
$$
=\sum_{1\leq y_1,\ldots,y_m\leq
l}2^{(y_1-1)l^{j_1}+\ldots+(y_m-1)l^{j_m}}h(X,y_1,\ldots,y_m)$$
$$=\sum_{1\leq u_1,\ldots,u_m\leq
l}2^{(u_1-1)l+\ldots+(u_m-1)l^m}h'(X,u_1,\ldots,u_m),
$$
\end{normalsize}
the second equality follows from the statement \ref{statswap}. From statements
 \ref{stat_mainffoms_log} and
\ref{stat_log_into_functions} it follows that $f'\in[\mainffoms]$.
The statement is proved.
\end{proof}

\begin{stat}\label{stat_formula_right}
If $\Phi$ is a $\ccfom$-formula, then its corresponding predicate
$\rho_{\Phi}(X,y_1,\ldots,y_m)$ is correct.
\end{stat}
\begin{proof}
One can prove this by induction on construction of a formula. Let
this formula $\Phi$ have a corresponding predicate
$\varphi(X,y_1,\ldots,y_m)$. There are the following cases.
\begin{itemize}
\item $\Phi$ is an elementary $\ccfom$-formula. Then this follows from the statement \ref{stat_elementary_right}. \item $\Phi$ is
$(\qqm y_i)(\Psi)$, to formula $\Psi$ it corresponds the correct
predicate $\psi(X,y_1,\ldots,y_m)$, $f_{\psi}(x,z)$ is the function
from the definition of correctness for $\psi$ (more specifically,
from the definition of correctness from the characteristic function
of $\psi$). Minding statement \ref{stat_permvar_right} one can
suppose that $i=1$. One can assume that
$$
g(x,z)=\ffsum(f_{\psi}(x,z),[\log_2 z]^{m-1},[\log_2 z],[\log_2
z]),
$$
$$
p(x,z)=\ffrep([[\log_2 z]/2]+1,[\log_2 z]^{m-1},[\log_2 z]^2),
$$
$$
r(x,z)=\ffcmp(g(x,z),p(x,z),[\log_2 z]^{m-1},[\log_2 z]^2),
$$
$$
f_{\varphi}(x,z)=\left[\frac{2^{[\log_2 z]^2}\dotminus
1}{2^{[\log_2 z]}\dotminus 1}\right]\cdot r(x,z).
$$
One can prove that $f_{\varphi}$ fits the definition of correctness
for $\varphi$. Let $X\in\{0,1\}^+$, $l=|X|$, $x$ is the number with
binary notation $X$ (perhaps completed with zeroes from the left).
From the statement \ref{statsum} it follows that
$$
g(x,2^l)=\langle
s(1,1,\ldots,1),s(2,1,\ldots,1),\ldots,s(l,l,\ldots,l);\
l^2\rangle,
$$
where $s(y_2,\ldots,y_m)$ is the number of $y_1$ such that $1\leq
y_1\leq l$ and $\psi(X,y_1,\ldots,y_m)$ hold true. From this and the statements \ref{statrep}, \ref{statcmp} it follows that
$$
r(x,2^l)=\langle
v(1,1,\ldots,1),v(2,1,\ldots,1),\ldots,v(l,l,\ldots,l);\
l^2\rangle,
$$
where
$$
v(y_2,\ldots,y_m)=\begin{cases}1,\text{ if $s(y_2,\ldots,y_m)>l/2$,}
\\ 0\text{ otherwise}.\end{cases}
$$
Using the geometric progression sum formula, one obtains
$$
f_{\varphi}(x,2^l)=$$ $$=\left(\sum_{y_1=1}^{l}2^{(y_1-1)
l}\right)\cdot\left(\sum_{1\leq y_2,\ldots,y_m\leq
l}2^{(y_2-1)l^2+\ldots+(y_m-1)l^m}v(y_2,\ldots,y_m)\right)
$$
$$
=\sum_{1\leq y_1,y_2,\ldots,y_m\leq
l}2^{(y_1-1)l+\ldots+(y_m-1)l^m}\chi_{\varphi}(X,y_1,y_2,\ldots,y_m).
$$
$f_{\varphi}\in[\mainffoms]$ follows from the statements
\ref{stat_mainffoms_log} and \ref{stat_log_into_functions}. \item
$\Phi$ looks like $(\exists y_i)(\Psi)$ or $(\forall y_i)(\Psi)$. It
is considered analogously to the previous point. \item $\Phi$ looks
like $(\Psi_1\vee\Psi_2)$. Let $\psi_i$ be the correct predicate
that corresponds to $\Psi_i$, $f_i(x,z)$ is the function from the
definition of correctness for $\psi_i$ ($i=1,2$). It is easy to
notice that
$$
f(x,z)=\ffor(f_1(x,z),f_2(x,z),[\log_2 z]^{m+1})
$$
complies with the definition of correctness for $\varphi$ (see statements
\ref{stat_mainffoms_log},
\ref{stat_log_into_functions}, \ref{statlogicfunc}). \item $\Phi$
looks like $(\Psi_1\&\Psi_2)$ or $(\neg\Psi)$. Should have treatment analogous to the previous point, except instead of the function $\ffor$ one needs to use
 $x\wedge y$ or $\ffnot$ respectively.
\end{itemize}
The statement is proved.
\end{proof}

\subsection{Proof of Theorem \ref{theorem_main_ffom}}

\begin{stat}\label{stat_code_mainffoms}
For any $n,k\geq 1$ the functions $\ffintcode_{n,k}$ and
$\ffintlcode_{n,k}$, such that for any $x_1,\ldots,x_n$ it satisfies
$$
\ffintlcode_{n,k}(x_1,\ldots,x_n)=|\ffcodexd_k(x_1,\ldots,x_n)|
$$
and the binary notation of $\ffintcode_{n,k}(x_1,\ldots,x_n)$
(possibly completed by zeroes from the left) is
$\ffcodexd_k(x_1,\ldots,x_n)$, belong to $[\mainffoms]$ and
$[\mainffoms]^{\lllog}$ respectively.
\end{stat}
\begin{proof}
From $\ffsg(x)=[\log_2(2\ffsg(x))]$, (\ref{eq_len}) and the statement
\ref{stat_mainffoms_log} it follows that
$\fflen(x)\in[\mainffoms]^{\lllog}$.

Based on this and the statement \ref{stat_mainffoms_log} and by noticing that
$$
\ffintlcode_{n,k}(x_1,\ldots,x_n)=2^{k+1}(\fflen(x_1)+\ldots+\fflen(x_n)+n+1)^k,
$$
one obtains that $\ffintlcode_{n,k}\in[\mainffoms]^{\lllog}$.

Let $f(x)=3\cdot \ffincr(x,[\log_2 x]+1,2)$. One can notice that for
any $x$ it is true that if the binary notation of $x$ is
$a_1,\ldots,a_m$, then the binary notation of $f(x)$ is $a_1 a_1 a_2
a_2 \ldots a_m a_m$ (this follows from the statement
\ref{statincr}).

Further, one can notice that
$$
\ffintcode_{n,k}(x_1,\ldots,x_n)=\sum_{i=1}^{n}(2^{l\dotminus(2i+l_1+\ldots+l_{i-1})}+2^{l\dotminus(2i+l_1+\ldots+l_i)}f(x_i))+1,
$$
where $l$ is a contracted notation for
$\ffintlcode_{n,k}(x_1,\ldots,x_n)$ and $l_i$ is for
$2\cdot\fflen(x_i)$ ($1\leq i\leq n$).

From the statements \ref{stat_mainffoms_log} and
\ref{stat_log_into_functions} it follows that
$\ffintcode_{n,k}\in[\mainffoms]$.
\end{proof}

\begin{stat}\label{stat_ffomvar_mainffoms}$\ccffomvar\subseteq[\mainffoms]$.
\end{stat}
\begin{proof}
Let $f(x_1,\ldots,x_n)\in\ccffomvar$, $k$
 is the number from the definition $\ccffomvar$ for $f$, $\rho(X,z_1,\ldots,z_m,y)$
is the predicate from the definition that corresponds to the number
$k$. From the statement \ref{stat_formula_right} it follows that
$\rho$ is correct. Let $g(x,z)$ be the function from the definition
of correctness for $\rho$ (specifically, from the definition of
correctness for the characteristic function of $\rho$).

Let one assume that
$$
r(\tilde{x})=\left[\frac{g(\ffintcode_{n,k}(\tilde{x}),2^l)}{[(2^{l^{m+1}}\dotminus
1)/(2^l\dotminus 1)]}\right],
$$
where $l$ is a contraction for $\ffintlcode_{n,k}(x_1,\ldots,x_n)$,
the functions $\ffintcode_{n,k}$ and $\ffintlcode_{n,k}$ were taken
from the statement \ref{stat_code_mainffoms}.

One can notice that (see definitions of $\ccffomvar$ and the one
concerning correctness)
$$
g(\ffintcode_{n,k}(\tilde{x}),2^l)=$$ $$=\sum_{1\leq
z_1,\ldots,z_m,y\leq
l}2^{(z_1-1)l+\ldots+(z_m-1)l^m+(y-1)l^{m+1}}\chi_{\rho}(\ffcodexd_k(\tilde{x}),z_1,\ldots,z_m,y)=
$$
$$
=\sum_{1\leq z_1,\ldots,z_m,y\leq
l}2^{(z_1-1)l+\ldots+(z_m-1)l^m+(y-1)l^{m+1}}f(\tilde{x})\langle
y-1\rangle =$$ $$=\left[\frac{2^{l^{m+1}}\dotminus 1}{2^l\dotminus
1}\right]\cdot \sum_{y=0}^{l-1}2^{l^{m+1}y}f(\tilde{x})\langle
y\rangle.
$$
Consequently,
$$
r(\tilde{x})=\sum_{y=0}^{l-1}2^{l^{m+1}y}f(\tilde{x})\langle
y\rangle=\langle f(\tilde{x})\langle
0\rangle,\ldots,f(\tilde{x})\langle l-1 \rangle;\ l^{m+1}\rangle,
$$
that is
$$
\ffdecr(r(\tilde{x}),l,l^{m+1})=\langle f(\tilde{x})\langle
0\rangle,\ldots,f(\tilde{x})\langle l-1 \rangle;\
1\rangle=f(\tilde{x}).
$$
From the statements \ref{stat_mainffoms_log},
\ref{stat_log_into_functions} and \ref{stat_code_mainffoms} it follows that
 $f(\tilde{x})\in[\mainffoms]$. The statement is proved.
\end{proof}
Based on equivalent definitions of class $\ccfom$ from
~\cite{uniformity} one can easily obtain the following statement.
\begin{stat}\label{stat_ffom_closed}
$\ccffom$ is closed with respect to superposition.
\end{stat}
\noindent \textbf{Proof of theorem \ref{theorem_main_ffom}}.
Based on statements \ref{stat_ffom_ffomalt} and
\ref{stat_ffomalt_ffomvar} one obtains that $\ccffom=\ccffomvar$.
Besides from the statements \ref{statarythmffom} and
\ref{stat_ffom_closed} it follows that
$$
\left[x+y,\quad x\dotminus y,\quad x\wedge y,\quad
\left[x/y\right],\quad x^{[\log_2 y]} \right]\subseteq\ccffom.
$$
One can note that $2^{[\log_2 x]^2}$=$((x^{[\log_2 (x\dotminus
x)]}+x^{[\log_2 (x\dotminus x)]})^{[\log_2 x]})^{[\log_2 x]}$. Based on all of that and the statement
 \ref{stat_ffomvar_mainffoms},
one has the following consequence
$$
\ccffom=\ccffomvar\subseteq[\mainffoms]\subseteq $$
$$\subseteq\left[x+y,\quad x\dotminus y,\quad x\wedge y,\quad
\left[x/y\right],\quad x^{[\log_2 y]} \right]\subseteq\ccffom.
$$
Theorem is proved.

\section{Hierarchies of Classes that are Exhaustive with Regard to Kalmar Elementary Functions and Formulas of an Arbitrary Height}\refstepcounter{sectionref}\label{section_hierarchy}

\subsection{Definitions}

Let one define the class $\cckhsplus^n$ as a set of all functions
$f(\tilde{x}),$ for each of which there exist functions
$g(\tilde{x},y)\in\ccxs$ and $m(\tilde{x})\in\ccexppoly^n$ such that
$$
f(\tilde{x})=g(\tilde{x},m(\tilde{x})).
$$

\subsection{Coincidence of Classes $\cckhs^n$ and $\cckhsplus^n$}

\begin{stat}\label{statmedin}
For any functions
$f_1(\tilde{x}),\ldots,f_k(\tilde{x})\in\cckhs^n$ there exist functions $m(\tilde{x})\in\ccexppoly^n$ and
$g_1(\tilde{x},y,z),\ldots,g_k(\tilde{x},y,z)\in\ccclh$ such that
$$
f_i(\tilde{x})\langle y\rangle = g_i(\tilde{x},y,m(\tilde{x})),
$$
$$
f_i(\tilde{x})<2^{m(\tilde{x})}
$$
for any $\tilde{x},$ $y,$ $i=1,2,\ldots,k,$ the functions $g_i$ only
takes values $0$ and $1$, and $g_i(\tilde{x},y,z)=0$ for $y\geq z.$
\end{stat}
\begin{proof}
From the definitions of $\ccexppoly^{n+1}$ and $\cckhs^n$ it follows
that there exist functions $g'_i(\tilde{x},y,z)\in\ccclh,$
$m_1^i(\tilde{x}),m_2^i(\tilde{x})\in\ccexppoly^n$ such that
$f_i(\tilde{x})\langle y\rangle=g'_i(\tilde{x},y,m_1^i(\tilde{x})),$
$f_i(\tilde{x})<2^{m_2^i(\tilde{x})}$ ($i=1,\ldots,k$). Let one pick
a function $m(\tilde{x})\in\ccexppoly^n,$ dominating $m_1^i$ and
$m_2^i$ for all $i.$ Let one assume
$$
g_i(\tilde{x},y,z)=\min(g'_i(\tilde{x},y,\min(m_1^i(\tilde{x}),z)),1)\cdot\chi_{y<z}(y,z),\quad
i=1,\ldots,k.
$$
It is obvious that $g_i\in\ccclh$ and functions $g_i,$ $m$ satisfy the conditions. The statement is proved.
\end{proof}

\begin{stat}$\cckhs^n\subseteq\cckhsplus^n$ for any $n.$\end{stat}
\begin{proof}
Let $f(\tilde{x})\in\cckhs^n.$ One can prove that
$f(\tilde{x})\in\cckhsplus^n.$ According to the statement
\ref{statmedin} one can choose functions $m(\tilde{x})$ and
$g(\tilde{x},y,z).$

Let one assume
$$
h(\tilde{x},z)=\Sum_{y=0}^{z-1} g(\tilde{x},y,z)\cdot 2^y.
$$
From the fact that $g(\tilde{x},y,z)=0$ for $y\geq z,$ it follows that it is true that
$$
h(\tilde{x},z)\langle y\rangle = g(\tilde{x},y,z).
$$
Besides, $h(\tilde{x},z)<2^z,$ thus $h\in\ccxs.$ From the fact that
$g(\tilde{x},y,m(\tilde{x}))=f(\tilde{x})\langle y\rangle$ for any
$\tilde{x}$ and $y$, it follows that  there is
$$
h(\tilde{x},m(\tilde{x}))=f(\tilde{x}).
$$
The statement is proved.
\end{proof}
\begin{stat}$\cckhsplus^n\subseteq\cckhs^n$ for any $n.$\end{stat}
\begin{proof}
Let $f(\tilde{x})=g(\tilde{x},m(\tilde{x})),$ $g\in\ccxs,$
$m\in\ccexppoly^n.$ One has
$$
f(\tilde{x})\langle y\rangle = g(\tilde{x},m(\tilde{x}))\langle
y\rangle,
$$
$g(\tilde{x},z)\langle y\rangle\in\ccclh$ (by definition of
$\ccxs$).

The boundedness of $f$ by some function from $\ccexppoly^{n+1}$ is
obvious. The statement is proved.
\end{proof}
\begin{conseq}
$\cckhs^n=\cckhsplus^n$ for any $n.$
\end{conseq}

\subsection{Proof of Theorem \ref{theorem_main_h}}

\begin{stat}\label{stattwopowerxsn}
If $f\in\cckhs^n,$ then $2^f\in\cckhs^{n+1}.$
\end{stat}
\begin{proof}
Let one choose according to the statement \ref{statmedin} functions $m(\tilde{x})$ and
$g(\tilde{x},y,z)$ for $f(\tilde{x}).$ It is obvious that
$$
2^{f(\tilde{x})}\langle y\rangle=\chi_{\rho}(\tilde{x},y),
$$
where
$$
\rho(\tilde{x},y)\equiv(y=f(\tilde{x}))\equiv(\forall t)(y\langle
t\rangle=f(\tilde{x})\langle t\rangle)\equiv(\forall t)(y\langle
t\rangle=g(\tilde{x},t,m(\tilde{x})))\equiv
$$
$$
\equiv(y<2^{m(\tilde{x})})\&(\forall t)_{t<m(\tilde{x})}(y\langle
t\rangle=g(\tilde{x},t,m(\tilde{x}))).
$$
The last equality follows from the fact that
$f(\tilde{x})<2^{m(\tilde{x})}$ and from the fact that
$g(\tilde{x},y,z)=0$ for $y\geq z.$ Let
$$
\varphi(\tilde{x},y,z)\equiv(y<z)\&(\forall t)_{t<[\log_2
z]}(y\langle t\rangle=g(\tilde{x},t,[\log_2 z])).
$$
It is obvious that $\varphi\in\ccclh_*$ and
$\rho(\tilde{x},y)\equiv\varphi(\tilde{x},y,2^{m(\tilde{x})}).$ From this it follows that
$$
2^{f(\tilde{x})}\langle
y\rangle=\chi_{\varphi}(\tilde{x},y,2^{m(\tilde{x})}),
$$
$\chi_{\varphi}\in\ccclh.$ The boundedness of $2^f$ by some function
from $\ccexppoly^{n+2}$ is obvious. The statement is proved.
\end{proof}

\begin{stat}
$[\mains]_{x^y}^{n+1}\subseteq\cckhs^n$ for any $n.$
\end{stat}
\begin{proof}
It is obvious that to prove this one needs to prove the following statements:
\begin{enumerate}
\item If $f(\tilde{x})\in\cckhs^n,$ then
$f(\tilde{x})\in\cckhs^{n+1}.$ Let one choose for $f(\tilde{x})$ the
functions $m(\tilde{x}),$ $g(\tilde{x},y,z)$ according to the
statement \ref{statmedin}. One gets
$$
f(\tilde{x})\langle
y\rangle=g(\tilde{x},y,m(\tilde{x}))=h(\tilde{x},y,2^{m(\tilde{x})}),
$$
where
$$
h(\tilde{x},y,z)=g(\tilde{x},y,[\log_2 z]).
$$
It is obvious that $h\in\ccclh.$

Compliance with the restrictions on the speed of growth is obvious.
 \item If $f\in\cckhs^n,$ $g$ is obtained from $f$ by
permutation, identification of variables or introduction of dummy variables, then $g\in\cckhs^n.$ This is obviously satisfied. \item
If $h(y_1,\ldots,y_k)\in\mains$ and
$f_1(\tilde{x}),\ldots,f_k(\tilde{x})\in\cckhs^n,$ then
$h(f_1(\tilde{x}),\ldots,f_k(\tilde{x}))\in\cckhs^n.$ From the fact that
$\cckhs^n=\cckhsplus^n$ it follows that
$f_1,\ldots,f_k\in\cckhsplus^n.$ I.e.
$$
f_i(\tilde{x})=g_i(\tilde{x},m(\tilde{x})),\quad i=1,\ldots,k,
$$
where $g_i\in\ccxs,$ $m\in\ccexppoly^n$ (function $m$ can be assumed to be unified for all $f_i,$ see the proof of the statement \ref{statmedin}).
Thereby,
$$
h(f_1(\tilde{x}),\ldots,f_k(\tilde{x}))=h(g_1(\tilde{x},m(\tilde{x})),\ldots,g_k(\tilde{x},m(\tilde{x}))).
$$
It is obvious that  $h(g_1(\tilde{x},z),\ldots,g_k(\tilde{x},z))\in\ccxs$
(see ~\cite{diplom}). Thus,
$h(f_1,\ldots,f_k)\in\cckhsplus^n$ (and $\cckhs^n$). \item If
$f\in\cckhs^{n+1}$ and $g\in\cckhs^n,$ then $f^g\in\cckhs^{n+1}.$
One has
$$
f^g=f^{[\log_2 2^g]}.
$$
From the statement \ref{stattwopowerxsn} it follows that
$2^g\in\cckhs^{n+1}.$ Both from this and from that $x^{[\log_2
y]}\in\ccffom$ it follows that $f^g\in\cckhs^{n+1}$ (see previous point and ~\cite{diplom}).
\end{enumerate}
Statement is proved.
\end{proof}

\begin{stat}\label{stat_h_main}
$\cckhs^n\subseteq [\mains]_{2^x}^{n+1}$ for any $n.$
\end{stat}
\begin{proof}
Let $f(\tilde{x})\in\cckhs^n.$ Then
$f(\tilde{x})\in\cckhsplus^n,$ i.e.
$$
f(\tilde{x})=g(\tilde{x},m(\tilde{x})),
$$
where $g\in\ccxs,$ $m\in\ccexppoly^n.$ From ~\cite{diplom} it
follows that $g\in[\mains]_{2^x}^{1}.$ Besides it is obvious that
$m\in[\mains]_{2^x}^n.$ From all of the above and the definition of
$[\mains]_{2^x}^{n+1}$ follows the proof of the given statement.
\end{proof}
\begin{conseq}
$\cckhs^n=[\mains]_{2^x}^{n+1}=[\mains]_{x^y}^{n+1}$ for any
$n.$
\end{conseq}
\begin{proof}
Indeed, it is clear that $[\mains]_{2^x}^{n+1}\subseteq
[\mains]_{x^y}^{n+1}.$ Besides, according to the proof above
$$
\cckhs^n\subseteq [\mains]_{2^x}^{n+1}
$$
and
$$
[\mains]_{x^y}^{n+1}\subseteq\cckhs^n.
$$
From this it follows the proof of the given statement.
\end{proof}

\begin{stat}\label{stat_ffomh_one}
For any $n$ there is an inclusion
$\ccffomh^{n+1}\subseteq\cckhs^n$.
\end{stat}
\begin{proof}
Let $f(\tilde{x})\in\ccffomh^n$, $\rho(\tilde{x},y,z)\in\ccfomn$,
$m(\tilde{x})\in\ccexppoly^{n+1}$ be the predicate and the function
from the definition of $\ccffomh^{n+1}$ for $f$. Let one assume
$\rho'(\tilde{x},y,z)\equiv\rho(\tilde{x},y,2^z)$. One can note that
$\chi_{\rho'}\in\ccclh$ (see ~\cite{diplom}). Let
$m(\tilde{x})=2^{m'(\tilde{x})}$, where $m'\in\ccexppoly^n$. One
gets (for any $\tilde{x}$, $y$)
$$
(f(\tilde{x})\langle
y\rangle=1)\equiv\rho(\tilde{x},y,m(\tilde{x}))\equiv\rho(\tilde{x},y,2^{m'(\tilde{x})})
\equiv \rho'(\tilde{x},y,m'(\tilde{x})).
$$
Thereby, $\chi_{\rho'}$ and $m'$ fit the definition of $\cckhs^n$
for $f$. The statement is proved.
\end{proof}

\begin{stat}\label{stat_minexp_ffom}$\min(2^x,z)\in\ccffom$.
\end{stat}
\begin{proof}
This follows from the fact that
$$
(\min(2^x,z)\langle y\rangle=1)\equiv \begin{cases}(x=y),\text{
if $[\log_2 z]\geq x$,} \\ (z\langle y\rangle=1)\text{ otherwise}\end{cases}
$$
(see equivalent definitions of the class $\ccfom$ from
~\cite{uniformity}, for example based on a sequence of boolean
circuits generated by Turing machine).
\end{proof}

\begin{stat}\label{stat_ffomh_two}For any $n$ it satisfies
$[\mains]_{2^x}^{n+1}\subseteq\ccffomh^{n+1}$.
\end{stat}
\begin{proof}
Let $f(\tilde{x})\in[\mains]_{2^x}^{n+1}$. From the definition of
$[\mains]_{2^x}^{n+1}$ it follows that $f$ can be expressed in terms
of a formula over functions in $\mains$ and the function $2^x$ (in
the formula it is allowed only a substitution of functions and
variables into functions), such that for every subformula there is a
corresponding function bounded by the function from
$\ccexppoly^{n+1}$. Let $m(\tilde{x})\in\ccexppoly^{n+1}$ be the
function that bounds all of these functions. Let $g(\tilde{x},z)$ be
the function that can be expressed with this formula with the
replacement of every subformula of the type $2^F$ by $\min(2^F,z)$.
One can note that $g$ can be obtained from the superposition of
functions from $\ccffom$ (see statements \ref{statarythmffom} and
\ref{stat_minexp_ffom}). Thereby, from the statement
\ref{stat_ffom_closed} it follows that $g(\tilde{x},z)\in\ccffom$.

One gets (for any $\tilde{x},y$)
$$
(f(\tilde{x})\langle
y\rangle=1)\equiv(g(\tilde{x},m(\tilde{x}))\langle y\rangle=1).
$$
From the definition of $\ccffom$ it follows that $(g(x,z)\langle
y\rangle=1)\in\ccfomn$. Thus, $f\in\ccffomh^{n+1}$. The statement is
proved.
\end{proof}

From the consequence of the statement \ref{stat_h_main} and the statements \ref{stat_ffomh_one}, \ref{stat_ffomh_two} follows the claim of the theorem \ref{theorem_main_h}.

\chapter{Simple Basis by Superposition in the Class $\bigeps^2$ of Grzegorczyk
Hierarchy }\refstepcounter{chapterref}\label{chapter_gzhbasis}

\section{Minsky Machines }

Basic definitions can be looked up in sections \ref{subsection_definitions} and
\ref{subsection_basic_gzhbasis} of the introduction.

\emph{Minsky Machine} there is a multitape non-erasing Turing machine that has a finite number of one-sided, right side infinite tapes, the end cells of which contain symbol 1 and the rest of them contain 0 (see ~\cite{maltsev,minsky}); every tape has one reading head per each, at every step of its work the heads of the Minsky machines can move independently from each other by one cell to the left, right or remain in the same cell. The program of the machine is organized in such a way that the heads cannot move away from the cells that contain symbol 1.

One assumes that the Minsky machine $M,$ that has not less then $n$
tapes calculates everywhere defined function
$f(x_1,\,\ldots,\,x_n),$ if for every $x_1,\,\ldots,\,x_n$ it
satisfies the following conditions. If at the beginning of
calculating process first $n$ machine heads are in the cells with
numbers $x_1,\,\ldots,\,x_n$ respectively (the end cells are number
0) and the rest of those heads are in the end cells, then at the
final step of calculation (when the machine $M$ reaches its final
stage) the first head is going to be in the cell number
$f(x_1,\,\ldots,\,x_n).$

The time of calculation (the number of steps that the machine executes) in this case is labeled $T_M(x_1,\,\ldots,\,x_n).$

Let machine $M$ have $s$ inside states. These states can be marked
with numbers $0,\,1,\,\ldots,\,s-1.$ One can assume that 1 is the
initial state and 0 is the final one. The program of $k$-tape Minsky
machine $M$ consists of commands of the form
$$
e_1 \ldots e_k q \rightarrow d_1 \ldots d_k q',
$$
where
$$
e_1,\ldots,e_k \in \{0,\,1\},\quad q,q'\in\{0,1,\ldots,s-1\},\quad
q\neq 0,\quad d_1,\ldots,d_k\in \{-1,0,1\}
$$
and $e_i=1$ implies $d_i\neq -1$. The given command means that if
the machine $M$ at some point in time $t$ is in the state $q$ and
the vector that is being read by the heads is $(e_1 \ldots e_k),$
then at the moment $t+1$ the machine $M$ goes into the state $q'$
and the head with the number $i\ (1\leq i\leq k)$ moves to the left
by one cell
 ($d_i=-1$), to the right ($d_i=1$) or remains in the same cell
($d_i=0$).

\emph{Configuration} of $k$-tape Minksy machine $M$ at the moment of
time $t$ will be the tuple $(x_1,\,\ldots,\,x_k;\ q),$ where $x_i$
is the cell number, in which there is $i$-th head $(1\leq i\leq k)$,
$q$ is the inside state of the machine $M$ at the time $t.$

There is the following description of the class $\bigeps^2$ in terms
of Minsky machines calculations.
\begin{theorem1}[\cite{basis,ogr_rec}]
$\bigeps^2$ is the set of all functions that can be calculated on
Minsky machines in polynomial time. In other words, everywhere
defined function $f(x_1,\,\ldots,\,x_n)$ belongs to the class
$\bigeps^2$ if and only if there exists a Minsky machine $M$ and a
polynomial $t(x_1,\,\ldots,\,x_n)$ with natural coefficients such
that the machine $M$ calculates the function $f$ and for any
$x_1,\,\ldots,\,x_n$ it satisfies the inequality
$$
T_M(x_1,\,\ldots,\,x_n)\leq t(x_1,\,\ldots,\,x_n).
$$
\end{theorem1}

The Minsky machine is called \emph{reduced} if at any state $q$ it
can read information from only one tape (for every $q,$ generaly
speaking its own). The program of a reduced $k$-tape Minsky machine
with $s$ states consists of $s-1$ commands of the form
$$
q\rightarrow i;\ d_1^0 \ldots d_k^0 q^0;\ d_1^1 \ldots d_k^1 q^1,
$$
where
$$
q\in \{1,2,\ldots,s-1\},\quad 1\leq i \leq k,\quad
d_1^0,\ldots,d_k^0,d_1^1,\ldots,d_k^1\in\{-1,0,1\}, \,
$$
$q^0,q^1\in\{0,1,\ldots,s-1\}$. The given command signifies that if
at some point of time $t$ the machine is at state $q$ and $i$-th
head reads the number $e,$ then at the time $t+1$ the machines moves
to the state $q^e$ and the head with number $j\ (1\leq j\leq k)$
moves one cell to the left ($d_j^e=-1$), to the right ($d_j^e=1$) or
remains at the same spot ($d_j^e=0$).

Obviously one step of the work of a generic Minsky machine $M$ can
be modeled with $k$ steps of execution of a fitting reduced Minsky
machine $M',$ if each state of the machine $M$ will be represented
in $M'$ as $2^k$ states that "remember"\ binary tuples of length
$k$. Thereby, the following statement holds true.
\begin{statse}\label{e_2_prived}
Let Minsky machine $M$ compute everywhere defined function
$f(x_1,\,\ldots,\,x_n).$ Then there exists a reduced Minsky machine $M'$ and the constant $C$ such that $M'$ computes $f$ and
$$
T_{M'}(x_1,\,\ldots,\,x_n)\leq C\cdot T_M(x_1,\,\ldots,\,x_n).
$$
\end{statse}
\begin{conseq}
$\bigeps^2$ is the set of all functions that can be computed on
reduced Minsky machines within polynomial time.
\end{conseq}

\section{Vector-functions, Configurations, and Their Codes}

Further one is going to consider everywhere defined vector-functions
of the type
\begin{equation}\label{vector_function}
\tilde{F}:\nat^k\times \{0,\,1,\ldots,\,s-1\}\rightarrow
\nat^k\times\{0,\,1,\ldots,\,s-1\}.
\end{equation}

Let $(x_1,\,\ldots,\,x_k;\ q)$ be a configuration of the Minsky
machine $M$ and
\begin{equation}\label{cmdx78}
e_1 \ldots e_k q \rightarrow d_1 \ldots d_k q'
\end{equation} is a command from the machine $M$ programme such that
\begin{equation}\label{docmdx78}
e_1=\overline{\sg}(x_1),\,\ldots,\,e_k=\overline{\sg}(x_k).
\end{equation}
The command (\ref{cmdx78}) transforms configuration
$(x_1,\,\ldots,\,x_k;\ q)$ into a subsequent configuration
$(x_1+d_1,\,\ldots,\,x_n+d_n;\ q').$

Overall, for the Minsky machine $M$ the process of transformation of
an arbitrary configuration into the next one can be described with
the help of a vector function $\Con_M(x_1,\,\ldots,\,x_k;\ q),$
where
$$
\Con_M(x_1,\,\ldots,\,x_k;\ q)=(x_1+d_1,\,\ldots,\,x_k+d_k;\ q'),
$$
if in the machine program $M,$ there is a command (\ref{cmdx78}),
the following relations hold (\ref{docmdx78}), and
$$
\Con_M(x_1,\,\ldots,\,x_k;\ q)=(x_1,\,\ldots,\,x_k;\ q),
$$
if $(x_1,\,\ldots,\,x_n;\ q)$ is the final configuration.

Let one name the vector function $\tilde{F}$ of the type
$(\ref{vector_function})$ \emph{simple} one if there exist integer
(not necessarily $\nat$) numbers $a_1,\,\ldots,\,a_k,$ as well as
$i\in\nat$ ($i \leq k$) and $q',\,q''\in\{0,\,1,\,\ldots,\,s-1\}$
such that for any vector $(x_1,\ldots,x_k;\ q)$ it satisfies
\begin{equation}\label{simple}
\tilde{F}(x_1,\,\ldots,\,x_k;\ q)=
\begin{cases}
(x_1+a_1,\,\ldots,\,x_k+a_k;\ q'),\text{ if }x_i=0\text{ and
}q=q'', \\
(x_1,\,\ldots,\,x_k;\ q)\text{ else }
\end{cases}
\end{equation} (one can assume that $x_0=0$).

Obviously the following is true.
\begin{statse}
For any reduced $k$-tape Minsky machine $M$ there exist numbers
$s,m\in\nat$ and simple vector functions
$\tilde{F}_1,\,\ldots,\,\tilde{F}_m$ of the type
$(\ref{vector_function})$ such that
$$
\Con_M(\tilde{x})=\tilde{F}_1(\tilde{F}_2(\ldots\tilde{F}_m(\tilde{x})\ldots)).
$$
\end{statse}

The number $x\in\nat$ one calls \emph{$(w;l)$-code} of the
configaration $(x_1,\,\ldots,\,x_k;\ q),$ if binary digits of the
number $x$ from $l\cdot (i-1)$-th up to $(l\cdot i-1)$-th generate
binary notation $x_i$ ($1\leq i\leq k$) and the digits
 from $kl$-th up to $(kl+w-1)$-th have the binary
notation of the number $q.$ One can note that $(w;l)$-code of the
configuration is not unique.

Everywhere defined function $f(x)$ will be called
\emph{simplistic} if there exist $u,v\in\nat$ such that for all
$x$ it satisfies
$$
f(x)=
\begin{cases}
x+v,\text{ if }x\wedge u=0, \\
x\text{ else}.
\end{cases}
$$
\begin{statse}\label{conf_code}
Let $\tilde{F}$ be a simple vector function of the type
$(\ref{vector_function})$, $w,l\in\nat$ be such numbers that
$2^w\geq s$ and $l\geq 1.$ Then there exist such simplistic
functions $f_1,\,f_2,\,f_3,$ that for any vector
$(x_1,\,\ldots,\,x_k;\ q),$ $(y_1,\,\ldots,\,y_k;\ q^*)$ and a
number $x\in\nat,$ if
$$
\begin{array}{l}
x_1,\,\ldots,\,x_k,\,y_1,\,\ldots,\,y_k<2^l, \\
(y_1,\,\ldots,\,y_k;\
q^*)=\tilde{F}(x_1,\,\ldots,\,x_k;\ q),\\
x \text{ is } (w;l)\text{-code of configuration
}(x_1,\,\ldots,\,x_k;\ q),
\end{array}
$$
then $f_3(f_2(f_1(x)))$ is $(w;l)$-code of configuration
$(y_1,\,\ldots,\,y_k;\ q^*).$
\end{statse}
\begin{proof}
Let $\tilde{F}$ have the form $(\ref{simple})$. Then one can assume
$$
\begin{array}{l}
u_1=0, \\
v_1=(2^w-q'')\cdot 2^{lk}, \\
u_2=
\begin{cases}
(2^w-1)\cdot 2^{lk}+(2^l-1)\cdot 2^{l\cdot (i-1)},&\text{ if
}i>0, \\
(2^w-1)\cdot 2^{lk},&\text{ if }i=0,
\end{cases} \\
v_2=2^{lk+w}+(2^w+q'-q'')\cdot 2^{lk}+\sum_{j=1}^k a_j\cdot
2^{l\cdot (j-1)}, \\
u_3=0, \\
v_3=q'' \cdot 2^{lk}, \\
f_j(x)=
\begin{cases}
x+v_j,\text{ if }x\wedge u_j=0, \\
x\text{ else}
\end{cases} \\
(1\leq j\leq 3).
\end{array}
$$
To begin with one can notice that for all $j\ (1\leq j\leq 3)$ the
numbers $u_j,\,v_j$ are non-negative.

Let $f_1(x),\,f_2(f_1(x)),\,f_3(f_2(f_1(x)))$ be the codes of the
configurations $K_1,\,K_2,\,K_3$ respectively. And let
$$
K_j=(x_{j1},\,\ldots,\,x_{jk};\ q_j),\quad 1\leq j\leq 3.
$$

Obviously $f_1(x)=x+(2^w-q'')\cdot 2^{lk}.$ Thus,
$$ K_1=(x_1,\,\ldots,\,x_k;\ q-q'') $$
(here and further in this proof addition and subtraction of numbers
$q,\,q',\,q'',\,q_j$ are performed modulus $2^w$).

Further, let one prove that
\begin{equation}\label{f1_u2_equation}
(f_1(x)\wedge u_2=0)\Leftrightarrow (x_i=0\text{ and }q=q'').
\end{equation}
For this let one consider two cases: $i>0$ and $i=0.$ If $i>0,$ then
in binary notation of the number $u_2$ the ones are placed in digit
places from $l\cdot (i-1)$-th up to $(l\cdot i-1)$-th and from
$lk$-th up to $(lk+w-1)$-th. From this it follows that $f_1(x)\wedge
u_2=0$ if and only if the corresponding digits of the number
$f_1(x)$ are zeroes. That is
$$
(f_1(x)\wedge u_2=0)\Leftrightarrow (x_{1i}=0\text{ and }q_1=0).
$$
One can notice that $q_1=q-q''$ and $x_{1i}=x_i,$ thus one obtains
$(\ref{f1_u2_equation}).$ In case with $i=0$ analogous to
reasonings one gets the following statement
$$
(f_1(x)\wedge u_2=0)\Leftrightarrow (q=q'').
$$
According to the assumption $x_0=0,$ thus, it satisfies
$(\ref{f1_u2_equation}).$

Let $f_1(x)\wedge u_2=0,$ i.e. $x_i=0\text{ and }q=q''.$ In this
case it is obvious that for all $j\ (1\leq j\leq k)$ it satisfies
$y_j=x_j+a_j$. By definition $f_1(x)$ is $(w;l)$-code of the
configuration $K_1.$ Thus the following equalities hold true
$$
\begin{array}{c}
f_1(x)\equiv q_1\cdot 2^{lk}+\Sigma_{j=1}^k x_{1j}\cdot 2^{l\cdot
(j-1)}\equiv (q-q'')\cdot 2^{lk}+\Sigma_{j=1}^k x_j\cdot
2^{l\cdot (j-1)}\equiv \\
\equiv \Sigma_{j=1}^k x_j\cdot 2^{l\cdot (j-1)}\quad (\modx\
2^{lk+w}).
\end{array}
$$
Then
$$
\begin{array}{c}
f_2(f_1(x))\equiv f_1(x)+v_2\equiv (q'-q'')\cdot
2^{lk}+\Sigma_{j=1}^k (x_j+a_j)\cdot 2^{l\cdot (j-1)}\equiv\\
\equiv (q'-q'')\cdot 2^{lk}+\Sigma_{j=1}^k y_j\cdot 2^{l\cdot
(j-1)}\quad (\modx\ 2^{lk+w}).
\end{array}
$$
From this and from the fact that for all $j\ (1\leq j\leq k)$ it satisfies
$y_j<2^l,$ one obtains that
$$
K_2=(y_1,\,\ldots,\,y_k;\ q'-q'').
$$
And if $f_1(x)\wedge u_2\neq 0,$ then it is obvious that for all $j\ (1\leq
j\leq k)$ it satisfies $y_j=x_j.$ Besides, $f_2(f_1(x))=f_1(x)$ and
$$
K_2=K_1=(x_1,\,\ldots,\,x_k;\ q-q'')=(y_1,\,\ldots,\,y_k;\ q-q'').
$$
Further, it is obvious that
$$
f_3(f_2(f_1(x)))=f_2(f_1(x))+v_3=f_2(f_1(x))+q''\cdot 2^{lk}.
$$
from this it follows that $q_3=q_2+q''$ and for all $j\ (1\leq j\leq
k)$ it satisfies
$$
x_{3j}=x_{2j}=y_j.
$$
If $x_i=0$ and $q=q'',$ then
$$
q_3=q_2+q''=(q'-q'')+q''=q'=q^*.
$$
Otherwise,
$$
q_3=q_2+q''=(q-q'')+q''=q=q^*.
$$
Thereby,
$$
K_3=(y_1,\,\ldots,\,y_k;\ q^*).
$$
The statement is proved.
\end{proof}
\begin{conseq}
Let $M$ be a reduced $k$-tape Minsky machine. Then there exists
such $w,r\in\nat$ that for any $l\geq 1$ there are simplistic functions
 $f_{r-1},\,f_{r-2},\,\ldots,\,f_0$ such that for any vectors
$(x_1,\,\ldots,\,x_k;\ q)$ and $(y_1,\,\ldots,\,y_k;\ q')$ and
number $x\in\nat$ from conditions
$$
\begin{array}{l}
x_1,\,\ldots,\,x_k,\,y_1,\,\ldots,\,y_k<2^l, \\
(y_1,\,\ldots,\,y_k;\
q')=\Con_M(x_1,\,\ldots,\,x_k;\ q),\\
x \text{ is } (w;l)-\text{code of configuration
}(x_1,\,\ldots,\,x_k;\ q)
\end{array}
$$
it follows that $f_{r-1}(f_{r-2}(\ldots f_0(x) \ldots))$ is
$(w;l)$-code of the configuration $(y_1,\,\ldots,\,y_k;\ q').$
\end{conseq}
\begin{comment}
It is easy to see that $u_j$ and $v_j$ can be expressed as
polynomials with integer coefficients of $2^l.$
\end{comment}

\section{Basic Property of the Function $Q$}

Let one denote by $h_c(x)$ the number the binary notation of which
is composed of $c$ smaller digit places of the number $x$
($h_c(x)=0,$ if $c=0,$ $h_c(x)=x,$ if the binary notation of $x$ has
less then $c$ digits). One can note that for any $x,\,y,\,c$ the
following relation is satisfied:
$$
h_c(x+y)=h_c(h_c(x)+h_c(y)).
$$
Now let one formulate and prove the basic property of the function $Q.$

\begin{statse}\label{q_prop}
Let integer non-negative numbers $r\geq 1,\
u_0,\,\ldots,\,u_{r-2},\,v_0,\,\ldots,\,v_{r-2}$ and
a sequence of everywhere defined functions
$f_0(x),\,f_1(x),\,\ldots$ is such that for all $i\in\nat$
it satisfies:
$$
\begin{array}{l}
f_i(x)=
\begin{cases}
\begin{cases}
x+v_{\rrm(i,r)},\text{ if }x\wedge u_{\rrm(i,r)}=0, \\
x\text{ else},
\end{cases} \text{ if $\rrm(i,r)\neq r-1,$} \\
x,\text{ if $\rrm(i,r)=r-1.$}
\end{cases}
\end{array}
$$
and let it for numbers
$t_0,\,p_1,\,p_2,\,c_1,\,c_2,\,x,\,u_{r-1},\,v_{r-1}\in\nat$
satisfy the following conditions:
\begin{gather}
t_0\geq 1, \label{q_prop_7}\\
u_{r-1}=2^{c_1}-1, \label{q_prop_5}\\
2^{c_2-1}\leq v_{r-1}<2^{c_2}, \label{q_prop_4}\\
p_1=\Sigma_{i=0}^{r-1} 2^{c_1\cdot i}\cdot u_i, \label{q_prop_8}\\
p_2=\Sigma_{i=0}^{r-1} 2^{c_2\cdot i}\cdot v_i, \label{q_prop_9}\\
x+2p_2 t_0<2^{c_1}, \label{q_prop_3}\\
x+t_0\cdot \max(v_0,\ldots,v_{r-2})<2^{c_2}, \label{q_prop_2}\\
u_i<2^{c_2}\ (0\leq i\leq r-2), \label{q_prop_6}\\
c_1\geq c_2, \label{q_prop_10}\\
x\geq 1. \label{q_prop_1}
\end{gather}
Then
$h_{c_2}(Q(x,\,p_1,\,p_2,\,c_1,\,c_2,\,t_0))=f_{t_0-1}(f_{t_0-2}(\ldots
f_0(x)\ldots)).$
\end{statse}
\begin{proof}
Let
$$
g(t)=
\begin{cases}
f_{t-1}(f_{t-2}(\ldots f_0(x)\ldots)),\text{ if }t>0, \\
x,\text{ if }t=0.
\end{cases}
$$
Obviously for all $t$ it satisfies $g(t+1)\leq
g(t)+\max(v_0,\,\ldots,\,v_{r-2}).$ As a consequence of that,
 (\ref{q_prop_2}) and (\ref{q_prop_1}) one can note that for $t\leq
t_0$ it satisfies
\begin{equation}\label{g_range}
0<g(t)<2^{c_2}.
\end{equation}
Besides, obviously for $t<t_0$ it satisfies
\begin{equation}\label{g_range2}
g(t)+\max(v_0,\,\ldots,\,v_{r-2})<2^{c_2}.
\end{equation}

Further, by induction one can prove that for all $t\leq t_0$ it holds that
$$
h_{c_2}(Q(x,\,p_1,\,p_2,\,c_1,\,c_2,\,t))=g(t).
$$

Induction basis:
$$
h_{c_2}(Q(x,\,p_1,\,p_2,\,c_1,\,c_2,\,0))=x.
$$
This is correct, because
$$
Q(x,\,p_1,\,p_2,\,c_1,\,c_2,\,0)=x
$$
and $x<2^{c_2}$ (see (\ref{q_prop_2})).

Induction step: let $t<t_0$ and
$$
h_{c_2}(Q(x,\,p_1,\,p_2,\,c_1,\,c_2,\,t))=g(t).
$$
One can prove that
$$
h_{c_2}(Q(x,\,p_1,\,p_2,\,c_1,\,c_2,\,t+1))=g(t+1).
$$

From (\ref{q_prop_7}) and (\ref{q_prop_2}) it follows that
$v_0,\,\ldots,\,v_{r-2}<2^{c_2},$ and from (\ref{q_prop_4}) it
follows the fact that $v_{r-1}<2^{c_2}.$ Analogously, from
(\ref{q_prop_6}) and (\ref{q_prop_5}) it follows that
$u_0,\,\ldots,\,u_{r-1}<2^{c_1}.$ Thereby, for all $j\ (0\leq j<r)$
it satisfies the following inequalities
$$
u_j<2^{c_1},\quad v_j<2^{c_2}.
$$

From this and from (\ref{q_prop_8}), (\ref{q_prop_9}) it follows
that the digits from $(c_1\cdot j)$-th up to $(c_1\cdot (j+1)-1)$-th
in binary notation of the number $p_1$ create binary notation of the
number $u_j,$ while digits from $(c_2\cdot j)$-th up to $(c_2\cdot
(j+1)-1)$-th in binary notation of the number $p_2$ form the binary
notation of $v_j$ ($0\leq j<r$). In its turn the relations
(\ref{q_prop_5}) and (\ref{q_prop_4}) show that the binary notation
of numbers $p_1$ and $p_2$ have $c_1\cdot r$ and $c_2\cdot r$ digits
respectively. From this one can conclude that for any$t\in\nat$ the
following inequalities are satisfied:
\begin{gather}
h_{c_1}(R(p_1,\,c_1\cdot t))=u_{\rrm(t,r)}, \label{r_p1}\\
h_{c_2}(R(p_2,\,c_2\cdot t))=v_{\rrm(t,r)}. \label{r_p2}
\end{gather}

Using definiton of the function $Q$ and (\ref{q_prop_3}) one can notice that
$$
Q(x,\,p_1,\,p_2,\,c_1,\,c_2,\,t)\leq x+2p_2 t\leq x+2p_2
t_0<2^{c_1}.
$$
Thus, if $\rrm(t,r)\neq r-1,$ then
$$
\begin{array}{c}
Q(x,\,p_1,\,p_2,\,c_1,\,c_2,\,t)\wedge R(p_1,\,c_1\cdot
t)=Q(x,\,p_1,\,p_2,\,c_1,\,c_2,\,t)\wedge h_{c_1}(R(p_1,\,c_1\cdot
t))=\\
=Q(x,\,p_1,\,p_2,\,c_1,\,c_2,\,t)\wedge u_{\rrm(t,r)}.
\end{array}
$$
From the induction proposal and from (\ref{q_prop_6}) it follows that
$$
Q(x,\,p_1,\,p_2,\,c_1,\,c_2,\,t)\wedge u_{\rrm(t,r)}=g(t)\wedge
u_{\rrm(t,r)}.
$$
Thereby, if $\rrm(t,r)\neq r-1,$ then
\begin{equation}\label{q_prop_norm}
Q(x,\,p_1,\,p_2,\,c_1,\,c_2,\,t)\wedge R(p_1,\,c_1\cdot
t)=g(t)\wedge u_{\rrm(t,r)}.
\end{equation}

If $\rrm(t,r)=r-1,$ then
$$
\begin{array}{c}
h_{c_2}(Q(x,\,p_1,\,p_2,\,c_1,\,c_2,\,t))\wedge
h_{c_2}(R(p_1,\,c_1\cdot t))=\\
=h_{c_2}(Q(x,\,p_1,\,p_2,\,c_1,\,c_2,\,t))\wedge
h_{c_2}(h_{c_1}(R(p_1,\,c_1\cdot t)))= \\
=g(t)\wedge h_{c_2}(u_{\rrm(t,r)})=g(t)\wedge h_{c_2}(u_{r-1})=
\\ g(t)\wedge h_{c_2}(2^{c_1}-1)=g(t)\wedge (2^{c_2}-1)\neq 0.
\end{array}
$$
Here the first inequality follows from (\ref{q_prop_10}), the second
one from the inductive proposal and (\ref{r_p1}), the third one from
the fact that $\rrm(t,r)=r-1,$ the forth one from (\ref{q_prop_5}),
the fifth comes from (\ref{q_prop_10}) and the last inequality comes
from (\ref{g_range}). Consequently, if $\rrm(t,r)=r-1,$ then
\begin{equation}\label{q_prop_extr}
Q(x,\,p_1,\,p_2,\,c_1,\,c_2,\,t)\wedge R(p_1,\,c_1\cdot t)\neq 0.
\end{equation}

Thereby, from (\ref{q_prop_norm}), (\ref{q_prop_extr}) and the definition of the function
$Q$ it follows that
$$
Q(x,\,p_1,\,p_2,\,c_1,\,c_2,\,t+1)=
$$
$$
=\begin{cases} Q(x,\,p_1,\,p_2,\,c_1,\,c_2,\,t)+R(p_2,\,c_2\cdot
t),\text{ if
}g(t)\wedge u_{\rrm(t,r)}=0 \text{ and } \\
\quad \quad \quad \rrm(t,r)\neq r-1,\\
Q(x,\,p_1,\,p_2,\,c_1,\,c_2,\,t)\text{ else}.
\end{cases}
$$
One can notice that if $\rrm(t,r)\neq r-1,$ then from the properties
of function $h_{c_2},$ the inductive proposal, (\ref{r_p2}) and
(\ref{g_range2}) one obtains
$$
\begin{array}{c}
h_{c_2}(Q(x,\,p_1,\,p_2,\,c_1,\,c_2,\,t)+R(p_2,\,c_2\cdot
t))=h_{c_2}(h_{c_2}(Q(x,\,p_1,\,p_2,\,c_1,\,c_2,\,t))+\\
+h_{c_2}(R(p_2,\,c_2\cdot
t)))=h_{c_2}(g(t)+v_{\rrm(t,r)})=g(t)+v_{\rrm(t,r)}.
\end{array}
$$

Thereby,
$$
h_{c_2}(Q(x,\,p_1,\,p_2,\,c_1,\,c_2,\,t+1))=
$$
$$
=\begin{cases}
g(t)+v_{\rrm(t,r)},\text{ if }g(t)\wedge u_{\rrm(t,r)}=0\text { and }\rrm(t,r)\neq r-1, \\
g(t)\text{ else}.
\end{cases}
$$
Cosequently,
$$
h_{c_2}(Q(x,\,p_1,\,p_2,\,c_1,\,c_2,\,t+1))=g(t+1).
$$
The statement is proved.
\end{proof}

\section{Proof of Theorem \ref{theorem_gzh_quazi}}

Let one notate the closure of the system (\ref{arithm}) by $\Phi.$
One can note that
$$
0=x\dotminus x,\quad x+y=(x+1)\cdot (y+1)\dotminus (xy+1).
$$
Consequently, all polynomials with integer coeffcients that take integer non-negative values in arrays of integer non-negative numbers belong to $\Phi.$

Let $f(y_1,\,\ldots,\,y_n)\in \bigeps^2.$ Then according to the consequence from the statement
 \ref{e_2_prived} there exists a reduced Minsky machine $M$ and a polynomial $t(\tilde{y})$ with coefficients
 from $\nat$ such that $M$ calculates $f$ and for any array $\tilde{y}$ it holds that
$$
T_M(\tilde{y})\leq t(\tilde{y}).
$$
Let $M$ has $k$ tapes. One can assume that for any $\tilde{y}$ it
satisfies $t(\tilde{y})\geq 1.$

Further one can choose $w,\,r'\in\nat$ such that for the machine $M$
there holds all conditions of the consequence from the statement
\ref{conf_code}. Let
$$
m(\tilde{y})=t(\tilde{y})+\Sigma_{i=1}^n y_i.
$$
It is obvious that $m(\tilde{y})\in\Phi.$ One can notice at the
beginning $t(\tilde{y})$ steps of the Minsky machine $M$ execution
at the entrance array $\tilde{y}$ the heads will be positioned at
the cells with numbers not exceeding $m(\tilde{y}).$ Let
$$
l(\tilde{y})=\left[ \log_2 m(\tilde{y}) \right]+1.
$$
Then obviously for any $\tilde{y}$ it holds that
$$
m(\tilde{y})<2^{l(\tilde{y})}.
$$
It is obvious that if $(y'_1,\,\ldots,\,y'_k;\ q')$ is some
configuration of the Minsky machine $M$ at the time $t'\leq
t(\tilde{y})$ that is initiated at the entrance array $\tilde{y}$ at
the moment $0$, then it satisfies
$$
y'_1,\,\ldots,\,y'_n<2^{l(\tilde{y})}.
$$
Then according to the consequence from the statement \ref{conf_code}
and remarks to it there exist functions
$u_0(\tilde{y}),\,\ldots,\,u_{r'-1}(\tilde{y}),\,v_0(\tilde{y}),\,\ldots,\,v_{r'-1}(\tilde{y})$
that can be expressed as polynomials of $2^{l(\tilde{y})}$ with
integer coefficients such that for any $\tilde{y}$ it holds that if
$\sigma$ is $(w;\ l(\tilde{y}))$--code of the initial configuration
$(y_1,\,\ldots,\,y_n,\,0,\,\ldots,\,0;\ 1)$ of the Minsky machine
$M$ and if
$$
\begin{array}{l}
f_{i,\,\tilde{y}}(\sigma)=
\begin{cases}
\sigma+v_i(\tilde{y}),\text{ if }\sigma\wedge u_i(\tilde{y})=0,
\\
\sigma\text{ else},
\end{cases}\ (0\leq i<r') \\
\phi_{\tilde{y}}(\sigma)=f_{r'-1,\,\tilde{y}}(f_{r'-2,\,\tilde{y}}(\ldots f_{0,\,\tilde{y}}(\sigma)\ldots)), \\
\psi_{\tilde{y}}(\sigma)=\underbrace{\phi_{\tilde{y}}(\phi_{\tilde{y}}(\ldots
\phi_{\tilde{y}}(\sigma)\ldots))}_{t(\tilde{y})\text{ times}},
\end{array}
$$
then $\psi_{\tilde{y}}(\sigma)$ is $(w;\ l(\tilde{y}))$-code of the
machine $M$ configuration at the time $t(\tilde{y}),$ i.e. the code
of the final configuration. Here $\phi_{\tilde{y}}(\sigma)$ is the
transformation of the code of the Minsky machine $M$ configuration
over the course of one step.

It is obvious that
$$
2^{l(\tilde{y})}=\min(2\cdot m(\tilde{y}),2^{l(\tilde{y})})\in
\Phi.
$$
Consequently,
$$
u_0(\tilde{y}),\,\ldots,\,u_{r'-1}(\tilde{y}),\
v_0(\tilde{y}),\,\ldots,\,v_{r'-1}(\tilde{y})\in \Phi.
$$

Let
$$
x(\tilde{y})=(2^w+1)\cdot (2^{l(\tilde{y})})^k+\Sigma_{i=1}^n
(2^{l(\tilde{y})})^{i-1}\cdot y_i.
$$
It is obvious that $x(\tilde{y})$ is $(w;\l(\tilde{y}))$-code of the initial начальной
configuration $(y_1,\,y_2,\,\ldots,\,y_n,\,0,\,\ldots,\,0;\ 1)$ and
$x(\tilde{y})\in\Phi.$

Let
$$
z(\tilde{y})=\psi_{\tilde{y}}(x(\tilde{y})).
$$

Further, one can define functions
$u_{r'}(\tilde{y}),\,v_{r'}(\tilde{y}),\,p_1(\tilde{y}),\,p_2(\tilde{y}),\,c_1(\tilde{y}),\,c_2(\tilde{y}),\,t'(\tilde{y})$
in the following way:
$$
\begin{array}{l}
t'(\tilde{y})=t(\tilde{y})\cdot (r'+1), \\
c_2(\tilde{y})=\left[ \log_2(x(\tilde{y})+\Sigma_{i=0}^{r'-1}
u_i(\tilde{y})+t'(\tilde{y})\cdot \Sigma_{i=0}^{r'-1} v_i(\tilde{y})) \right]+1, \\
v_{r'}(\tilde{y})=2^{c_2(\tilde{y})-1}, \\
p_2(\tilde{y})=\Sigma_{i=0}^{r'}(2^{c_2(\tilde{y})})^{i}\cdot
v_i(\tilde{y}), \\
c_1(\tilde{y})=\left[ \log_2(x(\tilde{y})+2\cdot
t'(\tilde{y})\cdot
p_2(\tilde{y}))\right]+1, \\
u_{r'}(\tilde{y})=2^{c_1(\tilde{y})}-1, \\
p_1(\tilde{y})=\Sigma_{i=0}^{r'}(2^{c_1(\tilde{y})})^{i}\cdot
u_i(\tilde{y}). \\
\end{array}
$$
Let, furthermore, $r=r'+1$ and $f_{r-1,\tilde{y}}(\sigma)\equiv
\sigma.$ Then obviously the numbers
$$
r,\
u_0(\tilde{y}),\,u_1(\tilde{y}),\,\ldots,\,u_{r-2}(\tilde{y}),\
v_0(\tilde{y}),\,v_1(\tilde{y}),\,\ldots,\,v_{r-2}(\tilde{y}),
$$
the sequence of functions
$$
f_{0,\tilde{y}},\,f_{1,\tilde{y}},\,\ldots,\,f_{r-1,\tilde{y}},
f_{0,\tilde{y}},\,f_{1,\tilde{y}},\,\ldots,\,f_{r-1,\tilde{y}},\ldots
$$
and the numbers
$$
t'(\tilde{y}),\,p_1(\tilde{y}),\,p_2(\tilde{y}),\,c_1(\tilde{y}),\,c_2(\tilde{y}),
\,x(\tilde{y}),\,u_{r-1}(\tilde{y}),\,v_{r-1}(\tilde{y})
$$
satisfy the conditions of the statement \ref{q_prop}. Consequently,
$$
z(\tilde{y})=h_{c_2(\tilde{y})}(Q(x(\tilde{y}),\,p_1(\tilde{y}),\,p_2(\tilde{y}),\,
c_1(\tilde{y}),\,c_2(\tilde{y}),\,t'(\tilde{y}))).
$$
That is,
$$
z(\tilde{y})=\rrm(Q(x(\tilde{y}),\,p_1(\tilde{y}),\,p_2(\tilde{y}),\,
c_1(\tilde{y}),\,c_2(\tilde{y}),\,t'(\tilde{y})),2^{c_2(\tilde{y})}).
$$
It is obvious that
$$
c_1(\tilde{y}),\,c_2(\tilde{y})\in\Phi.
$$
Besides
$$
2^{c_2(\tilde{y})}=\min(2\cdot (x(\tilde{y})+\Sigma_{i=0}^{r-1}
u_i(\tilde{y})+t'(\tilde{y})\cdot \Sigma_{i=0}^{r-1}
v_i(\tilde{y})),2^{c_2(\tilde{y})})
$$
and
$$
2^{c_1(\tilde{y})}=\min(2\cdot(x(\tilde{y})+2\cdot
t'(\tilde{y})\cdot p_2(\tilde{y})),2^{c_1(\tilde{y})}),
$$
Therefore,
$$
2^{c_1(\tilde{y})},\,2^{c_2(\tilde{y})}\in\Phi.
$$
Consequently,
$$
p_1(\tilde{y}),\,p_2(\tilde{y}),\,u_{r'}(\tilde{y}),\,v_{r'}(\tilde{y})\in\Phi.
$$
Since $z(\tilde{y})$ is $(w;l(\tilde{y}))$-code of the concluding
Minsky machine $M$ configuration, it holds that
$$
f(\tilde{y})=\rrm(z(\tilde{y}),2^{l(\tilde{y})}).
$$
Further, one can note that
$$
h(x,\,\tilde{y})=\rrm(\rrm(x,\,2^{c_2(\tilde{y})}),2^{l(\tilde{y})})\in\Phi
$$
and
$$
f(\tilde{y})=h(Q(x(\tilde{y}),\,p_1(\tilde{y}),\,p_2(\tilde{y}),\,
c_1(\tilde{y}),\,c_2(\tilde{y}),\,t'(\tilde{y})),\,\tilde{y}).
$$
Theorem proved.


\chapter{Finite Generability of Some Groups of Recursive Permutations}
\refstepcounter{chapterref}\label{chapter_per}

\section{Definitions}

The majority of definitions and notation can be looked up in sections
\ref{subsection_definitions} and \ref{subsection_basic_per}
of the introduction.

For any set $A$ that is regular in $Q$ one fixes functions from the
definition of regularity ($\mu$ and $\nu$) and will notate them as
$\mu_A$ and $\nu_A$ respectively.

For one-place functions $f,g,h$ the notation $h=f\circ g$ denotes
that $h(x)=f(g(x)).$ If $f$ is a permutation, then $f^{-1}$ denotes
a permutation inverse to $f$.

For the class of functions $Q$  one denotes  $Q^{(1)}$ the set of all one-place functions from $Q.$

\Def{The \emph{graph} of permutation $f(x)$ is the directed graph
with the set of vertexes  $\nat$ and the set of arrows $\{(x,f(x)):\
x\in\nat\}.$}

Let
$$
f^z=\begin{cases}\underbrace{f\circ\ldots\circ f}_{z\text{ times}},
\text{ if $z>0,$} \\ I,\text{ if $z=0,$}
\\ (f^{-1})^{|z|},\text{ if $z<0,$}\end{cases}
$$
where $I(x)=x$ for all $x.$

\Def{Permutation $f$ is called \emph{matching} over the set
 $A,$ if $f=f^{-1}$ and for any $x\notin A$ it satisfies
$f(x)=x.$}

A matching over $\nat$ one calls simply a matching.

\Def{\emph{The characteristic function} of a set $A\subseteq\nat$ is
the function  $\chi_A(x),$ that is defined by the equality
$$
\chi_A(x)=\begin{cases}1,\text{ if $x\in A,$} \\ 0\text{
else}.\end{cases}
$$}

\Def{A permutation $f$ is called \emph{stationary} over a set $A,$
if for any $x\in A$ the following equality holds $f(x)=x.$}

Further one will use the contracted notation for permutations. For
example the notation
$$
f:\ g(y)\leftrightarrow h(y),\ y\geq 2,\ \ t(y)\rightarrow
u(y)\rightarrow v(y)\rightarrow t(y)
$$
means that
$$
f(x)=\begin{cases} h(y),\text{ if $x=g(y)$ for some $y\geq
2,$}
\\
g(y),\text{ if $x=h(y)$ for some $y\geq 2,$} \\
u(y),\text{ if $x=t(y)$ for some $y\in\nat,$} \\
v(y),\text{ if $x=u(y)$ for some $y\in\nat,$} \\
t(y),\text{ if $x=v(y)$ for some $y\in\nat,$} \\
x\text{ in other cases.}\end{cases}
$$
It is noteworthy that this notation is not always correct, its correctness will be proved for every individual case aside from those in which it is obvious.

\Def{The three $(f,g,B),$ where $f,g$ are matchings, $B$ is the set
of vectors from $\nat^4$, called \emph{correct} if all the
components of vectors from $B$ are different (both inside those
vectors and in different vectors) and the following relations are
satisfied
\begin{equation}\label{eq_razb_f}
f:\ b_1\leftrightarrow b_3,\ b_2\leftrightarrow b_4,\
(b_1,b_2,b_3,b_4)\in B,
\end{equation}
$$
g:\ b_1\leftrightarrow b_2,\ (b_1,b_2,b_3,b_4)\in B.
$$
}

\Def{A correct three $(f,g,B)$ is called a correct one over $Q$ if
$f,g\in Q.$}

It is   noteworthy that, from the requirements \ref{demnumerate} and
\ref{demzamkn} it follows the existence in $Q$ of numerating
functions $\ffp_n(x_1,\ldots,x_n)$ mapping one-to-one $\nat^n$ into
$\nat$ and functions inverse to them
$\ffp_{n,1}(x),\ldots,\ffp_{n,n}(x)$ (see ~\cite{maltsev}). Further
in the text there are definitions of some functions that use
numerating functions, i.e. those that depend on their choice. The
assumption is that if any statement mentions the class  $Q,$
satisfying the requirements  \ref{demnumerate}, \ref{demzamkn}, and
some functions from those given below then these functions are
generated based on numeration functions from the class $Q$ (fixed
for the given class).

For every function $f(x)$ let one assume the following
\begin{equation}\label{eq_perf}
\perf_f:\ \begin{array}{l}\ffp_3(x,2y,z)\rightarrow\ffp_3(f(x),2\ffp_2(x,y),z),\\
\ffp_3(x,4y+1,z)\rightarrow \ffp_3(x,2y+1,z),\\
\ffp_3(x,4y+3,z)\rightarrow\ffp_3(x,2y,z),\ x<f(\ffpr_{2,1}(y)),\\
\ffp_3(x,4y+3,z)\rightarrow\ffp_3(x+1,2y,z),\ x\geq
f(\ffpr_{2,1}(y)).\end{array}
\end{equation}
The correctness of this definition will be proved later.

\Def{The pairwise matching $f(x)$ is called \emph{the code} of a
partially defined function $g(x_1,x_2),$ if
\begin{equation}\label{eq_code}
f:\ \ffp_3(x,y,0)\leftrightarrow\ffp_3(x,y,g(x,y)+2),\
g(x,y)\text{ defined}.
\end{equation}
}

Let one label  $\ffpx$ the code of the function $g(x_1,x_2)=x_1.$

Let one assume
\begin{align}
\label{eq_delone} &\ffdelone:\
  \ffp_3(x,2y,0)\leftrightarrow\ffp_3(x,2y,1), \\
&\ffs_{ij}:\  4x+i\leftrightarrow 4x+j\quad(0\leq i<j<3), \notag\\
 \label{eq_move}
&\ffmove:  \begin{array}{l} \ \\ \ffp_3(x,2y,0)\rightarrow\ffp_3(x,2y+2,0),\\
\ffp_3(x,1,0)\rightarrow\ffp_3(x,0,0), \\
\ffp_3(x,2y+3,0)\rightarrow\ffp_3(x,2y+1,0), \\ \ \end{array}
\\
\label{eq_place}
&\ffplace:  \begin{array}{l}\ffp_3(x,0,0)\rightarrow 2x,\\
\ffp_3(x,y+1,0)\rightarrow 4\ffp_2(x,y)+1,\\
\ffp_3(x,y,z+1)\rightarrow 4\ffp_3(x,y,z)+3,\end{array}
\\
\label{eq_swapone} &\ffswapone:
\begin{array}{l}\ \\ \ffp_3(x,2y,z)\rightarrow\ffp_3(x+2,2y,z),\
z\geq 2,\\
\ffp_3(x,2y,0)\rightarrow\ffp_3(x,2y,0), \\
\ffp_3(x+2,2y+1,z)\rightarrow\ffp_3(x,2y+1,z),\ z\geq 2, \\
\ffp_3(x,2y+1,z)\rightarrow\ffp_3(x,2y,z),\ x\in\{0,1\},\ z\geq 2,
\\ \
\end{array}
\\
\label{eq_swaptwo} &\ffswaptwo:  \begin{array}{l}
\ffp_3(x,2y,z)\rightarrow\ffp_3(z,2y,x+2),\ z\geq 2,\ \\
\ffp_3(x,2y,0)\rightarrow\ffp_3(x,2y,0), \\
\ffp_3(x+2,2y+1,z)\rightarrow\ffp_3(x,2y+1,z),\ z\geq 2, \\
\ffp_3(x,2y+1,z)\rightarrow\ffp_3(x,2y,z),\ x\in\{0,1\},\ z\geq 2.
\end{array}
\end{align}


\section{Finite Generability of a Group $\gr(Q)$}

\begin{statse}\label{stat_perf_good}
The definition of $\perf_f$ is correct and $\perf_f$ is a
permutation for any function $f(x).$ Besides if $Q$ satisfies the
requirements
 \textup{\ref{dembase}}, \textup{\ref{demnumerate}},
\textup{\ref{demzamkn}} and $f\in Q,$ then
$\perf_f\in\gr(Q).$\end{statse}
\begin{proof}
Let one prove that the definition (\ref{eq_perf}) is correct and
$\perf_f$ is a permutation. Indeed  it is not hard to see that
$i$-th rule in (\ref{eq_perf}) ($1\leq i\leq 4$) one-to-one maps the
set $A_i$ onto $B_i,$ where
$$
A_1=\{\ffp_3(u,v,w):\ \ffrm(v,2)=0\},
$$
$$
A_2=\{\ffp_3(u,v,w):\ \ffrm(v,4)=1\},
$$
$$
A_3=\{\ffp_3(u,v,w):\ \ffrm(v,4)=3,\ u<f(\ffp_{2,1}([v/4]))\},
$$
$$
A_4=\{\ffp_3(u,v,w):\ \ffrm(v,4)=3,\ u\geq f(\ffp_{2,1}([v/4]))\},
$$
$$
B_1=\{\ffp_3(u,v,w):\ \ffrm(v,2)=0,\ u=f(\ffp_{2,1}([v/2]))\},
$$
$$
B_2=\{\ffp_3(u,v,w):\ \ffrm(v,2)=1\},
$$
$$
B_3=\{\ffp_3(u,v,w):\ \ffrm(v,2)=0,\ u<f(\ffp_{2,1}([v/2]))\},
$$
$$
B_4=\{\ffp_3(u,v,w):\ \ffrm(v,2)=0,\ u>f(\ffp_{2,1}([v/2]))\}.
$$
It is clear that $\{A_1,A_2,A_3,A_4\},$ $\{B_1,B_2,B_3,B_4\}$ are
partitions of the set $\nat.$ From that follows the correctness of
the definition and the fact that $\perf_f$ is a permutation.

The fact that  $\perf_f,\perf_f^{-1}\in Q,$ is derived directly from
(\ref{eq_perf}) (see ~\cite{erf}). The statement is proved.
\end{proof}

The proofs of the statements \ref{stat_mps_correct} and
\ref{stat_s_px} are analogous to the proof of the statement
\ref{stat_perf_good}. \
\begin{statse}\label{stat_mps_correct}The definitions of $\ffmove,$ $\ffplace,$
$\ffswapone,$ $\ffswaptwo$ are correct and are the the definitions
of permutations. Besides, if the class $Q$ satisfies requirements
\textup{\ref{dembase}}, \textup{\ref{demnumerate}},
\textup{\ref{demzamkn}}, then these permutations belong to $\gr(Q)$.
\end{statse}

\begin{statse}\label{stat_s_px}If the class $Q$ satisfies the requirements
\textup{\ref{dembase}}, \textup{\ref{demnumerate}} and
\textup{\ref{demzamkn}}, then
$$
\ffdelone=\ffdelone^{-1}\in Q,\quad \ffs_{ij}=\ffs_{ij}^{-1}\in
Q\quad (0\leq i<j\leq 3),\quad \ffpx=\ffpx^{-1}\in Q.
$$
\end{statse}

\begin{statse}\label{statrazbregular}
Let the class $Q$ satisfies the requirements \textup{\ref{dembase}},
\textup{\ref{demzamkn}}. Then any regular in $Q$ set $A$ can be
partitioned into two regular over $Q$ sets $A_1$ and $A_2.$
\end{statse}
\begin{proof}
For $i=1,2$ one can assume that $$\begin{array}{l}A_i=\{x:\ x\in A\text{ и
}\mu_A(x)\equiv i-1\ (\mathrm{mod}\ 2)\},\\
\mu_{A_i}(x)=\begin{cases} (\mu_A(x)-i+1)/2,\text{ если }x\in A_i,\\
0 \text{ иначе},
\end{cases} \\
\nu_{A_i}(x)=\nu_A(2x+i-1).
\end{array}
$$
From \ref{dembase} and \ref{demzamkn} it follows that
$\chi_{A_i},\mu_{A_i},\nu_{A_i}\in Q$ for $i=1,2.$ It is obvious
that the functions $\mu_{A_i},\nu_{A_i}$ fit the definition of
regularity for $A_i$. Besides, it is obvious that $A_1,A_2$ are
infinite and create a partition of $A.$ The statement is proved.
\end{proof}

\begin{statse}\label{statunionregular}
Let the class $Q$ satisfy the requirements \textup{\ref{dembase}},
\textup{\ref{demzamkn}}. Then for any non-intersecting regular in
$Q$ sets $A$ and $B$ the set $A\cup B$ is regular in $Q.$
\end{statse}
\begin{proof}
One can assume
$$
\begin{array}{l}
\mu(x)=\begin{cases}2\mu_A(x),\text{ else $x\in A,$} \\
2\mu_B(x)+1,\text{ если $x\in B,$} \\ 0\text{ else,}\end{cases}\\
\nu(x)=\begin{cases}\nu_A(x/2),\text{ if $x\equiv 0\
(\mathrm{mod}\ 2),$}\\ \nu_B((x-1)/2),\text{ если $x\equiv 1\
(\mathrm{mod}\ 2).$}\end{cases}
\end{array}
$$
From the requirements \ref{dembase} and \ref{demzamkn} it follows that
$\mu,\nu\in Q.$ It is clear that the functions $\mu,\nu$ do comply with the definition of regularity for $A\cup B,$ $\chi_{A\cup
B}=\chi_A+\chi_B\in Q.$ The statement is proved.
\end{proof}

\begin{statse}\label{statstacionar}
Let the class $Q$ satisfy the requirements
\textup{\ref{dembase}}--\textup{\ref{demzamkn}}. Then for any
permutation $f\in\gr(Q)$ there exist permutations $f_1,f_2\in\gr(Q)$
and sets $A_1,A_2$ that are regular in $Q$ such that $\nat\backslash
A_1,$ $\nat\backslash A_2$ are regular in $Q,$ $f=f_1\circ f_2,$
$f_i$ is stationary over the set $A_i$ ($i=1,2$).
\end{statse}
\begin{proof}
Let $A,$ $B$ be the sets from the requirement \ref{demrazb} for $Q$
and $f$. Let one divide $B$ into two regular in $Q$ sets $B_1,B_2$
(which can be done according to the statement
\ref{statrazbregular}).

Let one assume that
$$
f_1:\begin{array}{l}x\rightarrow f(x),\ x\in A;\\ x\rightarrow
\nu_{B_1}(\ffp_2(x,0)),\ x\notin A,f^{-1}(x)\in A; \\
\nu_{B_1}(\ffp_2(x,0))\rightarrow x,\ x\in A,f^{-1}(x)\notin A; \\
\nu_{B_1}(\ffp_2(x,y))\rightarrow\nu_{B_1}(\ffp_2(x,y+1)),\
x\notin
A,f^{-1}(x)\in A; \\
\nu_{B_1}(\ffp_2(x,y+1))\rightarrow\nu_{B_1}(\ffp_2(x,y)),\ x\in
A,f^{-1}(x)\notin A.
\end{array}
$$

It is easy to show that the set of arrows of the graph of the
permutation $f_1$ is derived from the set $\{(x,f(x)):\ x\in A\}$ by
adding for each sink vertex $x$ (a vertex into which an arrow enters
but there is no arrow coming out of it) of the infinite chain
$$
x\rightarrow\nu_{B_1}(\ffp_2(x,0))\rightarrow
\nu_{B_1}(\ffp_2(x,1))\rightarrow\ldots\ ,
$$
for every source vertex $x$ (a vertex out of which there is an
out-coming arrow but no incoming one) of the infinite chain
$$
\ldots\rightarrow
\nu_{B_1}(\ffp_2(x,1))\rightarrow\nu_{B_1}(\ffp_2(x,0))\rightarrow
x
$$
and by adding loops $(x,x)$ for every vertex $x$ that has no
out-coming arrow and no incoming arrow. After setting up this
construction there is a graph in which there is only one out-coming
and one incoming arch. From that it follows that the definition of
$f_1$ is correct and $f_1$ is a permutation.

From the definition of $f_1$ it follows that $f_1$ is stationary
over the set $\nat\backslash(A\cup f(A)\cup B_1).$ From the
requirement \ref{demrazb} for $Q$ it follows that $(A\cup f(A))\cap
B=\varnothing.$ And thus,
$$
B_2\subseteq \nat\backslash(A\cup f(A)\cup B_1).
$$
Thereby, it follows that $f_1$ is stationary over $B_2.$

Let $f_2=f_1^{-1}\circ f.$ From the fact that $f_1$ coincides with
$f$ over $A$ (see the definition of $f_1$) it follows that $f_2$ is
stationary over $A.$

From drawing an analogy with the statement \ref{stat_perf_good} it is easy to show that
 $f_1,f_1^{-1}\in Q.$ From that it follows that $f_2\in Q$ and
$f_2^{-1}=f^{-1}\circ f_1\in Q.$ Put $A_1=B_2,$ $A_2=A.$ It is
noteworthy that the regularity $\nat\backslash A_2$ in $Q$ follows
from choosing the set $A$, the regularity of $\nat\backslash
A_1=(\nat\backslash B)\cup B_1$ follows from choosing the set $B,$
from regularity of $B_1$ and from statement \ref{statunionregular}.
The statement is proved.
\end{proof}

\begin{statse}\label{statstacionarpair}
Let the class $Q$ satisfy the requirements \textup{\ref{dembase}},
\textup{\ref{demnumerate}}, \textup{\ref{demzamkn}}. Then if the
permutation $f\in\gr(Q)$ is stationary  over some regular in $Q$ set
$A$ such that $\nat\backslash A$ also regular in $Q,$ then there
exists such correct in $Q$ triplets
$(f_1,g_1,B_1),\ldots,(f_k,g_k,B_k)$ that
$$
f=f_1\circ\ldots\circ f_k.
$$
\end{statse}
\begin{proof}
By using $twice$ the statement \ref{statrazbregular}, let one divide
the set $A$ into regular in $Q$ sets $A_1,$ $A_2,$ $A_3.$

Let one introduce  auxiliary functions $\ffp'$ и $\ffp'',$ mapping
$\nat\times\mathbb{Z}$ to $\nat$ ($\mathbb{Z}$ is the set of all
integer numbers):
\begin{equation}\label{eq_sp_ps}
\ffp'(x,z)=\begin{cases} \nu_{\nat\backslash A}(x),\text{
if $z=0,$} \\ \nu_{A_1}(x),\text{ if $z=1,$} \\
\nu_{A_3}(\ffp_2(x,z-2)),\text{ if $z>1,$} \\
\nu_{A_2}(\ffp_2(x,-z-1)),\text{ if $z<0,$}
\end{cases}
\end{equation}
\begin{equation}\label{eq_sp_pss}
\ffp''(x,z)=\begin{cases} \ffp'(x,z),\text{ if $z\leq 0,$} \\
\ffp'(\mu_{\nat\backslash A}\circ f\circ \nu_{\nat\backslash
A}(x),z),\text{ if $z>0.$}
\end{cases}
\end{equation}
It is clear that the mapping $\ffp'$ is one-to-one. Besides, it is
clear that
 $f$ one-to-one maps $\nat\backslash A$ to
$\nat\backslash A$ (because $f$ is stationary over $A$). Thus,
$\mu_{\nat\backslash A}\circ f\circ \nu_{\nat\backslash A}$ is a
permutation ($\nu_{\nat\backslash A}$ that is a one-to-one mapping
of $\nat$ onto $\nat\backslash A,$ $f$
--- $\nat\backslash A$ onto $\nat\backslash A,$ and
$\mu_{\nat\backslash A}$ --- $\nat\backslash A$ onto $\nat$). Thus,
it follows that $\ffp''$ is as well one-to-one. Put
$$ r_1:\ \ffp''(x,z)\leftrightarrow\ffp''(x,1-z),\
z\in\mathbb{Z},
$$
$$
r_2:\ \ffp''(x,z)\leftrightarrow\ffp''(x,-z),\ z\in\mathbb{Z}.
$$
The permutations $h_1,$ $h_2$ is defined with the help of the same
formulas with a replacement of $\ffp''$ to $\ffp'.$ Let one remark
that $r_1,$ $r_2,$ $h_1,$ $h_2$ are matchings that belong to $Q$
(this is proved analogously to the statement \ref{stat_perf_good}).

Let one assume
$$
r=r_1\circ r_2,\quad h=h_1\circ h_2.
$$
It is easy to check that
\begin{equation}\label{eq_sp_g}
r:\ \ffp''(x,z)\rightarrow\ffp''(x,z+1),
\end{equation}
\begin{equation}\label{eq_sp_h}
h:\ \ffp'(x,z)\rightarrow\ffp'(x,z+1).
\end{equation}

 From (\ref{eq_sp_pss}), (\ref{eq_sp_g}) and from the fact that
 $\mu_{\nat\backslash A}\circ f\circ \nu_{\nat\backslash A}$
 is a permutation it follows that the set of the arrows of the
 graph of the permutation $r$ is the union of all infinite chains of the form
$$
\begin{array}{l}
\ldots\rightarrow\ffp'(x,-1)\rightarrow\ffp'(x,0), \\
\ffp'(x,1)\rightarrow\ffp'(x,2)\rightarrow\ldots\end{array}
\quad\quad (x\in\nat)
$$
and the set $\{(\ffp'(x,0),\ffp'(\mu_{\nat\backslash A}\circ f\circ
\nu_{\nat\backslash A}(x),1)):\ x\in\nat\}.$ On the other hand, from
(\ref{eq_sp_h}) it follows that the set of all arrows of the graph
of the permutation
 $h$ is a union of non-intersecting infinite chains of the form
$$
\ldots\rightarrow\ffp'(x,-1)\rightarrow\ffp'(x,0)\rightarrow\ffp'(x,1)\rightarrow\ldots\quad(x\in\nat).
$$
From this is follows that $r$ and $h$ coincide on the set
$$
\{\ffp'(x,z):\ z\in\mathbb{Z},\ z\neq 0\}=A
$$
(see (\ref{eq_sp_ps})). From this one can conclude that
$h^{-1}\circ g$ is stationary on $A.$

Further for any $x\in\nat$ the following equalities hold
\begin{equation}\label{eq_sp_gc}
r(\ffp'(x,0))=r(\ffp''(x,0))=\ffp''(x,1)=\ffp'(\mu_{\nat\backslash
A}\circ f\circ \nu_{\nat\backslash A}(x),1)
\end{equation}
(the equalities follow from (\ref{eq_sp_pss}), (\ref{eq_sp_g}),
(\ref{eq_sp_pss}) respectively). Furthermore, one has
$$
h^{-1}\circ r\circ \nu_{\nat\backslash A}(x)=h^{-1}\circ
r(\ffp'(x,0))=h^{-1}(\ffp'(\mu_{\nat\backslash A}\circ f\circ
\nu_{\nat\backslash A}(x),1))=
$$
$$
=\ffp'(\mu_{\nat\backslash A}\circ f\circ \nu_{\nat\backslash
A}(x),0)=\nu_{\nat\backslash A}\circ \mu_{\nat\backslash A}\circ
f\circ \nu_{\nat\backslash A}(x)=f\circ \nu_{\nat\backslash A}(x).
$$
(first equality through the forth one follow from (\ref{eq_sp_ps}),
(\ref{eq_sp_gc}), (\ref{eq_sp_h}), (\ref{eq_sp_ps})
respectively). Thereby, $h^{-1}\circ r$ and $f$ coincide over
$\nat\backslash A.$ Considering that $f$ and $h^{-1}\circ r$
are stationary at $A,$ the following holds
$$
f=h^{-1}\circ r=h_2\circ h_1\circ r_1\circ r_2.
$$
Let there be
$$
s_1:\ \ffp'(x,-2y)\leftrightarrow\ffp'(x,-2y-1),\ y\geq 0,
$$
$$
s_2:\ \ffp'(x,-2y-1)\leftrightarrow\ffp'(x,-2y-2),\ y\geq 0,
$$
$$
M_1=\{(\ffp'(x,-2y),\ffp'(x,-2y-1),\ffp'(x,2y+1),\ffp'(x,2y+2)),\
y\geq 0\},
$$
$$
M_2=\{(\ffp'(x,-2y-1),\ffp'(x,-2y-2),\ffp'(x,2y+1),\ffp'(x,2y+2)),\
y\geq 0\}.
$$
Obviously $s_1,s_2$ are machings in $Q.$ Thus, it is easy to notice that $(r_1,s_1,M_1),$ $(r_2,s_2,M_2),$ $(h_1,s_1,M_1),$
$(h_2,s_2,M_2)$
are correct in $Q$ threesomes. The statement is proved.
\end{proof}

\begin{theorem}\label{statanyhpair}Let the class $Q$ satisfy the requirements
\textup{\ref{dembase}}--\textup{\ref{demzamkn}}. Then for any permutation $f\in\gr(Q)$ there exist the correct in $Q$ threesomes
$(f_1,g_1,B_1),\ldots,(f_k,g_k,B_k)$ such that
$$
f=f_1\circ\ldots\circ f_k.
$$
\end{theorem}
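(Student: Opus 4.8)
The plan is to obtain this theorem as an immediate composition of the two preceding statements, which together already carry out all the substantive work. First I would invoke Statement \ref{statstacionar}: since $Q$ satisfies requirements \ref{dembase}--\ref{demzamkn}, every permutation $f\in\gr(Q)$ factors as $f=f_1\circ f_2$, where each $f_i$ ($i=1,2$) again lies in $\gr(Q)$ and is stationary over a set $A_i$ that is regular in $Q$ and whose complement $\nat\backslash A_i$ is likewise regular in $Q$. This first reduction is the genuinely essential step logically, because $f$ itself need not be stationary over any large set; requirement \ref{demrazb} is exactly what makes such a splitting possible.

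Next I would apply Statement \ref{statstacionarpair} to each factor separately. Its hypotheses are requirements \ref{dembase}, \ref{demnumerate}, \ref{demzamkn} — all implied by \ref{dembase}--\ref{demzamkn} — together with stationarity of the permutation over a regular set whose complement is regular. These are precisely the properties just established for $f_1$ and for $f_2$. Hence $f_1$ decomposes as a composition $g_1\circ\ldots\circ g_p$ of matchings arising from correct triples $(g_1,h_1,C_1),\ldots,(g_p,h_p,C_p)$ over $Q$, and $f_2$ decomposes in the same way as $g_{p+1}\circ\ldots\circ g_{p+r}$ with correct triples $(g_{p+1},h_{p+1},C_{p+1}),\ldots,(g_{p+r},h_{p+r},C_{p+r})$ over $Q$.

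Finally I would concatenate the two lists: by associativity of composition,
$$
f=f_1\circ f_2=g_1\circ\ldots\circ g_p\circ g_{p+1}\circ\ldots\circ g_{p+r},
$$
so the triples produced for $f_1$ followed by those produced for $f_2$ witness the claim with $k=p+r$. Relabelling $(g_j,h_j,C_j)$ as $(f_j,g_j,B_j)$ recovers the notation of the statement.

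Since no new construction is needed at this level, there is no real obstacle here — the difficulty has already been absorbed into Statement \ref{statstacionar} (the factorization of an arbitrary permutation into two pieces each stationary on a large regular set) and into Statement \ref{statstacionarpair} (the explicit realization of such a stationary permutation through the auxiliary pairings $r_1,r_2,s_1,s_2,h_1,h_2$ and their correct triples). The only point requiring a line of care is bookkeeping: verifying that adjoining the two finite lists of triples yields a composition equal to $f$, and that each triple remains \emph{correct over} $Q$ after concatenation. Both are immediate, because correctness of a triple $(f_j,g_j,B_j)$ is a property of that triple alone and is unaffected by the presence of the others.
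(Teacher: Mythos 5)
Your proof is correct and is essentially identical to the paper's own argument: the paper likewise applies Statement \ref{statstacionar} to factor $f$ into two permutations stationary on regular sets with regular complements, then applies Statement \ref{statstacionarpair} to each factor and concatenates the resulting lists of correct triples. No gaps.
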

\begin{proof}
Using the statement \ref{statstacionar} let one represent $f$ in the
form of compositions of permutations that are in $Q$ together with
the permutations reverse to them, stationary in some sets, regular
together with their complements. Then using the statement
\ref{statstacionarpair} let one represent each of these permutations
in the form of compositions of matchings having the corresponding
correct in $Q$ threesomes. Theorem proved.
\end{proof}

\begin{statse}\label{statdomainswap}
Let the permutation $f$ be correctly defined by the formula
$$
f:\ g_1(\tilde{x_1})\rightarrow h_1(\tilde{x_1}),\
\rho_1(\tilde{x_1})\text{ is true,}\ \ldots,\
g_n(\tilde{x_n})\rightarrow h_n(\tilde{x_n}),\
\rho_n(\tilde{x_n})\text{ is true,}
$$
where $g_i,h_i$ are some functions, $\rho_i$ are some predicates
 ($1\leq i\leq n$). Besides, let $p$
be some permutation. Then it holds that
$$
p\circ f\circ p^{-1}:\
\begin{array}{l}p(g_1(\tilde{x_1}))\rightarrow p(h_1(\tilde{x_1})),\
\rho_1(\tilde{x_1})\text{ true,} \\ ............................, \\
p(g_n(\tilde{x_n}))\rightarrow p(h_n(\tilde{x_n})),\
\rho_n(\tilde{x_n})\text{ true.}\end{array}
$$
\end{statse}
\begin{proof}
One can be convinced by checking.
\end{proof}

\begin{statse}\label{statfunctionswap}
Let $g(x,y)$ be a partially defined function and $f$ is a permutation
that is its code. Besides, let
$(h_1(x,y),h_2(x,y))$ be a permutation on the set $\nat^2,$ $p$
is a permutation that is defined by the formula
$$
p:\ \ffp_3(x,y,z)\rightarrow \ffp_3(h_1(x,y),h_2(x,y),z).
$$
Then $p^{-1}\circ f\circ p$ is a code of the function
$g(h_1(x,y),h_2(x,y)).$
\end{statse}
\begin{proof}
Let $(h'_1,h'_2)$ be an inverse permutation to $(h_1,h_2)$ (i.e.
$h'_1(h_1(x,y),h_2(x,y))=x,$ $h'_2(h_1(x,y),h_2(x,y))=y$ for any
$x,y$). It is clear that
$$
p^{-1}:\ \ffp_3(x,y,z)\rightarrow \ffp_3(h'_1(x,y),h'_2(x,y),z).
$$
By using the statement \ref{statdomainswap}, one gets
$$
p^{-1}\circ f\circ p:\
\ffp_3(h'_1(x,y),h'_2(x,y),0)\leftrightarrow\ffp_3(h'_1(x,y),h'_2(x,y),g(x,y)+2),
$$
$$
g(x,y)\text{ defined}.
$$
Considering the fact that $(h'_1,h'_2)$ is a permutation one can make the following substitution
$z=h'_1(x,y),$ $t=h'_2(x,y)$:
$$
p^{-1}\circ f\circ p:\
\ffp_3(z,t,0)\leftrightarrow\ffp_3(z,t,g(h_1(z,t),h_2(z,t))+2),
$$
$$
g(h_1(z,t),h_2(z,t))\text{ defined}.
$$
The statement proved.
\end{proof}

\begin{statse}\label{statuncompelementar}
Let there be some functions $f(x),g(x)$ , $r(x_1,x_2)$ is a function,
in which for any $x,y$ it holds that
$$
r(x,2y)=f(x),
$$
$s$ is a permutation that is a code of $r.$ Then $\perf_g^{-1}\circ
s\circ\perf_g$ is a permutation that is the code of such function
$q(x_1,x_2),$ that
$$
q(x,2y)=f(g(x)).
$$
\end{statse}
\begin{proof}
From the definition of permutation $\perf_g$ (see (\ref{eq_perf}))
it follows that
$$
\perf_g:\ \ffp_3(x,y,z)\rightarrow\ffp_3(h_1(x,y),h_2(x,y),z),
$$
where $(h_1,h_2)$ is a permutation on the $\nat^2,$ such that for any
$x,y$ it holds that
$$
h_1(x,2y)=g(x),\quad h_2(x,2y)=2\ffp_2(x,y).
$$
(the equalities follow from the first rule in (\ref{eq_perf})).
Using the statement \ref{statfunctionswap}, one gets
$$
q(x,2y)=r(h_1(x,2y),h_2(x,2y))=r(g(x),2\ffp_2(x,y))=f(g(x)).
$$
The statement is proved.
\end{proof}

\begin{statse}\label{statunarcompos}
Let $f_1(x),\ldots,f_n(x)$ be such functions that
$$
f=f_1\circ\ldots\circ f_n,
$$
$$
p=\perf_{f_n}^{-1}\circ\ldots\perf_{f_1}^{-1}\circ\ffpx\circ\perf_{f_1}\circ\ldots\circ\perf_{f_n}.
$$
Then $p$ is the code of the function $g(x,y)$ such that for any
$x,y$ it satisfies
$$
g(x,2y)=f(x).
$$
\end{statse}
\begin{proof}
By definition the permutation $\ffpx$ is the code of the function
$p'(x,y)=x,$ thus for any $x$ and $y$ it holds that $p'(x,2y)=x.$
Thereby, by applying $n$ times the statement
\ref{statuncompelementar}, one obtains the statement that is being proved.
\end{proof}

\begin{statse}\label{statdelete}
Let $f_1,f_2$ be matchings, $A\subseteq\nat.$  Besides, let it
satisfy the following conditions:
\begin{enumerate}
\item\label{dem_statdel_one} $A,$ $f_1(A)$ and $f_2(A)$ do not intersect pairwise;
\item $f_1$ stationary at $f_2(A);$ \item $f_2$
stationary at $\nat\backslash(A\cup f_2(A))$.
\end{enumerate}
Then
$$
(f_1\circ f_2)^4\circ f_2:\ x\leftrightarrow f_1(x),\ x\in A.
$$
\end{statse}
\begin{proof}
It is clear that the set of all arrows (excluding loops) of the
graph of the permutation $f_1$ consist of pairs of arrows of the
form $x\leftrightarrow f_1(x),$ $x\in A$ and some other pairs of the
arrows of the form $x\leftrightarrow y,$ where $x,y\notin A\cup
f_1(A)\cup f_2(A).$ An analogous set for $f_2$ consists of only of
pairs of arrows $x\leftrightarrow f_2(x)$ ($x\in A$). From this with
the consideration of the condition \ref{dem_statdel_one} it follows
that the graph of the permutation $f_1\circ f_2$ consist of
non-intersecting cycles of length $3$ of the form $x\rightarrow
f_2(x)\rightarrow f_1(x)\rightarrow x$ ($x\in A$), cycles of the
length $2$ of the form $x\leftrightarrow f_1(x)$ ($x,f_1(x)\notin
A\cup f_1(A)\cup f_2(A)$) and loops. After raising to the $4$-th
power the cycles of length $3$ and loops remain at their place,
cycles of length $2$ turn into pairs of loops. After multiplying the
result by  $f_2$ cycles of length $3$ transform into cycles of
lengths $2$ of the form $x\leftrightarrow f_1(x)$ ($x\in A$) and
loops $f_2(x)\rightarrow f_2(x)$ ($x\in A$), all the rest remains at
its place. The statement is proved.
\end{proof}

\begin{statse}\label{statdelodd}
If $f$ is a permutation, which is the code of everywhere defined function $g(x_1,x_2),$ then $(f\circ\ffdelone)^4\circ\ffdelone$ is
the code of a partially defined function $g'(x_1,x_2),$ where
$$
g'(x_1,x_2)=\begin{cases}g(x_1,x_2),\text{ if  $x_2$ is even,}
\\ \text{undefined otherwise.}\end{cases}
$$
\end{statse}
\begin{proof}
Let $A=\{\ffp_3(x,2y,0),\ x,y\in\nat\}.$ From (\ref{eq_code}) and
(\ref{eq_delone}) it follows that
$$
f(A)\subseteq \{\ffp_3(x,2y,z):\ z\geq 2\},
$$
$$
\ffdelone(A)=\{\ffp_3(x,2y,1):\ x,y\in\nat\}.
$$
Thereby, $A,$ $f(A),$ $\ffdelone(A)$ are pairwise non-intersecting.
Besides, it is easy to check that $f$ is stationary at
$\{\ffp_3(x,2y,1):\ x,y\in\nat\},$ and $\ffdelone$ is at
$\{\ffp_3(x,2y,z):\ z\geq 2\}.$ Thereby, it satisfies all of the
conditions of the claim \ref{statdelete} for the set $A$ and the
functions $f,$ $\ffdelone.$ From that it follows that
$$
(f\circ\ffdelone)^4\circ\ffdelone:\ x\leftrightarrow f(x),\ x\in
A.
$$
This can be rewritten in the following way
$$
(f\circ\ffdelone)^4\circ\ffdelone:\
\ffp_3(x,2y,0)\leftrightarrow\ffp_3(x,2y,g(x,2y)+2).
$$
The statement is proved.
\end{proof}

\begin{statse}\label{statchetnhpair}
Let the class $Q$ satisfies the requirements \textup{\ref{dembase}},
\textup{\ref{demnumerate}}, \textup{\ref{demzamkn}},
$$Q^{(1)}=[\{q_1,\ldots,q_n\}].$$
Then any matching over the set of all even numbers  $f\in Q$ can be
expressed in terms of compositions of permutations
$\perf_{q_1},\ldots,\perf_{q_n},$ $\ffpx,$ $\ffswapone,$
$\ffswaptwo,$ $\ffmove,$ $\ffplace,$ $\ffdelone$ and reverse to it.
\end{statse}
\begin{proof}
Let
\begin{equation}\label{eq_hpair_shtrih}
f'(x)=f(2x)/2.
\end{equation}
It is obvious that $f'(x)$ is a matching that belongs to $\ccq.$ Let
there be
$$
f'=r_1\circ\ldots\circ r_k,
$$
where $r_1,\ldots,r_k\in\{q_1,\ldots,q_n\}.$ Let there be
$$
\psi=\perf_{r_k}^{-1}\circ\ldots\circ\perf_{r_1}^{-1}\circ\ffpx\circ\perf_{r_1}\circ\ldots\circ\perf_{r_k}.
$$
From the statement \ref{statunarcompos} it follows that $\psi$ is the code of a function
$g(x,y),$ and further, for any $x$ and $y$
it satisfies
\begin{equation}\label{eq_hpair_compos}
g(x,2y)=f'(x).
\end{equation}
 Let one assume
$$
\psi_1=(\psi\circ\ffdelone)^4\circ\ffdelone.
$$
From the statement \ref{statdelodd} and (\ref{eq_hpair_compos}) it follows that
 $\psi_1$ is the code of a partially defined function $g'(x,y),$
 that is being defined by the equality

$$
g'(x,y)=\begin{cases}f'(x),\text{ if $y$ is even,}\\ \text{undefined otherwise.}\end{cases}
$$
From this it follows that
\begin{equation}\label{eq_hpair_code}
\psi_1:\ \ffp_3(x,2y,0)\leftrightarrow\ffp_3(x,2y,f'(x)+2).
\end{equation}
Let
$$
\psi_2=\ffswapone\circ\psi_1\circ\ffswapone^{-1},\quad
\psi_3=\ffswaptwo\circ\psi_1\circ\ffswaptwo^{-1}.
$$
From the first two rules in the formulas (\ref{eq_swapone}) and
(\ref{eq_swaptwo}), from (\ref{eq_hpair_code}) and from the
statement \ref{statdomainswap} it follows that
\begin{equation}\label{eq_hpair_swapone}
\psi_2:\ \ffp_3(x,2y,0)\leftrightarrow\ffp_3(x+2,2y,f'(x)+2),
\end{equation}
\begin{equation}\label{eq_hpair_psithree}
\psi_3:\ \ffp_3(x,2y,0)\leftrightarrow\ffp_3(f'(x)+2,2y,x+2).
\end{equation}
Minding the fact that $f'$ is a matching, let one substitute
$u=f'(x)$ and rewrite (\ref{eq_hpair_psithree}) in terms of
\begin{equation}\label{eq_hpair_swaptwo}
\psi_3:\ \ffp_3(f'(u),2y,0)\leftrightarrow\ffp_3(u+2,2y,f'(u)+2).
\end{equation}
Let
$$
\psi_4=\psi_3\circ\psi_2.
$$
From (\ref{eq_hpair_swapone}), (\ref{eq_hpair_swaptwo}) and the fact
that $f'$ is a matching it follows that
\begin{equation}\label{eq_hpair_musor}
\begin{array}{c}
\psi_4:\ \ffp_3(x,2y,0)\leftrightarrow\ffp_3(f'(x),2y,0);\\
\ffp_3(x+2,2y,f'(x)+2)\leftrightarrow\ffp_3(f'(x)+2,2y,x+2).
\end{array}
\end{equation}
Let
$$
\psi_5=\ffmove\circ\psi_4\circ\ffmove^{-1}.
$$
From (\ref{eq_move}), (\ref{eq_hpair_musor}) and the statement
\ref{statdomainswap} it follows that
\begin{equation}\label{eq_hpair_moved}
\begin{array}{l}
\psi_5:\ \ffp_3(x,2y,0)\leftrightarrow\ffp_3(f'(x),2y,0),\ y>0;\\
\quad \ffp_3(x+2,2y,f'(x)+2)\leftrightarrow\ffp_3(f'(x)+2,2y,x+2).
\end{array}
\end{equation}
One can assume
\begin{equation}\label{eq_hpair_badplace}
\psi_6:\ \ffp_3(x,0,0)\leftrightarrow\ffp_3(f'(x),0,0).
\end{equation}
From (\ref{eq_hpair_musor}), (\ref{eq_hpair_moved}) and
$\psi_5=\psi_5^{-1}$ it follows that
$$
\psi_6=\psi_5\circ\psi_4.
$$
From the first rule in (\ref{eq_place}), from
(\ref{eq_hpair_badplace}) and from the statement
\ref{statdomainswap} it follows that
$$
\ffplace\circ\psi_6\circ\ffplace^{-1}:\ 2x\leftrightarrow 2f'(x).
$$
From this and from the fact that
(\ref{eq_hpair_shtrih}) it follows that
$$
f=\ffplace\circ\psi_6\circ\ffplace^{-1}.
$$
The statement is proved.
\end{proof}

\begin{theorem}\label{theorem_main}
Let class $Q$ satisfy all of the requirements
\textup{\ref{dembase}}--\textup{\ref{demzamkn}},
$$Q^{(1)}=[\{q_1,\ldots,q_n\}].$$
Then any permutation $f\in\gr(Q)$ can be expressed in terms of compositions of permutations
$\perf_{q_1},\ldots,\perf_{q_n},$ $\ffpx,$
$\ffswapone,$ $\ffswaptwo,$ $\ffmove,$ $\ffplace,$ $\ffdelone,$
$\ffs_{01},$ $\ffs_{02},$ $\ffs_{03},$ $\ffs_{12},$ $\ffs_{13},$
$\ffs_{23}$ and their inverses.
\end{theorem}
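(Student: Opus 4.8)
The plan is to combine the two decomposition results already in place. First I would invoke Theorem~\ref{statanyhpair}: every $f\in\gr(Q)$ can be written as a finite composition $f=f_1\circ\cdots\circ f_k$ in which each triple $(f_i,g_i,B_i)$ is correct over $Q$; that is, $f_i$ is the matching sending $b_1\leftrightarrow b_3$ and $b_2\leftrightarrow b_4$ for $(b_1,b_2,b_3,b_4)\in B_i$, with the companion matching $g_i\in Q$ sending $b_1\leftrightarrow b_2$. Since arbitrary compositions of the listed permutations are allowed, it is enough to express a single correct-triple matching $f_i$; concatenating the resulting words then yields $f$. This reduction is precisely what the correct-triple formalism was set up to provide, so this step should be routine.

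The heart of the matter is expressing one correct-triple matching, and the structural clue is that the only generators appearing in the statement but not in Statement~\ref{statchetnhpair} are the six transpositions $\ffs_{01},\ldots,\ffs_{23}$ --- exactly one for each unordered pair of residues modulo $4$. This matches the four coordinates of a tuple in $\nat^4$: the intended picture is to carry $b_1,b_2,b_3,b_4$ of each tuple onto the four residues $4x,4x+1,4x+2,4x+3$ of a common block, so that $f_i$ becomes the within-block double swap realised by $\ffs_{02}\circ\ffs_{13}$ (swapping $4x\leftrightarrow 4x+2$ and $4x+1\leftrightarrow 4x+3$), while the companion single swap $g_i$ corresponds to $\ffs_{01}$. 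Thus, if $p$ denotes a permutation taking each tuple into block form, then $f_i=p^{-1}\circ\ffs_{02}\circ\ffs_{13}\circ p$, and everything reduces to producing such a $p$ inside the generated group.

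To build $p$ I would use the function-coding machinery already developed: since $f_i,g_i\in Q^{(1)}$ are involutions, each is a word in $q_1,\ldots,q_n$, and Statements~\ref{statunarcompos} and \ref{statchetnhpair} turn such words into permutations expressible through $\perf_{q_1},\ldots,\perf_{q_n}$, $\ffpx$, $\ffplace$, $\ffmove$, $\ffswapone$, $\ffswaptwo$, $\ffdelone$ and their inverses. Using $g_i$ to locate and index the pairs, and the coded form of $f_i$ to align the remaining two coordinates, I would assemble $p$ as a chain of conjugations --- direct analogues of the permutations $\psi_1,\ldots,\psi_6$ in the proof of Statement~\ref{statchetnhpair} --- tracking each stage with the arrow-diagram calculus of Statement~\ref{statdomainswap}, which records how a permutation transforms under conjugation. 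Because $p$ is ultimately a word in the allowed generators, expressing $f_i$ as $p^{-1}\circ\ffs_{02}\circ\ffs_{13}\circ p$ introduces no circularity even though $f_i$ and $g_i$ are consulted while constructing $p$.

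The step I expect to be the main obstacle is this explicit construction of $p$ together with the verification that every arrow definition involved is legitimate: that each displayed rule is a genuine bijection with pairwise disjoint domains and ranges (as had to be checked for $\perf_f$, $\ffmove$, $\ffplace$, $\ffswapone$, $\ffswaptwo$), and that the residue bookkeeping with $\ffs_{01},\ldots,\ffs_{23}$ really reproduces $f_i$ and not some neighbouring permutation. As in all the preceding statements of this chapter, the correctness of the arrow diagrams must be confirmed case by case, and that careful combinatorial verification --- rather than any conceptual difficulty --- is where the real work lies.
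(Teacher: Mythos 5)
Your first step (reducing via Theorem \ref{statanyhpair} to expressing the matchings $f_i$ of correct triples) agrees with the paper, but the core of your plan fails. You propose to realise each such matching as $f_i=p^{-1}\circ\ffs_{02}\circ\ffs_{13}\circ p$. This identity is impossible in general: $\ffs_{02}\circ\ffs_{13}$ moves \emph{every} natural number, so any conjugate of it is fixed-point-free, whereas a correct-triple matching $f_i$ fixes every number that is not a component of some vector in $B_i$ --- and these fixed points really occur (for instance, the triple $(r_2,s_2,M_2)$ built in the proof of Statement \ref{statstacionarpair} yields a matching that is the identity on all of $\nat\backslash A$). Since conjugation preserves the number of fixed points, no choice of $p$ can work. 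The same objection applies to your suggestion that $g_i$ ``corresponds to'' $\ffs_{01}$. Independently of this, the construction of $p$ --- a permutation in the generated group aligning the four components of each vector of $B_i$ onto a mod-$4$ block --- is exactly where all the difficulty would sit, and you only gesture at it; the machinery of Statements \ref{statunarcompos} and \ref{statchetnhpair} manipulates codes of unary functions and gives no obvious way to produce such an alignment.

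The paper assigns the six transpositions $\ffs_{ij}$ a different and more modest role. After reducing to a single matching $f\in Q$, it splits $f$ as $f_{00}\circ f_{01}\circ\cdots\circ f_{33}$, where $f_{ij}$ is the part of $f$ exchanging points $x\equiv i$ with $f(x)\equiv j\pmod 4$; each $f_{ij}$ is a matching supported on $\{x:\ \ffrm(x,4)\in\{i,j\}\}$. A word $q_{ij}$ in the $\ffs$'s then conjugates that support into the set of even numbers, and Statement \ref{statchetnhpair} --- which handles an \emph{arbitrary} matching over the evens, fixed points and all, by passing to the unary function $f'(x)=f(2x)/2$ and its code --- finishes the job. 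So the $\ffs_{ij}$ are support-movers, not a normal form for $f_i$; to repair your argument you would essentially have to rediscover this mod-$4$ decomposition and route everything through Statement \ref{statchetnhpair}.
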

\begin{proof}
According to the theorem \ref{statanyhpair} one can contend that $f$
is a matching.

Let there be
$$
f_{ij}:\ x\leftrightarrow f(x),\ x\equiv i\ (\mathrm{mod}\ 4),\
f(x)\equiv j\ (\mathrm{mod}\ 4),\quad 0\leq i\leq j\leq 3.
$$
It is obvious that $f_{ij}$ is a matching, $f=f_{00}\circ
f_{01}\circ\ldots\circ f_{33}.$

Obviously for any $i,j$ ($0\leq i\leq j\leq 3$) there exists a
permutation $q_{ij},$ that can be expressed in terms of composition
of permutations $\ffs_{01},$ $\ffs_{02},$ $\ffs_{03},$ $\ffs_{12},$
$\ffs_{13},$ $\ffs_{23},$ that maps the set $\{x:\
\ffrm(x,4)\in\{i,j\}\}$ into a subset of the set of all even numbers
(one can notice that $q_{ij}^{-1}$ can also be expressed in terms of
the composition of these permutations). Let
$$
f'_{ij}=q_{ij}\circ f_{ij}\circ q_{ij}^{-1}\quad (0\leq i\leq
j\leq 3).
$$
From the statement \ref{statdomainswap} and the fact that $f_{ij}$
is a matching over the set $\{x:\ \ffrm(x,4)\in\{i,j\}\}$ it follows
that $f'_{ij}$ is a matching over the set of all even numbers.
Therefore, for any  $i,j$ ($0\leq i\leq j\leq 3$) $f'_{ij}$ it can
be expressed in terms of the composition of permutations from the
conditions of the theorem (statement \ref{statchetnhpair}). From the
fact that
$$
f_{ij}=q_{ij}^{-1}\circ f'_{ij}\circ q_{ij}\quad (0\leq i\leq
j\leq 3),
$$
follows the claim of the theorem.
\end{proof}
\begin{conseq}
If the class $Q$ satisfies the requirements
\textup{\ref{dembase}}--\textup{\ref{demrazb}},
\textup{\ref{demunarbasis}}, then the group $\gr(Q)$ is finitely generated.
\end{conseq}
\begin{proof}
Indeed, from ~\cite{basis} is known that from the requirements
\ref{dembase}, \ref{demnumerate}, \ref{demunarbasis} follows the existence of the finite basis with respect to superpositioning in
$Q^{(1)}$. Further applying the theorem
 \ref{theorem_main}.
\end{proof}

\section{Generatability of a Group $\gr(Q)$ by Using Two Permutations}

\begin{statse}\label{statthreetwo}
Let class $Q$ satisfy the requirements \textup{\ref{dembase}},
\textup{\ref{demzamkn}}. Besides let $A,$ $B,$ $C$ be
non-intersecting sets, $B$ is regular in $Q$. Then any permutation
$f\in Q,$ that is a matching over the set $A\cup B\cup C,$ can be
expressed in terms of compositions of permutations from $Q,$ when
each of them is a matching over the set $A\cup B$ or $B\cup C.$
\end{statse}
\begin{proof}
Let
$$
f_1:\ x\leftrightarrow f(x),\ x,f(x)\in A\cup B,
$$
$$
f_2:\ x\leftrightarrow f(x),\ x\in C,f(x)\in B\cup C,
$$
$$
f_3:\ x\leftrightarrow f(x),\ x\in A,f(x)\in C.
$$
Obviously
$$
f=f_1\circ f_2\circ f_3.
$$
Let one assume that
$$
g_1:\ x\leftrightarrow\nu_B(x),\ x\in A,f(x)\in C,
$$
$$
g_2:\ \nu_B(x)\leftrightarrow f(x),\ x\in A,f(x)\in C.
$$
One can remark that $\nu_B(x)$ is injective and therefore the definitions are correct.
Furthermore, it is easy to check that
$$
g_2\circ g_1:\ x\rightarrow f(x)\rightarrow \nu_B(x)\rightarrow
x,\ x\in A,f(x)\in C.
$$
From this is follows that
$$
g_1\circ g_2\circ g_1:\ x\leftrightarrow f(x),\ x\in A,f(x)\in C.
$$
Hereby, $f_3=g_1\circ g_2\circ g_1,$ or
$$
f=f_1\circ f_2\circ g_1\circ g_2\circ g_1.
$$
One can note that $f_1,f_2,g_1,g_2\in Q$. Besides, $f_1,$ $g_1$ are
matchings over $A\cup B,$ and $f_2,$ $g_2$  are matchings over
$B\cup C.$ The claim is proved.
\end{proof}

\begin{statse}\label{statsequence}
Let class $Q$ satisfy the requirements \textup{\ref{dembase}},
 \textup{\ref{demzamkn}}. Besides,
let $\{A_1,\ldots,A_n\}$ be the partition of the set
$A\subseteq\nat$ into regular over $Q$ sets $(n\geq 2)$. Then any
matching $f\in Q$ over the set $A$ can be expressed in terms of
compositions of permutations over $Q,$ where each of them is a
 matching over the set of the type $A_i\cup
A_{i+1}$ $(1\leq i\leq n-1)$.
\end{statse}
\begin{proof}
Let one prove this statement by inducting on $n.$ For $n=2$ the
statement is obvious. Let $n\geq 3$ and the statement is proved for
the values $2,\ldots,n-1.$ Let one apply the statement
\ref{statthreetwo} for sets $A_1,$ $A_2\cup\ldots\cup A_{n-1}$ and
$A_n$ (from the statement
 \ref{statunionregular} it follows that these sets are regular). Thus, $f$ can be expressed in terms of the composition of permutations from $Q,$ each of which is a  matching over the set $A_1\cup\ldots\cup A_{n-1}$ or $A_2\cup\ldots\cup A_n.$
 By applying for each of these permutations and corresponding sets the inductive hypothesis, one obtains the prove for the statement.
\end{proof}

\begin{statse}\label{statmegadelete}
Let the  matching  $f$ and the set of four dimensional vectors $B$
with different components \textup{(}inside the vectors and in
different vectors\textup{)} satisfy \textup{(\ref{eq_razb_f})}.
Besides, let the  matchings $f_1'$ and $f_2'$ over the set
$A\subseteq\nat$ be defined by the relations
\begin{equation}\label{eq_mdel_fones}
f_1':\ b_1\leftrightarrow b_2,\ b_3\leftrightarrow b_4,\
(b_1,b_2,b_3,b_4)\in B,
\end{equation}
\begin{equation}\label{eq_mdel_ftwos}
f_2':\ b_1\leftrightarrow b_3,\ (b_1,b_2,b_3,b_4)\in B.
\end{equation}
And let $f_1'',$ $f_2''$ be the  matchings over the sets $A_1''$ and
$A_2''$ respectively, $A,$ $A_1'',$ $A_2''$ do not intersect
pairwise,
$$
f_1=f_1'\circ f_1'',\quad f_2=f_2'\circ f_2''.
$$
Then $f=(f_1\circ f_2)^2.$
\end{statse}
\begin{proof}
One can note that if $\varphi,$ $\psi$ are matchings over the set
$A_{\varphi}$ and $A_{\psi}$ respectively, $A_{\varphi}\cap
A_{\psi}=\varnothing,$ then
$$
\varphi\circ\psi=\psi\circ\varphi.
$$
From this and from the fact that $A,$ $A_1'',$ $A_2''$ do not intersect pairwise it follows that
$$
(f_1\circ f_2)^2=f_1'\circ f_1''\circ f_2'\circ f_2''\circ
f_1'\circ f_1''\circ f_2'\circ f_2''=(f_1'\circ
f_2')^2\circ(f_1'')^2\circ(f_2'')^2=(f_1'\circ f_2')^2.
$$
From (\ref{eq_mdel_fones}) and (\ref{eq_mdel_ftwos}) it follows that
$$
f_1'\circ f_2':\ b_1\rightarrow b_4\rightarrow b_3\rightarrow
b_2\rightarrow b_1,\ (b_1,b_2,b_3,b_4)\in B.
$$
From this it follows that
$$
(f_1'\circ f_2')^2:\ b_1\leftrightarrow b_3,\ b_2\leftrightarrow
b_4,\ (b_1,b_2,b_3,b_4)\in B.
$$
The right part of this formula coincides with the right part
(\ref{eq_razb_f}). Thereby, it is justified to claim that
$$
f=(f_1'\circ f_2')^2.
$$
The claim is proved.
\end{proof}

Let one introduce a few axillary definitions. Let class $Q$ satisfy
the requirements \ref{dembase}--\ref{demrazb}, \ref{demunarbasis}.
Then from the consequence of the theorem \ref{theorem_main} it
follows that there exists a finite number of permutations from
$\gr(Q)$ in terms of compositions of them one can express any
permutation from $\gr(Q)$. From this and from the theorem
\ref{statanyhpair}
 it follows that there exist correct in
$Q$ threesomes
$$(f_1,g_1,B_1'),\ldots,(f_n,g_n,B_n')$$
such that the set that consists of  matchings
\begin{equation}\label{eq_two_pairbasis}
f_1,\ldots,f_n,
\end{equation}
generates $\gr(Q)$.

Let one define vector-function $\delta:\ \nat^4\rightarrow\nat^4$
through the equality
$$
\delta(b_1,b_2,b_3,b_4)=\begin{cases}(b_1,b_2,b_3,b_4),\text{ if
$b_1<b_2$},\\ (b_2,b_1,b_4,b_3)\text{ otherwise}.\end{cases}
$$
For all $i$ ($1\leq i\leq n$) let one assume that
$$
B_i=\{\delta(b_1,b_2,b_3,b_4):\ (b_1,b_2,b_3,b_4)\in B_i'\}.
$$
Let one remark that $(f_1,g_1,B_1),\ldots,(f_n,g_n,B_n)$ are correct in
$Q$ threesomes. Besides,
\begin{equation}\label{eq_two_bi}
B_i=\{(x,g_i(x),f_i(x),f_i(g_i(x))):\ x<g_i(x)\},\quad 1\leq i\leq
n.
\end{equation}

Let one assume that
\begin{equation}\label{eq_two_hi_def}
h_i:\ b_1\leftrightarrow b_2,\ b_3\leftrightarrow b_4,\
(b_1,b_2,b_3,b_4)\in B_i,
\end{equation}
\begin{equation}\label{eq_two_hipn_def}
h_{i+n}:\ b_1\leftrightarrow b_3,\ (b_1,b_2,b_3,b_4)\in B_i
\end{equation}
($1\leq i\leq n$). Considering that  (\ref{eq_two_bi}) this can be rewritten in the following way
\begin{equation}\label{eq_two_hi}
h_i:\ x\leftrightarrow g_i(x),\ f_i(x)\leftrightarrow
f_i(g_i(x)),\ x<g_i(x),
\end{equation}
\begin{equation}\label{eq_two_hipn}
h_{i+n}:\ x\leftrightarrow f_i(x),\ x<g_i(x).
\end{equation}

Let
\begin{equation}\label{eq_two_rol}
\ffrol(x)=x-\ffrm(x,2^{2n+1})+\ffrm(x+1,2^{2n+1}),
\end{equation}
\begin{equation}\label{eq_two_ezero}
E_0=\{2^{2n+1}x,\ x\in\nat\},
\end{equation}
\begin{equation}\label{eq_two_ei}
E_i=\ffrol^i(E_0)\quad (1\leq i<2^{2n+1}).
\end{equation}
Let one remark that $\{E_0,\ldots,E_{2^{2n+1}-1}\}$ is a partition
of the set $\nat.$ Obviously $E_0$ is regular in $Q$
($\mu_{E_0}(x)=[x/2^{2n+1}],$ if $\ffrm(x,2^{2n+1})=0,$
$\mu_{E_0}(x)=0$ otherwise, $\nu_{E_0}(x)=2^{2n+1}x$), analogously
the regularity in $Q$ for all sets $E_i$ ($1\leq i<2^{2n+1}$) can be
proved. From the statement \ref{statunionregular} it follows that
$E_0\cup E_1$ is regular in $Q.$

Let one assume that
\begin{equation}\label{eq_two_ui}
u_i:\ \nu_{E_0\cup E_1}(x)\rightarrow \nu_{E_0\cup E_1}(h_i(x)),
\end{equation}
\begin{equation}\label{eq_two_vi}
v_i=\ffrol^{2^i}\circ u_i\circ \ffrol^{-2^i}
\end{equation}
$(1\leq i\leq 2n),$
\begin{equation}\label{eq_two_all}
\ffall=v_1\circ\ldots\circ v_{2n},
\end{equation}
\begin{equation}\label{eq_two_wi}
w_i:\ \nu_{E_0\cup E_1}(x)\rightarrow \nu_{E_0\cup E_1}(f_i(x))
\end{equation}
$(1\leq i\leq n)$.

\begin{statse}\label{stat_rol_all_q}
$\ffrol,\ffrol^{-1},\ffall=\ffall^{-1}\in Q.$
\end{statse}
\begin{proof}
The statement for $\ffrol$ and $\ffrol^{-1}$ follows directly from
(\ref{eq_two_rol}). From (\ref{eq_two_hi}) and (\ref{eq_two_hipn})
it follows that $h_i\in Q$ ($1\leq i\leq 2n$), from this and from
(\ref{eq_two_ui}), (\ref{eq_two_vi}), (\ref{eq_two_all}) it follows that
$$\ffall\in Q.$$
Besides from (\ref{eq_two_hi}) and (\ref{eq_two_hipn}) it follows that
 $h_i$ is a  matching for any $i$ ($1\leq i\leq 2n$). From this
and from (\ref{eq_two_ui}) it follows that $u_i$ is a  matching over
$E_0\cup E_1$, from (\ref{eq_two_vi}) it follows that $v_i$ is a
 matching over $E_{2^i}\cup E_{2^i+1}$ ($1\leq
i\leq 2n$). Thereby,
 $v_1,\ldots,v_{2n}$ are the matchings at pairwise non-intersecting sets.
 From this and from (\ref{eq_two_all})
it follows that $\ffall$ is a  matching. Thus,
$$
\ffall^{-1}=\ffall\in Q.
$$
\end{proof}

\begin{statse}\label{stat_two_numbers}
Let $n\geq 1,$ $1\leq i\leq n,$ for all $j$ $(1\leq j\leq 2n)$ the
numbers $\alpha_j$ and $\beta_j$ are being defined by the equalities
$$
\alpha_j=\ffrm(2^{2n+1}+2^j-2^i,2^{2n+1}),\quad
\beta_j=\ffrm(2^{2n+1}+2^j-2^{i+n},2^{2n+1}).
$$

Then the numbers $0,$ $1,$ $\alpha_j,$ $\alpha_j+1$ $(j\neq i)$,
$\beta_j,$ $\beta_j+1$ $(j\neq i+n)$ are pairwise distinct.
\end{statse}
\begin{proof}
Considering that all numbers $\alpha_j,$ $\beta_j$ are even, it
sufficies to say that the numbers $0,$ $\alpha_j$ ($j\neq i$),
$\beta_j$ ($j\neq i+n$) are pairwise different. Let one prove this
from contradiction. Let $\alpha_j=0.$ Then $2^j\equiv 2^i\
(\mathrm{mod}\ 2^{2n+1}),$ i.e. $i=j.$ Analogously, if $\beta_j=0,$
then $j=i+n.$ If $\alpha_{j_1}=\alpha_{j_2},$ then $2^{j_1}\equiv
2^{j_2}\ (\mathrm{mod}\ 2^{2n+1}),$ i.e. $j_1=j_2.$ One can proceed
analogously with the case $\beta_{j_1}=\beta_{j_2}.$ If
$\alpha_{j_1}=\beta_{j_2},$ then $2^{j_1}+2^{i+n}\equiv 2^{j_2}+2^i\
(\mathrm{mod}\ 2^{2n+1}),$ i.e. $2^{j_1}+2^{i+n}=2^{j_2}+2^i$
(because the left and the right part are between $4$ and
$2^{2n+1}$). This equality is possible only if $j_1=i$ and
$j_2=i+n.$ The statement is proved.
\end{proof}

\begin{statse}\label{statw}
Permutations $w_i$ $(1\leq i\leq n)$ can be expressed in terms of a composition of permutations
$\ffall$ and $\ffrol.$
\end{statse}
\begin{proof}
Let one fix $i.$ Let
\begin{equation}\label{eq_two_sone}
s_1=\ffrol^{-2^i}\circ\ffall\circ\ffrol^{2^i}=\ffrol^{2^{2n+1}-2^i}\circ\ffall\circ\ffrol^{2^i},
\end{equation}
\begin{equation}\label{eq_two_stwo}
s_2=\ffrol^{-2^{i+n}}\circ\ffall\circ\ffrol^{2^{i+n}}=\ffrol^{2^{2n+1}-2^{i+n}}\circ\ffall\circ\ffrol^{2^{i+n}}.
\end{equation}
One has
$$
s_1=\ffrol^{-2^i}\circ v_1\circ\ldots\circ
v_{2n}\circ\ffrol^{2^i}=(\ffrol^{-2^i}\circ
v_1\circ\ffrol^{2^i})\circ\ldots\circ(\ffrol^{-2^i}\circ
v_{2n}\circ\ffrol^{2^i})=
$$
$$
=p_1'\circ\ldots\circ p_{2n}',
$$
where
\begin{equation}\label{eq_two_pjs}
p_j'=\ffrol^{2^j-2^i}\circ u_j\circ\ffrol^{2^i-2^j}\quad (1\leq
j\leq 2n).
\end{equation}
From (\ref{eq_two_rol}), (\ref{eq_two_ei}) the statement
\ref{statdomainswap} and from the fact that $u_j$ is a  matching at
$E_0\cup E_1,$ it follows that $p_j'$ is a matching at
$E_{\alpha_j}\cup E_{\alpha_j+1},$ where
$$
\alpha_j=\ffrm(2^{2n+1}+2^j-2^i,2^{2n+1}),\quad 1\leq j\leq 2n.
$$
Analogously,
$$
s_2=p_1''\circ\ldots\circ p_{2n}'',
$$
where
\begin{equation}\label{eq_two_pjss}
p_j''=\ffrol^{2^j-2^{i+n}}\circ u_j\circ\ffrol^{2^{i+n}-2^j}\quad
(1\leq j\leq 2n),
\end{equation}
$p_j''$ is a  matching at $E_{\beta_j}\cup E_{\beta_j+1},$ where
$$
\beta_j=\ffrm(2^{2n+1}+2^j-2^{i+n},2^{2n+1}),\quad 1\leq j\leq 2n.
$$

From the statement \ref{stat_two_numbers} it follows that the set
$E_0\cup E_1,$ and all sets $E_{\alpha_j}\cup E_{\alpha_j+1}$
($j\neq i$), $E_{\beta_j}\cup E_{\beta_j+1}$ ($j\neq i+n$) do not intersect pairwise. From this it follows that
$$
p_i'\circ p_j'=p_j'\circ p_i'
$$
for all $j\neq i$ and
$$
p_{i+n}''\circ p_j''=p_j''\circ p_{i+n}''
$$
for all $j\neq i+n.$ From this one can conclude that
\begin{equation}\label{eq_two_sones_ss}
s_1=p_1'\circ\ldots\circ p_{2n}'=s_1'\circ s_1'',
\end{equation}
where
\begin{equation}\label{eq_two_sones}
s_1'=p_i',
\end{equation}
$$
s_1''=p_1'\circ\ldots\circ p_{i-1}'\circ p_{i+1}'\circ\ldots\circ
p_{2n}'.
$$
Analogously,
\begin{equation}\label{eq_two_stwos_ss}
s_2=s_2'\circ s_2'',
\end{equation}
where
\begin{equation}\label{eq_two_stwos}
s_2'=p_{i+n}'',
\end{equation}
$$
s_2''=p_1''\circ\ldots\circ p_{i+n-1}''\circ
p_{i+n+1}''\circ\ldots\circ p_{2n}''.
$$
Besides, from the fact that there is no pairwise intersection for the given sets it follows that
 $s_1'',$ $s_2''$ are  matchings.

From (\ref{eq_two_hi_def}), (\ref{eq_two_hipn_def}),
(\ref{eq_two_ui}) it follows that
\begin{equation}\label{eq_two_ui_b}
u_i:\ \nu_{E_0\cup E_1}(b_1)\leftrightarrow\nu_{E_0\cup
E_1}(b_2),\ \nu_{E_0\cup E_1}(b_3)\leftrightarrow\nu_{E_0\cup
E_1}(b_4),\ (b_1,b_2,b_3,b_4)\in B_i,
\end{equation}
\begin{equation}\label{eq_two_uipn_b}
u_{i+n}:\ \nu_{E_0\cup E_1}(b_1)\leftrightarrow\nu_{E_0\cup
E_1}(b_3),\ (b_1,b_2,b_3,b_4)\in B_i
\end{equation}
($1\leq i\leq n$). Besides, from (\ref{eq_two_wi}) and from the fact that
$(f_i,g_i,B_i)$ is a correct threesome it follows that
\begin{equation}\label{eq_two_wi_b}
w_i:\ \nu_{E_0\cup E_1}(b_1)\leftrightarrow\nu_{E_0\cup
E_1}(b_3),\ \nu_{E_0\cup E_1}(b_2)\leftrightarrow\nu_{E_0\cup
E_1}(b_4),\ (b_1,b_2,b_3,b_4)\in B_i
\end{equation}
($1\leq i\leq n$).

From (\ref{eq_two_pjs}), (\ref{eq_two_sones}) it follows that
\begin{equation}\label{eq_two_suone_eq}
s_1'=u_i,
\end{equation}
from (\ref{eq_two_pjss}), (\ref{eq_two_stwos}) ---
\begin{equation}\label{eq_two_sutwo_eq}
s_2'=u_{i+n}.
\end{equation}
One can note that $s_1',s_2'$ are matchings over $E_0\cup E_1$ (it
follows from
 (\ref{eq_two_ui}), (\ref{eq_two_suone_eq}),
(\ref{eq_two_sutwo_eq})), $s_1''$
--- at $\bigcup_{j\neq i}(E_{\alpha_j}\cup E_{\alpha_j+1}),$
$s_2''$
--- at $\bigcup_{j\neq i+n}(E_{\beta_j}\cup E_{\beta_j+1}),$
the given sets don't intersect. From (\ref{eq_two_sones_ss}),
(\ref{eq_two_stwos_ss}), (\ref{eq_two_ui_b}),
(\ref{eq_two_uipn_b}), (\ref{eq_two_wi_b}),
(\ref{eq_two_suone_eq}), (\ref{eq_two_sutwo_eq}) it follows that for permutations
$w_i,$ $s_1,$ $s_2,$ $s_1',$ $s_2',$ $s_1'',$ $s_2''$
(together with $f,$ $f_1,$ $f_2,$ $f'_1,$ $f'_2,$ $f''_1,$ $f''_2$
respectively) and the set
$$
B''_i=\{(\nu_{E_0\cup E_1}(b_1),\nu_{E_0\cup
E_1}(b_2),\nu_{E_0\cup E_1}(b_3),\nu_{E_0\cup E_1}(b_4)),\
(b_1,b_2,b_3,b_4)\in B_i\}
$$
it satisfies all of the conditions of for the statement \ref{statmegadelete}. From this it follows that
$$
w_i=(s_1\circ s_2)^2.
$$
One can note that  $s_1$ and $s_2$ can be expressed in terms of a
composition $\ffrol$ and $\ffall$ ((\ref{eq_two_sone}),
(\ref{eq_two_stwo})). The claim is proved.
\end{proof}

\begin{statse}\label{stat_two_ezeroeone}
Any  matching over $E_0\cup E_1,$ that belong to $Q,$ can be
expressed in terms of a composition of $\ffrol$ and $\ffall.$
\end{statse}
\begin{proof}
Let $f$ be the given  matching. Let one assume that
$$
g(x)=\mu_{E_0\cup E_1}\circ f\circ \nu_{E_0\cup E_1}(x).
$$
It is obvious that $g$ is a  matching that belongs to $Q$. Thus,
there exist $i_1,\ldots,i_k$ such that $1\leq i_1,\ldots,i_k\leq n$
and
\begin{equation}\label{eq_two_ezeo_g}
g=f_{i_1}\circ\ldots\circ f_{i_k}
\end{equation}
($f_1,\ldots,f_n$ are  matchings from (\ref{eq_two_pairbasis})). It
is easy to notice that there is
$$
f:\ \nu_{E_0\cup E_1}(x)\rightarrow \nu_{E_0\cup E_1}(g(x)).
$$
From this, (\ref{eq_two_wi}) and (\ref{eq_two_ezeo_g}) it follows
that
$$
f=w_{i_1}\circ\ldots\circ w_{i_k}.
$$
From this and from the satement \ref{statw} follows the proof of the
claim.
\end{proof}

\begin{statse}\label{stat_two_eieip}
If $0\leq i<2^{2n+1}-1,$ $f\in Q$ is a matching over $E_i\cup
E_{i+1},$ then $f$ can be expressed as a composition $\ffrol$ and
$\ffall.$
\end{statse}
\begin{proof}
Let
$$
f'=\ffrol^{-i}\circ f\circ\ffrol^i.
$$
From (\ref{eq_two_rol}), (\ref{eq_two_ezero}), (\ref{eq_two_ei}) and
the statement \ref{statdomainswap} it follows that $f'$ is a
matching over
 $E_0\cup E_1.$ Besides, it is obvious that
$$
f=\ffrol^i\circ f'\circ\ffrol^{-i}=\ffrol^i\circ
f'\circ\ffrol^{2^{2n+1}-i}.
$$
From this and the statement \ref{stat_two_ezeroeone} it follows that
the claim is true.
\end{proof}

\begin{theorem1}[additional notes of the theorem \ref{theorem_per_basic}]Any permutation $f\in\gr(Q)$
can be expressed in terms of the composition of $\ffrol$ and $\ffall.$
\end{theorem1}
\begin{proof}
Indeed, from the theorem \ref{statanyhpair} it follows that $f$ can
be expressed in terms of the composition of   matchings in $Q,$
according to the statement \ref{statsequence} every such  matching
can be expressed in terms of compositions of matchings over the sets
of type $E_i\cup E_{i+1}$ ($0\leq i<2^{2n+1}-1$), that belong to
$Q,$ due to the statement
 \ref{stat_two_eieip} each of such  matchings
 can be expressed in terms of a composition $\ffrol,$ $\ffall.$
\end{proof}

\section{Finite Generability of a Group $\gr(Q)$ for Specific Classes $Q$}

\noindent\textbf{Proof of  Theorem \ref{theorem_per_concrete}.} Let
one consider from the start the case of the class $\ccfp$. The
requirements \ref{dembase}, \ref{demnumerate} for $\ccfp$ can be
easily proved. The requirement \ref{demunarbasis} follows from
~\cite{muchnik}. Let one prove that it satisfies the requirement
\ref{demrazb}.

Let $f$ be a permutation, $f,f^{-1}\in\ccfp.$ Let one pick the
function $h(x)$ of the form $2^{[\log_2(x+20)]^n}+2x$ ($n\geq 2$)
such that for any $x$ there is
$$
f(x),f^{-1}(x)<h(x).
$$
Let one note that $h(x)>x$ for any $x,$ and the function $h(x)-x$
increases. Let
$$
A_i=\{x:\ h^{i}(0)\leq x<h^{i+1}(0)\},\quad i\geq 0.
$$
It is clear that $\{A_i\}$ is a partition of the set $\nat.$ Let
\begin{equation}\label{stat_ex_rone}
R_1=A_0\cup A_4\cup A_8\cup\ldots,
\end{equation}
$$
R_2=A_2\cup A_6\cup A_{10}\cup\ldots.
$$
If $x\in A_i,$ $f(x)\in A_j,$ then $x<h^{i+1}(0)$ and, therefore,
 $f(x)<h^{i+2}(0),$ i.e. $j\leq i+1.$ Analogously,
by noting that $x=f^{-1}(f(x)),$ one obtains the result $i\leq j+1,$
i.e. $|i-j|\leq 1.$ From this is follows that
$$
f(R_1)\cap R_2=\varnothing.
$$
Let one prove the regularity of $R_1$ in $\ccfp$ (the regularity $R_2,$
$\nat\backslash R_1,$ $\nat\backslash R_2$ can be proved analogously). Let
$$
\mu_1(x)=\begin{cases}\text{the number of $x$ in the set $R_1$} \\
\quad\quad \text{(numeration with respect to increasing starting from zero),} \\ \quad\quad\text{if $x\in R_1,$} \\
\text{$0$ else,}\end{cases}
$$
$$
\nu_1(x)=\text{the element of the set $R_1$ with the number $x$}.
$$
One can note that $h(x)>2x$ for all $x,$ thus to calculate values $\mu_1(x)$ and $\chi_{R_1}(x)$ it is sufficient to have $[\log_2 x]+1$
 iterations of function $h$. From this it obviously follows that
$\mu_1,\chi_{R_1}\in\ccfp.$ Finally, one needs to prove $\nu_1\in\ccfp$
(for this obviously it is sufficient to prove that it is upper bounded by some function form $\ccfp$).

One can remark that $A_0,A_1,\ldots$ are non-intersecting intervals in
$\nat,$ their lengths $|A_i|$ increase with the increase of $i$ (becasuse
$h(x)-x$ increases). From this and from (\ref{stat_ex_rone}) it follows that for any
$i$ it satisfies
$$
\mu_1(h^{4i+1}(0)-1)+1\geq \frac{h^{4i+1}(0)}{4}.
$$

From this it follows that for some $x$ and $i$ it holds that
$\nu_1(x)=h^{4i+1}(0)-1,$ and, thus, it is true that
$$
\nu_1(x)\leq 4x+3.
$$
Let one prove that for any $x$ there is $\nu_1(x)<h^5(4x+3).$
Indeed for $x<h(0)-1$ it is obvious, for $x\geq h(0)-1$
one chooses the biggest $i$ such that $h^{4i+1}(0)-1\leq \nu_1(x).$
Let $h^{4i+1}(0)-1=\nu_1(x')$ Then one has
$$
\nu_1(x)<h^{4i+5}(0)-1\leq h^5(h^{4i+1}(0)-1)=h^5(\nu_1(x'))\leq
h^5(4x'+3)\leq h^5(4x+3).
$$
From this inequality it follows that $\nu_1(x)\in\ccfp.$ Thus,
$f,$ $R_1$ and $R_2$ satisfy all of those conditions from \ref{demrazb}.

Now let one consider the class $\ccffom$. Let one prove that the
sets built in the same way $R_1$ and $R_2$ can work here as well.
For this it suffices to show that $\mu_1$, $\mu_2$, $\nu_1$,
$\nu_2$, $\chi_{R_1}$, $\chi_{R_2}$ belong to $\ccffom$. The most
difficult part of the process to compute these two functions is the
iteration of the function $h$ (the remaining parts do not have any
problems, see ~\cite{arythm,uniformity}). One can notice that for
any $x$ it holds that
$$
2^{[\log_2(x+20)]^n}>2x+20,
$$
thus
$$
[\log_2 h(h(x))]=[\log_2 h(x)]^n.
$$
From that it follows that
$$
h^k(x) = 2^k x + \sum_{i=1}^k 2^{[\log_2(x+20)]^{n^i}+k-i}.
$$
Based on this representation and the results
~\cite{arythm,uniformity} it is easy to prove that all the necessary
functions to the class $\ccffom$. The requirement \ref{demunarbasis}
is proved in the section \ref{section_ffom} of the chapter
\ref{chapter_arithm}. The rest of those requirements are obvious
(for example as a numerating function one can take a function that
places binary digits of the first number into the even places, for
the second number it places them onto the odd ones; although one can
use the standard polynomial (Peano function) but in this case the
proof that the inverse functions belong to $\ccffom$ will be harder,
see ~\cite{arythm}).

For the class $\ccfl$, minding the fact that $\ccffom\subseteq\ccfl$
(see ~\cite{uniformity}), all requirements but
\ref{demunarbasis}, can be proved in the same fashion. Let one prove the requirement
\ref{demunarbasis}. One can notice that the system of the functions
$$
0,\quad x+1,\quad x+y,\quad xy,\quad 2^{[\log_2 x]^2},\quad
U(n,x,s),
$$
where $U(n,x,s)$ is the result of calculating multitape Turing
machine (with no recording onto the input tape) at the input $x$ (in
binary representation) with the space restriction $[\log_2
s]/(\text{the number of tapes})$ ($U(n,x,s)=0$, if the machine
doesn't stop or there is a mistake in calculations), is the basis in
$\ccfl$ (which can easily be proved using the method from
~\cite{muchnik}).

Now let $Q$ be an $\bigeps^2$-closed class that has a finite basis
with respect to superposition (i.e. automatically satisfying the
requirement \ref{demunarbasis}). Then obviously it contains all
functions from $\bigeps^2$ and, therefore, satisfies the
requirements \ref{dembase} and \ref{demnumerate} (see ~\cite{erf}).
Let one prove that it satisfies the requirement  \ref{demrazb}.
Indeed, let $f(x)\in\gr(Q)$. Then let one assume
$$
h(x)=\max_{0\leq y\leq x}\max(f(y),f^{-1}(y))+2x+1,
$$
let one define the sets $R_1$ and $R_2$ analogously to how one did it for the class $\ccfp$ (using just the given function $h$). Based on the technique from ~\cite{erf} it is easy to show that $R_1$ and $R_2$
satisfy the requirements \ref{demrazb} for the class $Q$.
The theorem is proved.


\begin{thebibliography}{99}\addcontentsline{toc}{chapter}{Literature}
\bibitem{vinkoss}
Vinogradov, A. K., Kosovskii, N. K. A Hierarchy of Diophantine
representations of primitive recursive predicates (in Russian) //
Vychislitelnaya tekhnika i voprosy kibernetiki.
--- 1975.
--- Vol. 12. --- pp. 99--107.
\bibitem{garyjohnson}
Garey M., Johnson D. Computers and intractability: a guide to the
theory of NP-completeness. W. H. Freeman, 1979.
\bibitem{maltsev} 
Malcev A. I. Algorithms and recursive functions. Groningen :
Wolters-Noordhoff Pub. Co., 1970, 372 p.
\bibitem{basis} 
Marchenkov S. S. Bases with Respect to Superposition in the Classes
of Recursive Functions (in Russian) // Mathematical Problems of
Cybernetics. Moscow: Nauka, 1991.
--- vol. 3. --- pp. 115--139.
\bibitem{marchskolem}
Marchenkov, S.S., Elementary Skolem functions, Mathematical Notes of
the Academy of Sciences of the USSR (1975) 17: 79.
doi:10.1007/BF01093849
\bibitem{ogr_rec} 
Marchenkov S. S. Bounded Recursions // Math. Balkanica.
--- 1972. --- Vol. 2. --- pp. 124--142.
\bibitem{ss1} 
Marchenkov, S.S., A superposition basis in the class of Kal'mar
elementary functions, Mathematical Notes of the Academy of Sciences
of the USSR (1980) 27: 161. doi:10.1007/BF01140159
\bibitem{ss2} 
Marchenkov, S.S., Simple examples of bases with respect to
superposition in the class of functions that are elementary in the
sense of Kalmar (in Russian) // Banach Center Publications. Warsaw.
--- 1989. --- V. 25. --- S. 119--126
\bibitem{ss_ustr}
Marchenkov, S.S., Elimination of recursion schemas in the
Grzegorczyk $\bigeps^2$ class, Mathematical Notes of the Academy of
Sciences of the USSR (1969) 5: 336. doi:10.1007/BF01112182
\bibitem{erf} 
Marchenkov S. S. Elementary recursive functions (in
Russian\footnote{in English see H-A.Esbelin, M.More, Rudimentary
relations and primitive recursion: A toolbox, Theoretical Computer
Science 193 (1998) 129--148; Th.Skolem, Proof of some theorems on
recursively enumerable sets, Notre Dame Journal of Formal Logic,
V.3, N.2, 1962, pp. 65--74}). Moscow: MCCME, 2003.
--- 112.
\bibitem{minsky} 
Minsky, M. Computation: Finite and Infinite Machines, Prentice Hall,
1967, 317 p.
\bibitem{muchnik}
Muchnik, A. A., On two approaches to the classification of recursive
functions (in Russian) // Problems of Mathematical Logic. Moscow:
Mir, 1970. P. 123--138. --- 432 p.
\bibitem{arythm}
E. Allender, D. A. M. Barrington, and W. Hesse
Uniform Constant-Depth
Threshold Circuits for division and iterated multiplication //
Journal of Computers and System Sciences. --- 2002. --- V. 65.
--- P. 695--716.
\bibitem{uniformity}
D. A. M. Barrington, N. Immerman, H. Straubing, On uniformity within
$\mathrm{NC}^1$ // Journal of Computer and System Sciences.
--- 1990. --- V. 41. --- P. 274--306.
\bibitem{belcook}
S. Bellantoni, S. Cook, New recursion-theoretic characterization of polynomial functions // Computational Complexity. --- 1992.
--- V. 2. --- P. 97--110.
\bibitem{cobham}
A. Cobham, The intrinsic computational difficulty of functions //
Proceedings of the 1964 International Congress for Logic,
Methodology, and Philosophy of Science. --- 1964. --- P. 24--30.
\bibitem{church}
Church, A. An Unsolvable problem of elementary Number Theory //
American Journal of Mathematics. --- 1936. --- N. 58. --- P.
345---363.
\bibitem{pcomplete}
Greenlaw R., Hoover H., Ruzzo, W. Limits to parallel Computation:\\
P-Completeness Theory. Oxford University Press, 1995. --- 311 p.
\bibitem{gzh} 
Grzegorczyk, A. Some classes of recursive functions // Rozprawy
Mathematyczne. --- 1953. --- V. 4. --- P. 1--46. (Russian. lane:
The Grzegorczyk, A. Some Classes of Recursive Functions //
Mathematical Logic. M.: Mir, 1970. S. 9--49. --- 432 p.)
\bibitem{hartmanis}
Hartmanis J., Stearns R. E. On computational complexity of
algorithms // Transactions of the American Mathematical Society. ---
1965. --- N. 5. --- P. 285--306.
\bibitem{kalmar}
Kalmar L. Egyszer\"{u} pelda eld\"{o}nthetelen aritmetikai
problemara // Matematikai es fizikai lapok. --- 1943. --- V. 50.
--- P. 1--23.
\bibitem{kleene}
S. C. Kleene General Recursive functions of natural numbers //
Mathematische Annalen. --- 1936. --- N. 112. --- p. 727--742.
\bibitem{mazzanti}
Mazzanti S. Plain bases for
Classes of Primitive recursive functions // Mathematical Logic
Quarterly. --- 2002. --- V. 48. --- P. 93--104.
\bibitem{parsons}
Parsons Ch., Hierarchies of Primitive Recursive Functions //
Zeitschr. math. Logik u. Undertaking the grundlag. Math. --- 1968. --- B. 14, N 4.
--- S. 357--376.
\bibitem{post}
Post, E. L. Finite Combinatorial Processes // Journal of Symbolic Logic. --- 1936. --- N. 1. --- p. 103--105.
\bibitem{ritchie}
Ritchie, R. W. Classes of predictably computable functions //
Transactions of the American Mathematical Society. --- 1963. --- V.
106. --- pp. 139--173.
\bibitem{redding}
R\"{o}dding D. \"{U}ber die von Eliminierbarkeit
Definitionsschemata in der Theorie der Funktionen rekursiven //
Zeitschr. math. Logik. u. Undertaking the grundlag. Math. --- 1964. --- B. 10, N
4. --- S. 315--330.
\bibitem{skolemone}
Skolem Th. A Theorem on recursively enumerable sets // Abstract of
short comm. Int. Congress Math. --- 1962. --- Stockholm --- P. 11.
\bibitem{skolemtwo}
Skolem Th. Proof of some theorems on recursively enumerable sets
// Notre Dame Journal of Formal Logic. --- 1962. --- V. 3, N 2. ---
P. 65--74.
\bibitem{turing}
A. Turing On computable numbers, with an application to the
Entscheidungs-problem // Proceedings of The London Mathematical
Society. --- 1936.
--- Ser. 2., N. 42. --- P. 230---265.
\bibitem{wrathall}
C. Wrathall Rudimentary predicates and relative computation //
SIAM Journal on Computing. --- V. 7(2) --- 1978. --- P. 194--209.

\newpage \begin{center}\textbf{the author's Work on the subject
thesis}\end{center}

\bibitem{perbasis}
Volkov S. A. Finite generability of some groups of recursive
permutations // Discrete Mathematics and Applications. --- 2008. ---
Vol. 18, issue. 6. --- pp. 607--624, DOI: 10.1515/DMA.2008.046
\bibitem{perbasiskonf}
Volkov S. A. Finite generability of some groups of recursive
permutations (in Russian) // Problems of Theoretical Cybernetics.
Conference Abstracts. --- Kazan, 2008. --- p. 15.
\bibitem{doklady}
Volkov, S.A., Generating some classes of recursive functions by
superpositions of simple arithmetic functions, Dokl. Math. (2007)
76: 566. doi:10.1134/S1064562407040217
\bibitem{etwobasis} Volkov S.
A. An example of a simple quasi-universal function in the class
$\bigeps^2$ of the Grzegorczyk hierarchy // Discrete Mathematics and
Applications dma. Volume 16, Issue 5, Pages 513–526, ISSN (Online)
1569-3929, ISSN (Print) 0924-9265, DOI: 10.1515/156939206779238436,
September 2006.
\bibitem{diplom}
Volkov S. A. An exponential expansion of the Skolem-elementary
functions, and bounded superpositions of simple arithmetic functions
(in Russian) // Mathematical Problems of Cybernetics. Moscow:
Fizmatlit, 2007. --- vol. 16. --- pp. 163--190.
\end{thebibliography}
\end{document}